\documentclass[aps,prl,reprint,superscriptaddress,floatfix,longbibliography]{revtex4-2}
\usepackage[utf8]{inputenc}
\usepackage[T1]{fontenc}

\usepackage{amsmath,amssymb,amsthm,mathtools}
\usepackage{bm}
\usepackage{microtype}
\usepackage[hidelinks]{hyperref}
\usepackage{graphicx}
\usepackage{capt-of}
\usepackage{tabularray}
\usepackage{newtxtext}
\usepackage{enumitem}
\usepackage{soul}

\usepackage[dvipsnames]{xcolor}

\let\standardsection\section
\let\standardsubsection\subsection

\makeatletter
\renewcommand\section{\@startsection{section}{1}{\z@}
 {1.25ex \@plus 1ex \@minus .2ex}
 {-0.8em}
 {\normalfont\itshape}}
\renewcommand\subsection{\@startsection{subsection}{2}{\z@}
 {1ex \@plus .8ex \@minus .2ex}
 {-0.8em}
 {\normalfont\itshape}}
\makeatother

\newtheorem{theorem}{Theorem}
\newtheorem*{theorem*}{Theorem}
\newtheorem{lemma}{Lemma}
\newtheorem{proposition}{Proposition}

\newtheorem{corollary}{Corollary}

\newtheorem{reptheorem}{Theorem}

\makeatletter
\newenvironment{reptheoremnum}[2][]{
 \renewcommand{\thereptheorem}{\ref{#2}}
 \def\repnote{#1}
 \ifx\repnote\@empty
 \begin{reptheorem}
 \else
 \begin{reptheorem}[#1]
 \fi
}{
 \end{reptheorem}
}
\makeatother

\newcommand\jDom{\mathcal{D}}
\newcommand\jDomSG{\jDom_{\sigma,\gamma}}

\newcommand{\ii}{\mathrm{i}}
\newcommand{\xx}{i}
\newcommand{\yy}{j}
\newcommand{\re}{\operatorname{Re}}
\newcommand{\im}{\operatorname{Im}}
\newcommand{\Aff}{\mathcal F}
\newcommand{\W}{R}
\newcommand{\maxrate}{r}
\newcommand{\winding}{w}
\newcommand{\windA}{\winding}
\newcommand{\windFunc}{\omega}

\newcommand{\pp}{\bm{p}}
\newcommand{\zz}{\bm{0}}
\newcommand{\ppssx}[1]{p^\textrm{ss}_{#1}}
\newcommand{\ppss}{\bm{p}^\textrm{ss}}

\newcommand{\mixc}{\tau}
\newcommand{\xlog}{\xi}
\newcommand{\phase}{\theta}
\newcommand{\branchParam}{\phi}
\newcommand{\fwd}{k^+}
\newcommand{\bwd}{k^-}
\newcommand{\fwdVec}{{\bm k}^+}
\newcommand{\bwdVec}{{\bm k}^-}

\newcommand{\Arg}{\operatorname{Arg}}

\newcommand{\imL}{\im\lambda}
\newcommand\absImL{\vert\imL\vert}
\newcommand{\adjLambda}{\tilde \lambda}
\newcommand{\imLadj}{\im\adjLambda}

\newcommand{\reL}{\re\lambda}
\newcommand{\minusReL}{-\reL}
\newcommand{\dualReL}{2\maxrate+\reL}
\newcommand{\reLadj}{\re\adjLambda}
\newcommand{\minusReLadj}{-\reLadj}
\newcommand{\adjMaxrate}{\tilde\maxrate}
\newcommand{\dualReLadj}{2\adjMaxrate+\reLadj}

\newcommand{\imLconj}{\imL^*}
\newcommand\evblockix{m}
\newcommand\evblockfunc{b}

\newcommand{\RR}{\mathbb{R}}

\newcommand{\myarctan}{\arctan}

\newcommand\CCC{\mathcal{C}}
\newcommand\CCCpos{\mathcal{C}^+}
\newcommand\yyxx{\yy\xx}
\newcommand\xxyy{\xx\yy}
\newcommand\yxedge{(\yy\leftarrow \xx)}
\newcommand\revedge{(\xx\leftarrow \yy)}
\newcommand\sign{\operatorname{sign}}
\newcommand\Conv{\mathsf{Conv}}

\newcommand\fyxedge{f_{\yy \leftarrow \xx}}
\newcommand\fxyedge{f_{\xx \leftarrow \yy}}

\newcommand\edgeweight{g}
\newcommand\oyxedge{\edgeweight_{\yy \leftarrow \xx}}

\begin{document}

\title{Cycle affinity and winding localize eigenvalues of Markov generators}

\author{Artemy Kolchinsky}
\email{artemyk@gmail.com}
\affiliation{ICREA-Complex Systems Lab, Universitat Pompeu Fabra, 08003 Barcelona, Spain}
\affiliation{Universal Biology Institute, Graduate School of Science, The University of Tokyo, 7-3-1 Hongo, Bunkyo-ku, Tokyo 113-0033, Japan}
\affiliation{Barcelona Collaboratorium, Wellington 30, 08005 Barcelona, Spain}

\author{Naruo Ohga}
\affiliation{Department of Physics, Graduate School of Science, Kyoto University, Kyoto 606-8502, Japan}
\affiliation{Universal Biology Institute, Graduate School of Science, The University of Tokyo, 7-3-1 Hongo, Bunkyo-ku, Tokyo 113-0033, Japan}

\author{Sosuke Ito}
\affiliation{Department of Physics, Graduate School of Science, The University of Tokyo, 7-3-1 Hongo, Bunkyo-ku, Tokyo 113-0033, Japan}
\affiliation{Universal Biology Institute, Graduate School of Science, The University of Tokyo, 7-3-1 Hongo, Bunkyo-ku, Tokyo 113-0033, Japan}

\begin{abstract}
The complex eigenvalues of Markov generators govern oscillatory properties of relaxation, autocorrelation, and linear response. We show that these eigenvalues are localized by nonequilibrium cycles of the generator, thus revealing a fundamental tradeoff between thermodynamic driving, oscillation, and decay of eigenmodes. Specifically, we prove that each complex eigenvalue is confined to a region determined by the cycle affinity and the eigenvector ``winding number'' of some nonequilibrium cycle. In discrete and continuous unicyclic systems, we also demonstrate that the winding number coincides with the ordered eigenvalue index, yielding new thermodynamic bounds on the slowest and fastest relaxation modes. In discrete multicyclic systems, our approach unifies and extends several previous inequalities and proves the Uhl--Seifert ellipse conjecture.
\end{abstract}

\maketitle

Markov chains are used to study various nonequilibrium systems, ranging from chemical reaction networks~\cite{van1992stochastic,schnakenbergNetworkTheoryMicroscopic1976} and biomolecular machines~\cite{qian2003thermodynamic,skoge2013chemical,MehtaSchwab2012}
 to population dynamics~\cite{allen2010introduction} and social
processes~\cite{liggett2013stochastic}. 
In a continuous-time Markov chain, the
probability distribution $\pp(t)$ evolves according to the master equation $d_t \pp(t) = \W \pp(t)$, 
where $\W$ is the \emph{generator} (also called the \emph{rate matrix}).

The
eigenvalues and eigenvectors of the generator govern many important dynamical properties.  Specifically, each nonzero eigenvalue $\lambda$  corresponds to
an eigenmode with decay rate $-\reL$ and oscillation frequency
$\vert \imL\vert$, and it governs oscillatory properties of relaxation dynamics~\cite{BaratoSeifert2017}, autocorrelation functions~\cite{OhgaItoKolchinsky2023}, and linear response~\cite{hanggi1982stochastic}.  For this reason, the localization of eigenvalues is relevant to functional performance of systems like biochemical clocks,
oscillators, and sensors~\cite{BaratoSeifert2017,DelJuncoVaikuntanathan2020,OberreiterSeifertBarato2022,ZhengTang2024,UhlSeifert2019,shiraishi2023entropy,KolchinskyOhgaIto2024}.

At the same time, a Markov generator can also be characterized by whether it satisfies the principle of \emph{detailed balance} (DB). DB is satisfied if and only if the \emph{cycle affinity} (defined below) vanishes for every cycle~\cite{schnakenbergNetworkTheoryMicroscopic1976,Seifert2012} or equivalently if the generator has an equilibrium steady state in which all net fluxes vanish. It is known that DB generators have only real-valued eigenvalues~\cite{wei_structure_1962,hearon1953kinetics}, so they cannot exhibit   stochastic oscillations or resonant response to periodic driving. 

Conversely, DB is violated if at least one cycle has nonvanishing affinity,  or equivalently if the generator has a nonequilibrium steady state with nonvanishing fluxes. At the microscopic level, the breaking of DB results from thermodynamic driving, such as coupling to nonequilibrium chemical potentials, temperature differences, stirring, etc.~\cite{schnakenbergNetworkTheoryMicroscopic1976,Seifert2012}. The cycle affinity measures the total thermodynamic force accumulated around a cycle, and it serves as a quantitative measure of the driving strength.
Unlike the steady-state entropy production rate --- another common measure of nonequilibrium --- the cycle affinity is determined by thermodynamic control parameters (reservoir chemical potentials, temperatures, etc.) and does not depend on microscopic kinetics or the stationary distribution~\cite{OhgaItoKolchinsky2023}.

In this Letter, we demonstrate a fundamental relationship between thermodynamic driving, graph topology, and eigenvalue localization. 
Specifically, we show that each eigenvalue is confined to a region determined by the thermodynamic cycle affinity and the eigenvector \emph{winding number} on some nonequilibrium cycle of the generator. The winding number is a topological invariant that quantifies the eigenvector's spatial frequency. In unicyclic systems, we further prove that this winding number coincides with the ordered eigenvalue index, yielding new thermodynamic constraints on the slowest and fastest relaxation modes. 

Our result contributes to a growing body of literature on the relationship between spectral and thermodynamic properties of Markovian systems~\cite{BaratoSeifert2017,NguyenSeifertBarato2018,UhlSeifert2019,DelJuncoVaikuntanathan2020,RemleinWeissmannSeifert2022,OberreiterSeifertBarato2022,OhgaItoKolchinsky2023,shiraishi2023entropy,KolchinskyOhgaIto2024,van2024dissipation,ZhengTang2024,XuKolchinskyDelvenneIto2025}, and it provides a unified route to derive various existing bounds. In particular, we derive a sharpened version of our previous eigenvalue 
sector bound~\cite{OhgaItoKolchinsky2023} (itself inspired by the Barato-Seifert conjecture~\cite{BaratoSeifert2017}), resolve the Uhl--Seifert ellipse conjecture~\cite{UhlSeifert2019}, and uncover several novel eigenvalue inequalities. 
Many of our results remain nontrivial even in the regime of absolute irreversibility, where they provide refinements of classical inequalities such as the Dmitriev--Dynkin bound~\cite{DmitrievDynkin1946} and 
the Kellogg--Stephens bound~\cite{kellogg1978complex}.

In the following, we state our results for continuous-time generators. They can be adapted to discrete-time transition matrices via the standard embedding
$T=I+\W/(\max_\xx (-\W_{\xx\xx}))$, which rescales and
shifts the eigenvalues along the real
axis.

\newcommand\BndAff{\Aff}
\section{Unicyclic systems.---}
We begin with the simple case of unicyclic systems, often used to model biomolecular devices such as oscillators~\cite{BaratoSeifert2017,OberreiterSeifertBarato2022,wierenga2018quantifying,wachtel2018thermodynamically}, sensors~\cite{skoge2013chemical,qian2003thermodynamic,MehtaSchwab2012,KolchinskyOhgaIto2024}, and motors~\cite{KolomeiskyFisher2007,seifert2011efficiency,PineroSoleKolchinsky2024}.
We consider an irreducible unicyclic generator $\W$ with $n \ge 3$ states, 
\begin{align}
\W_{\yyxx}=\delta_{\yy,\xx+1}\fwd_\xx + \delta_{\yy,\xx-1}\bwd_\xx - \delta_{\yy,\xx} (\fwd_\xx + \bwd_\xx)\,,
\label{eq:unicyclicW}
\end{align}
where $\delta$ is the Kronecker delta and indices are taken as $\bmod \, n$.
$\W$ is called a \emph{uniform cycle} if the forward and backward rates are all equal, $\fwd_\xx \equiv \fwd,\bwd_\xx\equiv \bwd$. 
We write $\maxrate:=\max_\xx -\W_{\xx\xx}$
for the maximum escape rate. 

The \emph{cycle affinity} is the sum of the log-ratios of the forward and backward transition rates: $\Aff:=\sum_{\xx=1}^n \ln({\fwd_\xx}/{\bwd_\xx})$,  
with $\Aff=\infty$ if any transition is irreversible ($\bwd_\xx=0$). Without loss of generality, we assume that states are ordered so that $\Aff \ge 0$. 
Assuming local detailed balance~\cite{maes_ldb_notes},
the cycle affinity is equal to the increase of the thermodynamic entropy of the environment
per cycle completion.

Let $\lambda$ be a nonreal eigenvalue of $\W$ with an associated (left or right) eigenvector $\bm u$. A straightforward application of Gershgorin's Circle Theorem~\cite{varga_gersgorin_2004} implies that $\vert \lambda+\maxrate \vert \le \maxrate$, therefore any nonreal eigenvalue obeys
 $-2\maxrate < \reL< 0$. 
A key quantity in our results is the \emph{winding number} associated with the eigenvector. For a unicyclic system, we define it as 
\begin{align}
\windFunc(\bm u) :=\bigg\vert\frac{1}{2\pi} \sum_{\xx=1}^n\Arg\frac{u_{\xx+1}}{u_\xx} \bigg\vert_n 
\label{eq:winding-unicycle}
\end{align}
where $\vert k\vert_n :=\min\{ k \bmod n,(n-k) \bmod n \}$ 
is the wrapped frequency and $\Arg(x) \in (-\pi,\pi]$ is the principal argument. 
In the Supplemental Material~SM\ref{app:unicyclic}~\cite{SupplementalMaterial}, we prove that the winding is well-defined (because $u_\xx \ne 0$ for all $\xx$), and that it can be  equivalently computed using either the left or right eigenvector.

In definition~\eqref{eq:winding-unicycle}, $\Arg(u_{i+1}/u_i)$ measures the increment of complex phase from $u_i$ to $u_{i+1}$, and summing it around the cycle gives the total phase increment, an integer multiple of $2\pi$. Consequently, the winding number is a quantized topological invariant that counts how many times the eigenvector's phase rotates as it goes around the cycle, quantifying the eigenvector's spatial frequency. In a uniform cycle where the eigenvectors correspond to spatial Fourier modes,
\(u_\xx = e^{\ii 2\pi k\xx/n}\), the winding number is equal to the wrapped 
spatial frequency, \(\winding = |k|_n\).

\begin{figure}
\centering
\begin{minipage}[t]{0.49\columnwidth}
\makebox[\linewidth][l]{\includegraphics[width=1.16\linewidth]{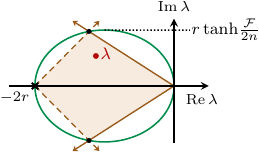}}\\
{\vspace{-1em}\sffamily\bfseries\footnotesize (a)\hspace{.8\columnwidth}}
\end{minipage}\hspace{0.01\columnwidth}
\begin{minipage}[t]{0.49\columnwidth}
\includegraphics[width=\linewidth]{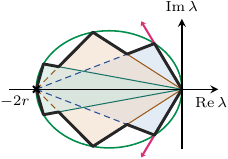}\\
{\vspace{-1em}\sffamily\bfseries\footnotesize (b)\hspace{.8\columnwidth}}
\end{minipage}
\caption{Illustration of Theorem~\ref{thm:unicyclic-bounds} for a unicyclic generator with maximum escape rate $\maxrate$ and cycle affinity $\Aff \ge 0$. (a) Each eigenvalue $\lambda$ obeys the bounds~\eqref{eq:unicyclic-primal} (solid brown) and~\eqref{eq:unicyclic-dual} (dashed brown), which depend on the winding number $\winding$ (equivalently, sorted eigenvalue index).
(b) Taking the union across all $\winding\in \{1,\dots,\lceil n/2\rceil - 1\}$ gives a non-convex localization region for all eigenvalues (black outline). The green ellipse is the bound corresponding to the Uhl--Seifert conjecture~\cite{UhlSeifert2019}; our previous sector bound~\cite{OhgaItoKolchinsky2023} is shown using magenta arrows.
}
\label{fig:unicyclic}
\end{figure}

We now present our main result for unicyclic systems. Its derivation is deferred to the \emph{End Matter}.

\begin{theorem}
\label{thm:unicyclic-bounds}
Given an irreducible unicyclic generator with cycle affinity $\Aff \ge 0$, every nonreal eigenpair $(\lambda,\bm u)$ has winding number $\winding :=\windFunc(\bm u) \in\{1,\dots,\lceil n/2\rceil - 1\}$ and obeys
\begin{align}
\frac{\absImL }{\minusReL} 
&\le
\tanh\frac{ \BndAff }{2n} \cot\frac{\pi \winding}{n},
\label{eq:unicyclic-primal}
\\
\frac{\absImL}{\dualReL} 
&\le
\tanh\frac{ \BndAff }{2n} \tan\frac{\pi \winding }{n}.
\label{eq:unicyclic-dual}
\end{align}
The bounds are achieved by the uniform cycle.
\end{theorem}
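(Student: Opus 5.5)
We work with the left eigenvector $\bm v$, which the Supplemental Material shows has all components nonzero and the same winding number as the right eigenvector. Its eigen-equation is local along the cycle:
\begin{align*}
\lambda v_\xx=\fwd_\xx(v_{\xx+1}-v_\xx)+\bwd_\xx(v_{\xx-1}-v_\xx).
\end{align*}
Set $z_\xx:=v_{\xx+1}/v_\xx=e^{x_\xx+\ii\phase_\xx}$ with $\phase_\xx=\Arg z_\xx$. Then for every $\xx$
\begin{align*}
\lambda=\fwd_\xx(z_\xx-1)+\bwd_\xx(z_{\xx-1}^{-1}-1).
\end{align*}
These ratios satisfy three global constraints. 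First, $\sum_\xx x_\xx=0$, because $\prod_\xx z_\xx=1$. Second, $\sum_\xx\phase_\xx=2\pi m$ with $\windFunc(\bm v)=|m|_n$. Third, $\sum_\xx\ln(\fwd_\xx/\bwd_\xx)=\BndAff$.

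The uniform cycle tells us what to aim for. There $x_\xx=0$ and $\phase_\xx=2\pi\winding/n$, and a direct computation gives $\imL/(\minusReL)=\frac{\fwd-\bwd}{\fwd+\bwd}\cot\frac{\phase}{2}$ and $\imL/(\dualReL)=\frac{\fwd-\bwd}{\fwd+\bwd}\tan\frac{\phase}{2}$. Since $\frac{\fwd-\bwd}{\fwd+\bwd}=\tanh\frac{\BndAff}{2n}$, both bounds are attained with equality, which settles the achievability claim. For the dual bound we use $2\maxrate+\reL\ge 2(\fwd_\xx+\bwd_\xx)+\reL$ for every $\xx$, so it suffices to prove it with $\maxrate$ replaced by the local escape rate $\fwd_\xx+\bwd_\xx$ at a suitably chosen $\xx$.

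The main argument is a contradiction via averaging, and we run it the same way for both bounds. Assume~\eqref{eq:unicyclic-primal} fails for $\lambda$. Fix $\xx$ and divide the per-site equation by $\bwd_\xx$. For fixed $\lambda$, the equation links the local log-ratio $a_\xx:=\ln(\fwd_\xx/\bwd_\xx)$ to the local phase and modulus data $(x_\xx,\phase_\xx,x_{\xx-1},\phase_{\xx-1})$. We want to show that violating the bound forces, at every $\xx$, an inequality of the form
\begin{align*}
a_\xx>\tfrac{\BndAff}{n}+h(x_\xx,\phase_\xx)-h(x_{\xx-1},\phase_{\xx-1}),
\end{align*}
up to a term that is monotone in $\phase_\xx$ and whose sum exceeds zero whenever $\sum_\xx\phase_\xx\ge 2\pi\winding$. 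The difference term telescopes to zero around the cycle. Summing over $\xx$ then gives $\BndAff>\BndAff$, a contradiction.

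Concretely, we first take the argument of $\lambda$. The per-site equation places $\lambda$ in the cone spanned by $z_\xx-1$ and $z_{\xx-1}^{-1}-1$ with weights $\fwd_\xx,\bwd_\xx$, and this bounds $\Arg\lambda$ in terms of $\phase_\xx$, $\phase_{\xx-1}$ and $a_\xx$. Next, tangent-half-angle identities convert these bounds into a linear inequality in $\tanh(a_\xx/2)$. Finally, convexity of $\ln\cot$ (respectively $\ln\tan$) on $(0,\pi/2)$, together with Jensen's inequality, passes from the local angles to the average angle $\pi\winding/n$. The dual bound is handled identically, with $z\mapsto -z$ (replacing $1-\cos$ by $1+\cos$), which produces the $\tan$.

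The winding range is a consequence of the two bounds. Suppose we prove the inequalities for every $\winding\in\{0,\dots,\lfloor n/2\rfloor\}$, with sharp forms that force $\imL=0$ at the endpoints. For $\winding=0$, the dual bound gives $\imL=0$ because $\tan 0=0$. For $\winding=n/2$ with $n$ even, the primal bound gives $\imL=0$ because $\cot(\pi/2)=0$. A nonreal eigenvalue therefore has $\winding\in\{1,\dots,\lceil n/2\rceil-1\}$.

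I expect the main obstacle to be the telescoping step. We need the correct potential $h$ that absorbs the nonzero moduli $x_\xx$, since $\sum_\xx x_\xx=0$ must be what cancels them. We also need to handle principal-argument branches: individual $\phase_\xx$ can be negative or close to $\pm\pi$, so the Jensen step needs care, probably by reducing to the case where all $\phase_\xx$ share a sign or by working with the wrapped winding $|m|_n$ directly. If the direct approach fails, I would first try a variational approach: maximize $\imL/(\minusReL)$ over the rates with $\BndAff$ and $\winding$ fixed, and show via Lagrange conditions that the maximizer is uniform.
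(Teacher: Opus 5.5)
Your uniform-cycle computation is correct. Your skeleton is also the one the paper uses in its direct derivation: a per-site relation between $a_i=\ln(k_i^+/k_i^-)$ and the ratios $q_i=u_{i+1}/u_i$, cancellation of neighbour terms around the cycle, then Jensen toward the uniform point. (The paper also obtains the result as the $\tau=1$ and $\tau=0$ cases of its multicyclic theorem.) However, the two ingredients that turn this skeleton into a proof are missing.

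The first is a sign lemma, which is exactly what resolves the branch problem you flag. With $\gamma_i:=\operatorname{Im}(u_{i+1}u_i^*)$, the imaginary part of the eigen-equation times $u_i^*$ reads $\operatorname{Im}\lambda\,|u_i|^2=k_i^+\gamma_i-k_i^-\gamma_{i-1}$. Propagating signs around the cycle shows that all $\gamma_i$ are nonzero with a common sign. For $\operatorname{Im}\lambda>0$, iterating $\gamma_i>(k_i^-/k_i^+)\gamma_{i-1}$ gives $e^{-\mathcal F}\gamma_i<\gamma_i$, so that sign is the sign of $\mathcal F$. Hence every $\theta_i=\operatorname{Arg}q_i$ lies in $(0,\pi)$. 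This has three consequences: principal arguments coincide with lifted phase increments, Jensen is applied on a convex domain, and the winding range follows at once, because $\sum_i\theta_i$ is a multiple of $2\pi$ lying in $(0,n\pi)$. Your route to the range (prove the bounds also for $w=0$ and $w=n/2$, then read off $\operatorname{Im}\lambda=0$) is circular in practice. The sign information you need before any logarithm can be taken is of exactly this kind, and once the half-plane property is in hand, $w=0$ and $w=n/2$ are excluded directly. A second propagation argument of the same type is needed to show that, writing $\lambda=-|\lambda|e^{-i\psi}$, both sides of the rotated per-site identity $k_i^+\operatorname{Im}(e^{i\psi}(q_i-1))=k_i^-\operatorname{Im}(e^{i\psi}(1-1/q_{i-1}))$ are strictly positive.

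The second missing ingredient is the convexity that handles the moduli. Write $q_i=e^{-\xi_i+i\theta_i}$. Taking logarithms of the rotated identity and reindexing the $q_{i-1}$-term (which is all your telescoping potential $h$ amounts to) gives exactly $\mathcal F=\sum_iK_\psi(\theta_i,\xi_i)$, with $K_\psi(\theta,\xi)=\ln\frac{e^{\xi}\sin(\theta-\psi)+\sin\psi}{e^{-\xi}\sin(\theta+\psi)-\sin\psi}$. Here the phase and log-modulus of the same edge enter non-separably. The constraint $\sum_i\xi_i=0$ can cancel only terms linear in $\xi_i$, so you need $K_\psi(\theta,\xi)\ge K_\psi(\bar\theta,0)+c_1(\theta-\bar\theta)+c_2\xi$ on the whole domain, i.e.\ joint convexity (or at least a global supporting plane) in $(\theta,\xi)$. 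This is the technical core of the paper: a nonnegative Hessian trace plus a sum-of-squares expression for the Hessian determinant. One-variable convexity of $\ln\cot$ or $\ln\tan$ in the angles does not touch the $\xi$-dependence. A per-site inequality that is linear in $\tanh(a_i/2)$ cannot be summed either, since it is the $a_i$ themselves that add up to $\mathcal F$. (At $\xi=0$ one finds $\tanh[K_\psi(\theta,0)/2]=\tan\psi\tan(\theta/2)$, which is where the uniform-cycle form of the bound comes from.)

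Finally, your reduction of the dual bound to a single local escape rate fails. Picking a site with $k_i^++k_i^-=r$ reduces nothing. A smaller local rate can make $2(k_i^++k_i^-)+\operatorname{Re}\lambda\le 0$. And a summed argument cannot be localized at one site in any case. The paper instead keeps the global $r$ and applies the same construction to the reflected point $\lambda'=-2r-\lambda^*$ with ratios $-q_i^*$. It uses $2(r-k_i^+-k_i^-)\ge 0$ at every site to turn the per-site identity into an inequality, which gives $\mathcal F\ge\sum_iK_{\psi'}(\pi-\theta_i,\xi_i)$, and then applies the same Jensen step.
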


This result is illustrated in Fig.~\ref{fig:unicyclic}. Each eigenvalue $\lambda$ is constrained by two inequalities~\eqref{eq:unicyclic-primal}-\eqref{eq:unicyclic-dual}.
The first inequality implies that eigenmodes with fast oscillation, slow decay, and high spatial frequency require strong driving (large affinity). 
The second inequality implies that eigenmodes with fast oscillation, fast decay, and low spatial frequency also require strong driving. Thus, fast oscillations are incompatible with both slow and fast decay, and the winding number $\winding$ (spatial frequency) determines the balance between these two constraints. 
The two bounds intersect at the uniform $n$-cycle eigenvalue with the same winding number $\winding$, escape rate $\maxrate$ and affinity $\Aff$. Thus, the uniform cycle provides the optimal way to sustain an eigenmode with a given winding number and eigenvalue location.

Theorem~\ref{thm:unicyclic-bounds} implies various downstream bounds, illustrated in Fig.~\ref{fig:unicyclic}(b). For example, we may obtain a localization region for all eigenvalues, independent of winding number, by taking the union across possible winding numbers. Another implication is the inequality $-\absImL/\reL\le \tanh(\BndAff/2n)\cot(\pi/n)$, as follows from~\eqref{eq:unicyclic-primal} and the monotonicity of $\cot$. This bound was first conjectured for the second eigenvalue by Barato and Seifert~\cite{BaratoSeifert2017} and recently proved by the present authors in Ref.~\cite{OhgaItoKolchinsky2023} using a different technique (isoperimetric inequality). As a final example, the geometric mean of~\eqref{eq:unicyclic-primal}-\eqref{eq:unicyclic-dual} gives $\absImL/\sqrt{\minusReL\,(\dualReL)}\le \tanh(\BndAff/2n)$. This implies that all eigenvalues belong to an ellipse of semi-width $\maxrate$ and semi-height $\maxrate \tanh({\BndAff}/{2n})$. This ellipse bound was conjectured by Uhl and Seifert in 2019~\cite{UhlSeifert2019} but has remained unproven until now. 

As mentioned, the winding number $\winding$ can be understood as the eigenvector's spatial frequency. 
Interestingly, in~SM\ref{app:unicyclic}~\cite{SupplementalMaterial}, we also prove that it is equivalent to the eigenvalue's index when sorted by descending real part.  Thus, the unicyclic winding number may be interpreted in two different ways: either as the eigenmode's spatial frequency, or as the eigenmode's index of decay.

\newcommand\orderedWindingStmt{Given an irreducible unicyclic generator,
sort the eigenvalues by real part as 
$0=\lambda_1\ge \re\lambda_2\ge \dots \ge \re \lambda_n$.
Then, every nonreal $\lambda_k$ has winding number
$\winding=\lfloor k/2 \rfloor$.
}

\begin{theorem}
\label{thm:ordered-winding}
\orderedWindingStmt{}
\end{theorem}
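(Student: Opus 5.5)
Our plan is a homotopy (continuation) argument combined with the fact that the winding number is a discrete invariant which cannot change along a path of eigenpairs that stays nonreal. First, for the uniform cycle we check the claim directly. The eigenvalues are
\[
\lambda^{(k)} = \fwd(e^{-\ii 2\pi k/n}-1)+\bwd(e^{\ii 2\pi k/n}-1),
\]
with Fourier eigenvectors of winding $|k|_n$. Their real parts,
\[
-(\fwd+\bwd)\bigl(1-\cos(2\pi k/n)\bigr),
\]
decrease in $|k|_n$. Sorting therefore gives the pairs $k=\pm1,\pm2,\dots$, so $\lambda_2,\lambda_3$ have winding $1$, $\lambda_4,\lambda_5$ have winding $2$, and so on. This is exactly $\winding=\lfloor k/2\rfloor$. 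When $n$ is even, $k=n/2$ gives a real eigenvalue, which is excluded.

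Second, we connect an arbitrary irreducible unicyclic generator to a uniform one with $\Aff>0$ by a continuous path of irreducible unicyclic generators. For example, we interpolate the rates linearly, after first perturbing so that $\Aff>0$ along the path; the case $\Aff=0$ is detailed balance with no nonreal eigenvalues and needs no argument. Eigenvalues move continuously along this path. The key structural fact we need is that along the path the spectrum, sorted by real part, keeps the pairing structure: nonreal eigenvalues come in conjugate pairs occupying positions $(2m,2m+1)$. We would show that real eigenvalues of a unicyclic generator can only appear in a constrained way. The tridiagonal-periodic (Jacobi-like) structure gives a Sturm-type sign-change count for real eigenvectors: a real eigenvector has a number of sign changes around the cycle that is tied to its index. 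From this we would deduce that a real eigenvalue at index $k$ has $\lfloor k/2\rfloor$ sign changes in half-units. This is consistent with the winding that a nearby complex pair acquires when the pair collides on the real axis and splits.

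Third, we use invariance of winding. For a continuous family of nonreal eigenpairs, the eigenvector entries never vanish (the nonvanishing is established in SM). Hence each $\Arg(u_{\xx+1}/u_\xx)$ stays away from the branch cut $\pi$ unless some ratio becomes negative real, and the total phase sum is an integer multiple of $2\pi$ varying continuously, so it is constant. Thus $\winding$ is constant along each nonreal branch. Winding can only change when a conjugate pair hits the real axis. At that moment the two eigenvalues merge into a real (double) eigenvalue and possibly split into two real ones, which later re-merge and leave as a new complex pair. Combining this with the sign-change count for real eigenvectors, we would show that any complex pair re-emerging at sorted positions $(2m,2m+1)$ must carry winding $m$.

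The main obstacle is controlling the reordering. Eigenvalues could cross in real part (for instance, a complex pair with winding $1$ passing a real eigenvalue), and this would break the index correspondence. To handle this, I would first try to prove an interlacing/nesting lemma: for nonreal eigenvalues, $\re\lambda$ is strictly monotone in the winding number. This could be obtained by comparing with the Theorem~\ref{thm:unicyclic-bounds} regions, or by an argument-principle count. Specifically, the number of eigenvalues with real part above a vertical line $\re z=c$ would be computed via the transfer-matrix characteristic equation $\det(T(z))=\dots$ of the cycle, where $T(z)$ is the product of $2\times2$ transfer matrices. The winding of the eigenvector corresponds to the rotation number of the transfer-matrix product, and the rotation number is monotone in $z$ along the real direction, in analogy with oscillation theory for periodic Jacobi operators and Hill's equation. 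Establishing this rotation-number monotonicity for complex non-Hermitian rates is the step I expect to be hardest.
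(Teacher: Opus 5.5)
Your continuation skeleton is sound as far as it goes. The uniform-cycle base case is correct, and by Lemma~\ref{lem:upper-half-plane} the ratios $u_{\xx+1}/u_\xx$ never become real along a nonreal branch, so the winding is locally constant there. But this part carries none of the weight.

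What actually ties winding to sorted index is two facts:
\begin{itemize}
\item[(i)] the strict block structure $0=\lambda_1>\re\lambda_2\ge\re\lambda_3>\re\lambda_4\ge\re\lambda_5>\cdots$;
\item[(ii)] every zero-free real vector in the span of the eigenvectors of block $m$, including real eigenvectors, has cyclic sign variation exactly $2m$.
\end{itemize}
You never prove either. Both appear only as ``we would show'' or as the step you expect to be hardest. Without (i), a nonreal branch along your path keeps its winding, but its sorted index can change whenever its real part crosses another eigenvalue's real part. The base case then does not propagate. Without (ii), your claim that a pair re-emerging from the real axis at positions $(2m,2m+1)$ carries winding $m$ is unsupported. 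It needs exactly a cyclic Sturm-type law for real eigenvectors of a non-symmetric periodic tridiagonal matrix, which is essentially the content of the theorem itself.

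The routes you suggest for the missing lemma do not close it.
\begin{itemize}
\item The regions of Theorem~\ref{thm:unicyclic-bounds} for different $\winding$ overlap. For small $\absImL$, every $\reL\in(-2\maxrate,0)$ is allowed for every $\winding$, so these regions cannot order real parts by winding.
\item A lemma ordering only the nonreal eigenvalues by winding would also be insufficient, because real eigenvalues occupy sorted slots too. Whether a winding-$3$ pair sits at positions $(4,5)$ or $(6,7)$ depends on how many real eigenvalues precede it.
\item The rotation-number monotonicity for complex $z$ is asserted, not argued.
\end{itemize}

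The missing ingredient is a total-positivity input, which is what the paper uses. For a cyclic tridiagonal Metzler $\W$, the matrix $e^{t\W}$ has the strong cyclic variation-diminishing property. Hence it is strictly sign regular of every odd order. The Alseidi--Margaliot--Garloff theorem then yields both facts:
\begin{itemize}
\item (i), via the strict pairwise ordering of $|e^{t\lambda_k}|=e^{t\re\lambda_k}$;
\item (ii), via $s(\bm x)\in\{2m-1,2m\}$ for zero-free $\bm x\in\operatorname{span}_\RR\{\re\bm u,\im\bm u\}$.
\end{itemize}
Finally, the monotone phase from Lemma~\ref{lem:upper-half-plane} shows that a generic real projection $\re(e^{\ii\psi}\bm u)$ has cyclic sign variation $2\winding$. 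This gives $\winding=m=\lfloor k/2\rfloor$ directly, with no homotopy needed.
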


For example, if the slowest relaxation mode (corresponding to the spectral gap) is nonreal, it has the smallest winding number $\winding=1$. Conversely, if the fastest relaxation mode is nonreal, it has the largest winding number $\winding=\lceil n/2\rceil -1$. Importantly, the equivalence between winding number and eigenvalue index allows us to apply Theorem~\ref{thm:unicyclic-bounds} without requiring detailed information about the eigenvectors. 

The physical intuition behind this theorem is that, within the class of unicyclic systems, the winding number is topologically protected and locked to the eigenvalue index (see SM\ref{app:unicyclic}~\cite{SupplementalMaterial}). The result can also be interpreted as the discrete, cyclic, and non-self-adjoint analogue of the Sturm oscillation theorem~\cite{shubin_invitation_2020,bohner_oscillation_2003,joly2010generic}, which relates eigenvalue indices to spatial oscillations of eigenfunctions.

\section{Size-independent bounds and continuous systems.---} Theorem~\ref{thm:unicyclic-bounds}
implies several size-independent bounds for unicyclic systems.
For example, inequality~\eqref{eq:unicyclic-primal} can be weakened as
\begin{align}
\frac{\absImL}{\minusReL}
&\le
\frac{\BndAff}{2\pi \winding}\le
\frac{\BndAff}{2\pi}\,.
\label{eq:unicyclic-no-n-bound}
\end{align}
The first bound follows from $\tanh (\BndAff/2n) \le \BndAff/2n$ and $\cot (\pi \winding/n) \le n/\pi \winding$, which become tight when $n\to \infty$. The second bound follows from $\winding \ge 1$. 

These inequalities can be extended to continuous systems on a cycle, such as driven diffusions in a periodic potential~\cite{Reimann2002,Seifert2012}. In SM\ref{app:fp}~\cite{SupplementalMaterial}, we consider an overdamped Fokker--Planck equation 
$\partial_t \rho(x,t)
=
-\partial_x[f(x)\rho(x,t)-\partial_x(Q(x)\rho(x,t))]$ with periodic drift and diffusion fields over  $x\in[0,1]$.   
We prove that every nonreal eigenpair $(\lambda,u)$ satisfies the inequalities~\eqref{eq:unicyclic-no-n-bound}, where $\Aff
=\int_0^1 (f/Q)\,dx$ is the continuous cycle affinity and  $\winding=\vert\int_0^1
\im({u'}/{u}) \,dx|/2\pi$ is the eigenfunction winding number. The first bound is tight for uniform drift and diffusivity fields. We also prove that in such systems, the winding number is the same as the ordered eigenvalue index, in  analogy to Theorem~\ref{thm:ordered-winding}.

On the other hand, the inequality~\eqref{eq:unicyclic-dual} implies 
\begin{align}
\frac{\absImL}{\dualReL}
&\le
\frac{\BndAff}{\pi(n-2\winding)}\le
\begin{cases}{\BndAff}/{2\pi}& n\;\text{even}\\
{\BndAff}/{\pi}& n\;\text{odd}
\end{cases}
\label{eq:unicyclic-no-n-bound-dual}
\end{align}
The first bound uses $\tanh (\BndAff/2n) \le \BndAff/2n$ and $\tan (\pi \winding/n) \le 1/[\pi(1/2- \winding/n)]$, which become tight when $ n\to \infty,\winding/n\to 1/2$. The second bound uses $\winding \le \lceil n/2\rceil - 1$. Although the  bounds~\eqref{eq:unicyclic-no-n-bound-dual} do not depend on system size $n$, they become trivial in the continuous diffusive limit: the left side vanishes as the maximum escape rate diverges, while the right side remains finite.

\section{Multicyclic systems.---} %
We now consider an arbitrary (multicyclic) rate matrix $\W$ with $n$ states and maximum escape rate 
$\maxrate:=\max_\xx(-\W_{\xx\xx})$. 
The term \emph{cycle} $c$ refers to a cyclic sequence of $n_c \ge 3$ distinct vertices, $(\xx_1,\dots,\xx_{n_c},\xx_1)$ connected by nonzero edges $\W_{\yyxx}>0$.
We use $\CCC$ to indicate the set of all cycles of $\W$. For any cycle $c$, the \emph{cycle affinity} is defined as  $ \Aff_c:=\sum_{\yxedge\in c}\ln({\W_{\yyxx}}/{\W_{\xxyy}})$, 
with $\Aff_c:=\infty$ if $c$ contains an irreversible edge ($\W_{\xxyy}=0$).

Let $\lambda$ be a nonreal eigenvalue of $\W$ with an associated left or right eigenvector $\bm u$. 
We say that a cycle $c$ is \emph{admissible} for $\bm u$ if $u_\xx \ne 0$ for all states $\xx$ in the cycle and, for all edges $\yxedge \in c$, the phases of the ratios $u_\yy/u_\xx$ belong to a common open interval of length $\pi$. 
For an admissible cycle $c$, we define the eigenvector \emph{winding number} as
\begin{align}
\windFunc_c(\bm u)
:=
\bigg\vert\frac{1}{2\pi}\sum_{\yxedge\in c}\arg_{\branchParam} \frac{u_\yy}{u_\xx}\bigg\vert_{n_c}
\label{eq:winding-multicycle}
\end{align}
Here, $\vert k\vert_n :=\min\{ k\bmod n,(n-k) \bmod n \}$ is the wrapped frequency and $\arg_{\branchParam} z:=\Arg(e^{-\ii \branchParam}z)+\branchParam$ 
is the argument with branch cut at $\branchParam\pm \pi$. We choose the branch parameter $\branchParam = \Arg[\sum_{\yxedge\in c} e^{\ii\Arg (u_\yy/u_\xx)}]$ as the argument of the average phase vector, which guarantees that the branch cut avoids the  interval containing the edge-ratio phases.

As in unicyclic systems, the winding number is an integer that counts how many times the eigenvector's phase rotates as it goes around the cycle, thus it quantifies the spatial frequency of the eigenvector across the cycle. The multicyclic definition is a bit more complicated than the unicyclic one, Eq.~\eqref{eq:winding-unicycle}, due to the need to avoid branch cuts that may arise in multicyclic systems. Also, multicyclic systems do not exhibit a simple relation between the winding number and ordered eigenvalue index, i.e., there is no analogue of Theorem~\ref{thm:ordered-winding}. Finally, the multicyclic winding number can vary depending on the choice of left and/or right eigenvector corresponding to an eigenvalue.
Nonetheless, when applied to a unicyclic system, Eq.~\eqref{eq:winding-multicycle} agrees with Eq.~\eqref{eq:winding-unicycle} (see \emph{End Matter}).

\newcommand{\ncdown}{n_c^\downarrow}
\newcommand{\snc}{s_{n_c}}

\newcommand{\boundtag}[1]{\mathsf{({#1})}}
\newcommand{\boundle}[1]{\mathrel{\mathop{\le}\limits^{\boundtag{#1}}}}
\newcommand{\tablefigureicon}[1]{
 \raisebox{-0.5\height}{\includegraphics[width=0.07\textwidth]{#1.pdf}}
}
\newcommand{\slowdecayicon}{\tablefigureicon{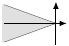}}
\newcommand{\ellipseicon}{\tablefigureicon{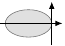}}
\newcommand{\heighticon}{\tablefigureicon{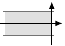}}
\newcommand{\fastdecayicon}{\tablefigureicon{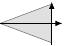}}

\newcommand\BndAffc{\Aff_c}

\newcommand\tAffc{\tanh\!\frac{\BndAffc}{2n_c}}
\renewcommand\tAffc{\mathfrak{F}_c}
\newcommand\tAffcMax{\max_{c\in\CCC}\tanh\!\frac{\BndAffc}{2n_c}}
\renewcommand\tAffcMax{\mathfrak{F}^*}
\begin{table*}[t!]
\caption{
Eigenvalue bounds implied by Theorem~\ref{thm:multicyclic-main}. Rows correspond to different values of $\mixc$ (localization regions), columns correspond to different types of information. 
For convenience, we write \(\mathfrak F_c:=\tanh(\BndAffc/2n_c)\), \(\mathfrak F^*:=\max_{c\in\CCC}\tanh(\BndAffc/2n_c)\), and \(n^\downarrow:=\{n\;\text{if $n$ odd};\; n-1\;\text{if $n$ even}\}\) (similarly for $n_c^\downarrow$). 
Superscripts indicate known results:
\((\mathsf a)\) Ohga-Ito-Kolchinsky bound (2023)~\cite{OhgaItoKolchinsky2023}, itself a strengthened version of a conjecture by Barato-Seifert (2017)~\cite{BaratoSeifert2017}. 
\((\mathsf b)\) Kellogg--Stephens bound (1978)~\cite{kellogg1978complex};
\((\mathsf c)\) Dmitriev--Dynkin bound (1946)~\cite{DmitrievDynkin1946};
\((\mathsf d)\) Conjectured by Uhl--Seifert (2019)~\cite{UhlSeifert2019};
\((\mathsf e)\) Follows from Gershgorin's Circle Theorem. 
\((\mathsf f)\) Can be derived from the Karpelevich theorem~\cite{KarpelevichCharacteristicRoots1951}. 
See text for details.
}
\label{tab:downstream-bounds}
\centering

\begin{tblr}{
 width=\textwidth,
 colspec={Q[wd=0.045\textwidth,c]|Q[wd=0.115\textwidth,c]|Q[wd=0.075\textwidth,c]|Q[wd=0.16\textwidth,c]||Q[wd=0.195\textwidth,l]|Q[wd=0.105\textwidth,l]||Q[wd=0.18\textwidth,l]|Q[wd=0.084\textwidth,l]},
 cells={m},
 column{5}={rightsep=0.2pt},
 column{6}={leftsep=0.2pt},
 rowsep=0pt,
 belowsep=0pt,
 abovesep=2pt,
 colsep=1pt,
 rulesep=0.5pt,
}
\hline\hline
\SetCell[r=2]{c}\textbf{\(\mixc\)}
&
\SetCell[r=2]{c}\textbf{Constraint}
&
\SetCell[r=2]{c}\textbf{Region}
&
\SetCell[r=2]{c}\textbf{Quantity}
&
\SetCell[c=2]{c}\textbf{Thermodynamic bounds}
&
&
\SetCell[c=2]{c}\textbf{Non-thermodynamic bounds}
&
\\
&
&
&
&
\SetCell{c}\emph{Topology}
&
\SetCell{c}\emph{Size}
&
\SetCell{c}\emph{Topology}
&
\SetCell{c}\emph{Size}
\\
\hline
\(\displaystyle 1\)
&
\emph{Slow-decay oscillations}
&
\slowdecayicon
&
\(\displaystyle \frac{|\imL|}{\minusReL}\)
&
\(\displaystyle {}\boundle{a} \max_{c\in\CCC}\tAffc\cot\!\frac{\pi}{n_c}\)
&
\(\displaystyle {}\le\tAffcMax\cot\!\frac{\pi}{n}\)
&
\(\displaystyle {}\boundle{b} \max_{c\in\CCC}\cot\!\frac{\pi}{n_c}\)
&
\(\displaystyle {}\boundle{c} \cot\!\frac{\pi}{n}\)
\\
\hline
\(\displaystyle \frac{1}{2}\)
&
\emph{Uhl--Seifert\\ellipse}
&
\ellipseicon
&\(\displaystyle \frac{|\imL|}{\sqrt{\!-\!\reL(2\maxrate\!+\!\reL)}}\)
&
\SetCell[c=2]{c}\(\displaystyle {}\boundle{d} \tAffcMax\)
&
&
\SetCell[c=2]{c}\(\displaystyle {}\boundle{e} 1\)
&
\\
\hline
\!\!\(\displaystyle \frac{\!-\!\reL}{2\maxrate}\)
&
\emph{High-frequency oscillations}
&
\heighticon
&
\(\displaystyle \frac{|\imL|}{\maxrate}\)
&
\(\displaystyle {}\le \max_{c\in\CCC}\tAffc\sin\!\frac{2\pi\lceil\tfrac{n_c-1}{4}\rceil}{n_c}\)
&
\(\displaystyle {}\le\tAffcMax\)
&
\(\displaystyle {}\le \max_{c\in\CCC}\sin\!\frac{2\pi\lceil\tfrac{n_c-1}{4}\rceil}{n_c}\)
&
\(\displaystyle {}\boundle{e} 1\)
\\
\hline
\(\displaystyle 0\)
&
\emph{Fast-decay oscillations}
&
\fastdecayicon
&
\(\displaystyle \frac{|\imL|}{\dualReL}\)
&
\(\displaystyle {}\le \max_{c\in\CCC}\tAffc\tan\!\frac{\pi(\ncdown-1)}{2n_c}\)
&
\(\displaystyle {}\le\mathfrak{F}^* \cot\!\frac{\pi}{2n^\downarrow}\)
&
\(\displaystyle {}\le \max_{c\in\CCC}\tan\!\frac{\pi(\ncdown-1)}{2n_c}\)
&
\(\displaystyle {} \boundle{f} \cot\!\frac{\pi}{2n^\downarrow}\)
\\
\hline\hline
\end{tblr}
\end{table*}

We now present our main result for multicyclic systems. It gives a family of eigenvalue bounds indexed by a parameter $\mixc\in[0,1]$ that interpolates between constraints on slow-decay modes ($\mixc=1$) and fast-decay modes ($\mixc=0$). For each $\mixc$, the eigenvalue is constrained by the winding number and affinity of some admissible nonequilibrium cycle, which we term the \emph{certifying cycle}. The proof is in SM\ref{app:proof-main}~\cite{SupplementalMaterial}.

\newcommand\mainTheoremStmt{For every nonreal eigenpair $(\lambda,\bm u)$ and parameter $\mixc \in [0,1]$, there is an admissible cycle $c$ with winding number
$\winding_c:= \windFunc_c(\bm u)\in \{1,\dots,\lceil n_c/2\rceil - 1\}$ such that
\begin{align*}
\mixc\frac{\absImL}{\minusReL}\tan\!\frac{\pi \winding_c}{n_c}
+
(1-\mixc)\frac{\absImL }{\dualReL}\cot\!\frac{\pi \winding_c}{n_c}
\le \tanh\!\frac{\Aff_c}{2n_c}
\end{align*}}

\begin{theorem}
\label{thm:multicyclic-main}
\mainTheoremStmt{}
\end{theorem}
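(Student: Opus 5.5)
The plan is to reduce the statement to a single-cycle extremal problem through a cycle decomposition of a suitable edge flow built from the eigenvector, and then to solve that extremal problem cycle by cycle. Take a right eigenpair, so that $(\lambda+e_\xx)u_\xx=\sum_{\yy\ne\xx}\W_{\xx\yy}u_\yy$ with $e_\xx=-\W_{\xx\xx}\le\maxrate$; a left eigenvector is handled by transposition. Dividing by $u_\xx$ expresses $\lambda+e_\xx$ as a positive combination of the edge ratios $z_{\xx\yy}=u_\yy/u_\xx$. For $\mixc=1$ this is the relevant relation. For $\mixc=0$ one uses the dual form, $\lambda+2\maxrate=(2\maxrate-e_\xx)+\sum_\yy \W_{\xx\yy}z_{\xx\yy}$, where $2\maxrate-e_\xx\ge0$ plays the role of a nonnegative self-loop.

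Taking the convex combination with weights $\mixc$ and $1-\mixc$, I would isolate the imaginary part and normalize by the appropriate real parts. This should give a nodewise identity of the form
\begin{align*}
\sum_\yy \oyxedge\bigl[\text{(phase-dependent term in $\Arg z$)}-\text{(term involving $|\imL|$, $\reL$, $\mixc$)}\bigr]=0,
\end{align*}
with nonnegative weights $\oyxedge$ built from $\W_{\xx\yy}|u_\yy|/|u_\xx|$.

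\emph{Step 1 (flow and decomposition).} From these nodewise identities I will construct a nonnegative edge flow that is conserved at every node. The natural candidate is $\W_{\xx\yy}|u_\xx||u_\yy|\,|\sin(\Arg z_{\xx\yy}-\branchParam)|$, oriented by the sign of the sine, after multiplying the equations by a positive node potential such as $|u_\xx|^2$. Antisymmetry is the key point here: $\im(\bar u_\xx u_\yy)=-\im(\bar u_\yy u_\xx)$, so the imaginary parts of $\bar u_\xx\W_{\xx\yy}u_\yy$ assemble into a genuine flow, and its divergence is matched by the $\imL|u_\xx|^2$ terms. I will then use the standard decomposition of a conserved nonnegative flow into positive multiples of simple cycles. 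Averaging the target inequality over this decomposition shows that at least one cycle $c$ in its support satisfies the per-cycle inequality. That cycle is automatically admissible: along a cycle carried by the flow, every edge ratio has the same sign of the rotated sine, so all phases of $u_\yy/u_\xx$ lie in one open half-plane.

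\emph{Step 2 (per-cycle extremal problem).} On the selected cycle the data are the edge phases $\theta_e$, summing to $2\pi\winding_c$ modulo the wrapping, and the log rate ratios, summing to $\Aff_c$. I must show that
\begin{align*}
\mixc\frac{\absImL}{\minusReL}\tan\frac{\pi\winding_c}{n_c}+(1-\mixc)\frac{\absImL}{\dualReL}\cot\frac{\pi\winding_c}{n_c}
\end{align*}
is at most the value attained by the uniform cycle. This should follow from a Jensen/AM–GM argument. Writing each forward/backward pair as $\sqrt{\fwd\bwd}\,e^{\pm a_e}$, one obtains terms like $\tanh(a_e)\tan(\theta_e)$, which should be jointly concave after the appropriate substitution. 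Maximizing under the two linear constraints $\sum_e a_e=\Aff_c/2$ and $\sum_e\theta_e=2\pi\winding_c$ then forces all $a_e=\Aff_c/2n_c$ and all $\theta_e=2\pi\winding_c/n_c$, which yields $\tanh(\Aff_c/2n_c)$ together with the $\tan/\cot$ factors. The range $\winding_c\in\{1,\dots,\lceil n_c/2\rceil-1\}$ follows from admissibility: each phase lies in an open arc of length $\pi$, and $\lambda$ being nonreal excludes $\winding_c=0$ and the half-turn case.

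\emph{Main obstacle.} I expect the hardest step to be making Step 1 and Step 2 compatible. The flow weights have to be chosen so that the cycle-averaged quantity really is the per-cycle objective of Step 2, with the self-loop terms $2\maxrate-e_\xx$ in the $\mixc=0$ part absorbed favorably. The concavity claim in Step 2 must also actually hold on the admissible phase range. I would first verify the scheme on the unicyclic case, where Theorem~\ref{thm:unicyclic-bounds} is the endpoint of the interpolation, and then carry the same weights over to the general graph.
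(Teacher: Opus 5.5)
Your outline has the right overall shape: select a certifying cycle, then apply Jensen to a per-edge convex function. However, the cycle-selection step, which carries most of the proof, fails as proposed, because the eigenvector identities do not give a conserved flow. Multiplying the node-$i$ equation of a left eigenvector by $u_i^*$ gives $\operatorname{Im}\lambda\,|u_i|^2=\sum_{j\ne i}W_{ji}\operatorname{Im}(u_ju_i^*)$. The antisymmetric quantity $\operatorname{Im}(u_ju_i^*)$ therefore enters node $i$ with weight $W_{ji}$ but node $j$ with weight $-W_{ij}$. Summing over nodes gives $\operatorname{Im}\lambda\sum_i|u_i|^2=\sum_{i<j}(W_{ji}-W_{ij})\operatorname{Im}(u_ju_i^*)\ne0$. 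Indeed, no conserved flow can have a source $\operatorname{Im}\lambda\,|u_i|^2$ of one sign at every node. This rate asymmetry is exactly where the affinity enters, and it enters multiplicatively. So averaging a nonlinear, $\tanh$-type per-cycle inequality over an additive cycle decomposition cannot certify a cycle.

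The missing idea is the paper's edge observable $M_{ji}=\operatorname{Im}(u_ju_i^*)+(\alpha-\beta)\operatorname{Re}(u_ju_i^*)-(\alpha+\beta)|u_i|^2$, with $\alpha=\tau\operatorname{Im}\lambda/(-\operatorname{Re}\lambda)$ and $\beta=(1-\tau)\operatorname{Im}\lambda/(2r+\operatorname{Re}\lambda)$. It satisfies $\sum_{j\ne i}W_{ji}M_{ji}=2\beta(r+W_{ii})|u_i|^2\ge0$, which is where your self-loop gets absorbed. It also satisfies $M_{ji}+M_{ij}=-\alpha|u_j-u_i|^2-\beta|u_j+u_i|^2<0$. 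A graph lemma then extracts the cycle via a max-weight-walk potential $\eta$. Suppose every cycle of positive edges had $\sum\ln[W_{ji}M_{ji}/(-W_{ij}M_{ij})]<0$. Then $W_{ji}M_{ji}e^{\eta(i)}\le-W_{ij}M_{ij}e^{\eta(j)}$ on those edges, with at least one strict, and summing contradicts the nonnegative out-sums. The resulting cycle either contains an irreversible edge or satisfies $\mathcal F_c\ge\sum_{(j\leftarrow i)\in c}\ln(-M_{ij}/M_{ji})$. Nothing in your Step 1 provides this.

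Step 2 also uses the wrong variables. On the cycle, the quantities to optimize over are the phase $\varphi$ and the log-modulus $\xi$ of each ratio $u_j/u_i$, with $\sum\xi=0$ and $\sum\varphi$ fixed by the winding. The moduli, which you omit, are essential. The rates enter only through $e^{\mathcal F_c}$. Your proposed term $\tanh(a)\tan\theta$ is not jointly concave: for $a>0$ it is convex in $\theta$ on $(0,\pi/2)$. The needed lemma is joint convexity of $J_{\sigma,\gamma}(\varphi,\xi)=\ln\frac{e^{\xi}\sin(\varphi-2\gamma)+\sigma}{e^{-\xi}\sin\varphi-\sigma}$ for $\sigma\ge|\sin\gamma|$, where $\gamma=\arctan(\alpha-\beta)$ and $\sigma=(\alpha+\beta)/\sqrt{1+(\alpha-\beta)^2}$. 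Jensen then gives $\mathcal F_c\ge n_cJ_{\sigma,\gamma}(\vartheta+\gamma,0)$, which rearranges exactly to the stated bound.

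Similarly, the winding range does not follow from half-plane admissibility plus nonreality of $\lambda$, since ratios in a common half-plane can have total phase $0$. The paper gets $\vartheta\in(0,\pi)$ from three ingredients:
\begin{itemize}
\item the threshold $e^{-\xi}\sin\varphi\ge\sigma\ge|\sin\gamma|$ encoded in $M_{ji}\ge0$;
\item concavity of $\ln\sin$;
\item the exclusion of $u_j=\pm u_i$ from the edge set.
\end{itemize}

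Finally, the two eigenvector cases are not related by transposition alone. The dual identity $\lambda+2r=2(r+W_{ii})+\sum_{j\ne i}W_{ji}(1+u_j/u_i)$ needs the node-$i$ coefficients to be the rates out of $i$, summing to $-W_{ii}$. This holds for only one of the two cases: left eigenvectors in the paper's column convention, not the equation you wrote. The paper reduces the other case by restricting to an SCC and passing to the Perron similarity $DW^{\top}D^{-1}-\lambda_1^{\mathcal S}I$. It then checks that affinities and windings are preserved and that the bound only weakens.
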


The certifying cycle has positive affinity ($\Aff_c>0$), since the left side of the bound in Theorem~\ref{thm:multicyclic-main} is positive.
In the \emph{End Matter}, we outline an algorithm for identifying a certifying cycle for any nonreal eigenpair and $\mixc\in[0,1]$. The structure of this algorithm parallels the proof of Theorem~\ref{thm:multicyclic-main}, providing a constructive perspective on the proof.

\section{Tradeoffs.---}
Theorem~\ref{thm:multicyclic-main} allows for a unified derivation of many downstream eigenvalue bounds,  some of which recover known results and conjectures from the literature.  Although generally weaker than Theorem~\ref{thm:multicyclic-main}, these downstream bounds are simpler because they do not require  detailed information about certifying cycles and winding numbers. 

Table~\ref{tab:downstream-bounds} summarizes several of these downstream bounds. To derive these, we specialize Theorem~\ref{thm:multicyclic-main} to different values of $\mixc$,  corresponding to different types of localization regions. For each region, we derive a hierarchy of bounds that refer to different types of information, and that capture different tradeoffs between thermodynamic, spectral, and topological properties. 
For instance, ``{Thermodynamic bounds}'' involve cycle affinities $\Aff_c$ while ``Topology''-dependent ones involve the number and sizes of cycles. 
The strongest bounds, which use both thermodynamic and topological information, are derived by applying Theorem~\ref{thm:multicyclic-main} at a given $\mixc$, rearranging, and then maximizing over possible cycles $\CCC$ and winding numbers \(\winding_c\in\{1,\dots,\lceil n_c/2\rceil-1\}\). 
We derive weaker ``{Size}''-dependent bounds using the inequality \(n_c\le n\). 
``{Non-thermodynamic bounds}'' are derived from thermodynamic ones via the inequality \(\tanh(\Aff_c/2n_c)\le1\), which is tight for irreversible cycles; such bounds are valid even for irreversible systems.

The derivation for \(\mixc=1/2\) involves an extra step. First, we use Theorem~\ref{thm:multicyclic-main} to write
\begin{align*}
\frac12\Big(
\frac{|\imL|}{\minusReL}\tan\frac{\pi \winding_c}{n_c}
+
\frac{|\imL|}{\dualReL}\cot\frac{\pi \winding_c}{n_c}
\Big)
&\le
\tanh\frac{\Aff_c}{2n_c}\,.
\end{align*}
We then lower bound the left side using the AM-GM inequality. Maximizing over all cycles $\CCC$ gives
\begin{align}
\frac{\absImL}
{\sqrt{\minusReL\,(\dualReL)}}\le \max_c\tanh\frac{\Aff_c}{2n_c}\,,
\label{eq:us2}
\end{align}
which appears in the second row of Table~\ref{tab:downstream-bounds}. Eq.~\eqref{eq:us2} is the multicyclic version of the Uhl--Seifert conjecture~\cite{UhlSeifert2019}, which has remained unresolved until now. (A weaker version based on maximum edge force was recently shown in Ref.~\cite{XuKolchinskyDelvenneIto2025}).

\newcommand{\lUnif}{\lambda^{\mathrm{unif}}}
\newcommand{\lUnifConj}{\lambda^{\mathrm{unif}*}}

\section{Localization regions.---}
In our final results, we use Theorem~\ref{thm:multicyclic-main} to derive new kinds of eigenvalue localization regions. 
These regions reflect the interplay of constraints arising from 
different cycles and interpolation parameter values $\mixc\in[0,1]$. The detailed derivations are found in SM\ref{app:convex-hull}~\cite{SupplementalMaterial}.

We first define some notation. 
We use $\CCCpos=\{c\in \CCC : \Aff_c > 0 \}$ to indicate the set of cycles with positive affinity and $\mathcal W
:=
\prod_{c\in\CCCpos}
\big\{1,\dots,\lceil{n_c}/2\rceil -1 \big\}\subset \mathbb{N}^{\vert\CCCpos\vert}$ to indicate the set of \emph{winding configurations}, i.e., assignments of possible winding numbers to each cycle in $\CCCpos$.  Finally, we use  \[\lUnif_{\Aff,n,\windA} := \maxrate[\cos({2\pi \windA}/{n})-1+\ii\tanh({\Aff}/{2n})\sin({2\pi \windA}/{n})]\]
to indicate the eigenvalue of a uniform $n$-cycle with winding number $\windA$, escape rate $\maxrate$, and cycle affinity $\Aff$.

We introduce two types of localization regions. 
The first region depends on the cycle affinities and topological structure of the generator (number and length of cycles). 
To derive it, we show in the SM that  Theorem~\ref{thm:multicyclic-main} implies that every eigenvalue, 
regarded as a point in the complex plane, belongs to the convex hull $$\Omega(\bm\windA):=\Conv\Big(\{-2\maxrate,0\}\cup\bigcup_{c\in\CCCpos}\{\lUnif_{\Aff_c,n_c,\windA_c},\lUnifConj_{\Aff_c,n_c,\windA_c}\}\Big)$$
for some winding configuration $\bm \windA\in\mathcal W$. 
Taking the union of these convex hulls over possible winding configurations gives an overall localization region.

\begin{theorem}
\label{thm:multicyclic-hull}
Every eigenvalue $\lambda$ belongs to $\bigcup_{\bm \windA\in\mathcal W}\Omega(\bm\windA)$.
\end{theorem}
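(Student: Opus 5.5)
The plan is to fix the winding configuration directly from the eigenvector, and then to turn the one-parameter family of bounds in Theorem~\ref{thm:multicyclic-main} into the convex hull $\Omega(\bm\windA)$ by a minimax (linear-programming duality) argument over $\mixc$.

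\emph{Real eigenvalues.} By the Gershgorin argument these lie in $[-2\maxrate,0]$. This segment is contained in every $\Omega(\bm\windA)$, since $-2\maxrate$ and $0$ are always hull points.

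\emph{Choice of configuration.} Let $\lambda$ be nonreal with eigenvector $\bm u$. Since $\Omega(\bm\windA)$ is invariant under conjugation, and $(\lambda^*,\bm u^*)$ is an eigenpair with the same windings, we may assume $\imL>0$. Define $\bm\windA$ by $\windA_c:=\windFunc_c(\bm u)$ whenever $c\in\CCCpos$ is admissible for $\bm u$ and $\windFunc_c(\bm u)\in\{1,\dots,\lceil n_c/2\rceil-1\}$, and set $\windA_c:=1$ otherwise. The point is that for every $\mixc$, Theorem~\ref{thm:multicyclic-main} produces a certifying cycle $c$. That cycle is admissible and has positive affinity, and its winding $\winding_c=\windFunc_c(\bm u)$ lies in the allowed range. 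So its winding coincides with $\windA_c$, whichever cycle is certified. The configuration $\bm\windA$ is therefore independent of $\mixc$.

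\emph{Reduction to a minimax statement.} Write $x_0=\reL\in(-2\maxrate,0)$ and $t_c=\tan(\pi\windA_c/n_c)$. For each $c\in\CCCpos$ and each $\mixc$, Theorem~\ref{thm:multicyclic-main} reads $\imL\le 1/(\mixc\alpha_c+(1-\mixc)\beta_c)$, where
\[
\alpha_c=\frac{t_c}{-x_0\tanh(\Aff_c/2n_c)},\qquad \beta_c=\frac{1}{t_c(2\maxrate+x_0)\tanh(\Aff_c/2n_c)} .
\]
The certifying cycle depends on $\mixc$, so we only know
\[
\frac{1}{\imL}\ \ge\ \min_{c}\big(\mixc\alpha_c+(1-\mixc)\beta_c\big)\quad\text{for all } \mixc\in[0,1].
\]
Hence $1/\imL\ge\max_{\mixc}\min_c(\mixc\alpha_c+(1-\mixc)\beta_c)$. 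The right side is the value of a finite zero-sum game. By the minimax theorem (or LP duality), it equals $\min_{\mu}\max(\sum_c\mu_c\alpha_c,\sum_c\mu_c\beta_c)$ over probability vectors $\mu$ on $\CCCpos$, and the optimum can be taken supported on at most two cycles.

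\emph{Geometric identification.} We then interpret $1/\alpha_c$ and $1/\beta_c$ geometrically. One checks from the formula for $\lUnif$ that $-2\maxrate\sin^2(\pi\windA_c/n_c)$ is its real part. It follows that $1/\alpha_c$ is the height at $x_0$ of the line through $0$ and $\lUnif_{\Aff_c,n_c,\windA_c}$. Similarly, $1/\beta_c$ is the height at $x_0$ of the line through $-2\maxrate$ and the same uniform point. This is the same computation that shows both bounds of Theorem~\ref{thm:unicyclic-bounds} meet at the uniform eigenvalue.

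The upper boundary of $\Omega(\bm\windA)$ at $x_0$ is a concave piecewise-linear function $h(x_0)$. It is either attained on a segment from $0$ or from $-2\maxrate$ to a single uniform point, or on a segment between two uniform points $P_c,P_{c'}$. In the first case the single-cycle choice $\mu=\delta_c$ suffices. In the two-point case we must show that the optimal two-cycle mixture gives $\max(\mu\alpha_c+(1-\mu)\alpha_{c'},\,\mu\beta_c+(1-\mu)\beta_{c'})\ge 1/h(x_0)$. Equivalently, the minimax value is at least the reciprocal of the height of the chord $P_cP_{c'}$.

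\emph{Main obstacle.} This last inequality, relating a harmonic-type mixture of the four line heights to the chord between two uniform points, is the step I expect to be hardest. I would prove it by parametrizing the chord and using the supporting line $L$ of $\Omega(\bm\windA)$ at $x_0$. Any line lying above $0$, $-2\maxrate$, $P_c$ and $P_{c'}$ dominates, at $x_0$, the two lines through $0$ and through $-2\maxrate$ with suitably averaged slopes. Choosing $\mixc$ so that the $\mixc$-weighted tangent direction of the concave curves $x\mapsto 1/(\mixc\alpha_c(x)+(1-\mixc)\beta_c(x))$ matches the slope of $L$ should give $\imL\le L(x_0)=h(x_0)$. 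This concludes $\lambda\in\Omega(\bm\windA)$.
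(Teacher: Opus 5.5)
Your reduction matches the paper's up to the inequality $1/\im\lambda\ge V:=\max_{\tau\in[0,1]}\min_{c\in\mathcal C^+}\bigl(\tau\alpha_c+(1-\tau)\beta_c\bigr)$, which is exactly the paper's $|\im\lambda|\le\psi_{\bm w}(\re\lambda)$. Your way of fixing $\bm w$ is fine, and so is your identification of $1/\alpha_c$, $1/\beta_c$ as the heights at $x_0$ of the lines from $0$ and from $-2r$ through the uniform point of cycle $c$.

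\textbf{The gap.} The missing step is $V\ge 1/h(x_0)$, where $h$ is the upper boundary of $\Omega(\bm w)$; the paper isolates this as Lemma~\ref{lem:hull-envelope-main}. You leave it at ``should give,'' and the case analysis you sketch does not work as stated.
\begin{itemize}
\item After passing to the min-max form $V=\min_\mu\max(\sum_c\mu_c\alpha_c,\sum_c\mu_c\beta_c)$, exhibiting a particular $\mu$ (such as $\mu=\delta_c$ in your first case) only bounds $V$ from \emph{above}. You need a lower bound valid for every $\mu$.
\item In the two-point case, nothing ties the support of the optimal $\mu$ to the two uniform points spanning the hull edge above $x_0$.
\end{itemize}
So the minimax theorem is a detour pointing the wrong way. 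Your last paragraph has the right instinct (a supporting line, decomposed into lines through $0$ and $-2r$), but the prescription for $\tau$ via matching tangent directions of $c$-dependent curves is not what is needed and is not carried out.

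\textbf{The fix.} Stay in the max-min form and exhibit one good $\tau$, read off from the values of a supporting line.
\begin{enumerate}
\item Let $L$ be affine with $L\ge h$ on $[-2r,0]$ and $L(x_0)=h(x_0)$, i.e.\ a supergradient line of the concave $h$ at $x_0$. Here $h(x_0)>0$, because $\mathcal C^+\neq\varnothing$: a nonreal eigenvalue has a certifying cycle of positive affinity.
\item Then $L(-2r)\ge0$, $L(0)\ge0$, and $L(x_c)\ge y_c$ for every generating point $x_c+\mathrm{i}y_c$.
\item Set $\tau:=-\frac{x_0}{2r}\frac{L(-2r)}{L(x_0)}$. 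Then $1-\tau=\frac{2r+x_0}{2r}\frac{L(0)}{L(x_0)}$, so $\tau\in[0,1]$.
\item This choice gives $L=\tau\ell_0+(1-\tau)\ell_{-2r}$, where $\ell_0$ and $\ell_{-2r}$ are the lines through $(x_0,L(x_0))$ and through $0$, respectively $-2r$.
\item Your own identities give $\ell_0(x_c)=L(x_0)\,y_c\,\alpha_c$ and $\ell_{-2r}(x_c)=L(x_0)\,y_c\,\beta_c$. Hence $L(x_c)\ge y_c$ becomes $\tau\alpha_c+(1-\tau)\beta_c\ge 1/h(x_0)$ for every $c\in\mathcal C^+$.
\end{enumerate}
Therefore $V\ge 1/h(x_0)$, so $\im\lambda\le h(x_0)$, and by convexity and conjugation symmetry $\lambda\in\Omega(\bm w)$. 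This is precisely the paper's parametrization of affine majorants by $\tau$ in the proof of Lemma~\ref{lem:hull-envelope-main}. With it, neither the minimax theorem nor a case analysis over hull faces is needed.
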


The resulting localization region is generally a non-convex polygon; see Fig.~\ref{fig:multicyclic-hull-small}(a). The uniform-cycle eigenvalues $\lUnif_{\Aff_c, n_c,\windA_c}$ corresponding to different nonequilibrium cycles determine the shape of this region (colored circles).
In the unicyclic case, $\CCCpos$ contains a single cycle and we recover the
winding-resolved polygon shown in Fig.~\ref{fig:unicyclic}(b).

\begin{figure}[t]
\centering
\includegraphics[width=\columnwidth]{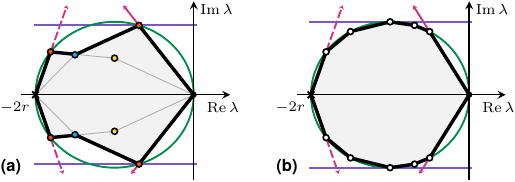}
\caption{(a) Theorem~\ref{thm:multicyclic-hull} illustrated on a 6-state generator with 3 cycles: 
\((n_c,\Aff_c)=(3,4),(4,4),(5,16)\). Eigenvalues are localized to a topology- and affinity-dependent region (thick black), defined as a union of convex hulls (thin grey lines) induced by the 3 cycles (blue, yellow, and orange dots). We show four Topology-dependent bounds from Table~\ref{tab:downstream-bounds}: $\tau=1$ (solid magenta), $\tau=0$ (dashed magenta), $\tau=\minusReL/2\maxrate$ (purple lines), and $\tau=1/2$ (Uhl--Seifert ellipse, green). 
(b) Theorem~\ref{thm:topology-free-region} specifies a topology-free localization region in terms of $\mathfrak F^*:=\max_{c}\tanh(\Aff_c/2n_c)$ and size $n=6$. It is the convex hull of points induced by  Farey fractions (white dots). We show four Size-dependent bounds from Table~\ref{tab:downstream-bounds}, same colors as in (a).}
\label{fig:multicyclic-hull-small}
\end{figure}

The second region does not depend on the topology or affinities of individual cycles, only on the system size $n$ and the largest normalized affinity \(\mathfrak F^*:=\max_{c\in\CCC}\tanh(\Aff_c/2n_c)\) ($\mathfrak F^*=0$ if no cycles exist). It is expressed in terms of the \emph{Farey sequence} $F_n$, i.e., the set of reduced fractions in $[0,1]$ with denominator at most $n$ (e.g., \(F_4=\{0,1/4,1/3,1/2,2/3,3/4,1\}\)). 

\begin{theorem}
\label{thm:topology-free-region}
Every eigenvalue \(\lambda\)
belongs to
\begin{align}
\Conv\Big\{
\maxrate\big[
\cos(2\pi q)-1+
\ii\,\mathfrak F^*\sin(2\pi q)
\big]
: q\in F_n
\Big\}.
\label{eq:weaker}
\end{align}
\end{theorem}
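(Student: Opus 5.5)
Since Theorem~\ref{thm:multicyclic-hull} is available, we would deduce Theorem~\ref{thm:topology-free-region} by showing that each hull $\Omega(\bm\windA)$ lies inside the Farey hull~\eqref{eq:weaker}, which we denote $\Omega_n$. Real eigenvalues are easy to place. By Gershgorin they lie in $[-2\maxrate,0]$. Since $q=0$ and $q=1/2$ both belong to $F_n$ (as $n\ge 2$), the points $0$ and $-2\maxrate$ are vertices of $\Omega_n$. So it suffices to show that every generating point of $\Omega(\bm\windA)$ lies in $\Omega_n$; convexity then gives the inclusion. By the symmetry $q\mapsto 1-q$, $\Omega_n$ is closed under complex conjugation, so we only need to treat the points $\lUnif_{\Aff_c,n_c,\windA_c}$ with $c\in\CCCpos$.

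Fix such a cycle and set $q=\windA_c/n_c$. This is a fraction in $(0,1/2)$ with denominator at most $n_c\le n$, so in reduced form $q\in F_n$. The corresponding point is $\lUnif=\maxrate[\cos 2\pi q-1+\ii\, t\sin 2\pi q]$ with $t=\tanh(\Aff_c/2n_c)\in(0,\mathfrak F^*]$, using that $\mathfrak F^*$ is the maximum over all cycles. The Farey vertex with the same $q$ is $\maxrate[\cos 2\pi q-1+\ii\,\mathfrak F^*\sin 2\pi q]$. It has the same real part and an imaginary part at least as large, and $\sin 2\pi q>0$. Its conjugate (from $1-q\in F_n$) has the same real part and the negated imaginary part. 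So $\lUnif$ lies on the vertical segment joining these two vertices, and hence in $\Omega_n$. This gives $\Omega(\bm\windA)\subseteq\Omega_n$ for every $\bm\windA\in\mathcal W$, and Theorem~\ref{thm:multicyclic-hull} completes the argument.

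The remaining issue is degenerate cases. If $\CCCpos$ is empty, Theorem~\ref{thm:multicyclic-hull} already confines eigenvalues to $[-2\maxrate,0]$, consistent with $\mathfrak F^*=0$. If the statement is meant to cover irreversible cycles with $\Aff_c=\infty$, then $t=1$ and the same argument applies. The only step I expect to need care is checking that Theorem~\ref{thm:multicyclic-hull} indeed contains $\{-2\maxrate,0\}$ among its generators, so that no further geometric work is needed. The inclusion itself is elementary once one notices that the real part of each uniform-cycle eigenvalue depends only on $q$.
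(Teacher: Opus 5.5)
Your proposal is correct and follows essentially the paper's route. Both reduce to Theorem~\ref{thm:multicyclic-hull}, note that each generator $\lambda^{\mathrm{unif}}_{\Aff_c,n_c,\windA_c}$ stays inside the hull once $\tanh(\Aff_c/2n_c)$ is replaced by $\mathfrak F^*$, and identify $\windA_c/n_c$ in lowest terms with a Farey fraction in $(0,1/2)$. The differences are cosmetic: you use the vertical segment between conjugate Farey vertices where the paper uses a convex mixture with a real-axis point, and you skip the paper's converse identification of the union hull with the Farey hull, which is not needed for containment.
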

This region is shown in Fig.~\ref{fig:multicyclic-hull-small}(b).
To derive Theorem~\ref{thm:topology-free-region}, we use the fact that the convex hulls in Theorem~\ref{thm:multicyclic-hull} are enlarged when $\tanh(\Aff_c/2n_c)$ is replaced by $\mathfrak F^*=\max_{c}\tanh(\Aff_c/2n_c)$. 
We remove dependence on topology by considering every possible cycle size $n_c\in \{3,\dots,n\}$. %

Theorem~\ref{thm:multicyclic-hull} remains valid in the irreversible limit $\Aff_c\to \infty$, where it gives a non-thermodynamic topology-dependent localization region. Theorem~\ref{thm:topology-free-region} is also valid in this regime, though in that case it is weaker than the region specified by the Karpelevich theorem~\cite{KarpelevichCharacteristicRoots1951}.

\section{Discussion.---}
In this Letter, we showed that each complex eigenvalue of a Markov generator is confined to a region determined by the cycle affinity and the winding number of some  nonequilibrium cycle. %
Moreover, in unicyclic systems, the winding number coincides with the ordered eigenvalue index, leading to strong constraints on the fastest and slowest eigenmodes.
Overall, our results elucidate how topological and thermodynamic properties jointly constrain the eigenvalues of Markov generators.

Topological invariants such as winding numbers have appeared in other studies of classical stochastic systems, in particular in work on topological phases and protected edge currents in Markovian networks with repeating unit cells~\cite{agudo2025topological,murugan2017topologically}. Unlike these earlier definitions, our winding number is not a bulk invariant in real or parameter space, but an eigenvector-dependent invariant defined on the cycles of a finite Markov graph. Understanding the relationship between these different invariants is an interesting direction for future work.

Other natural directions include multicyclic generalizations of the winding-index correspondence (Theorem~\ref{thm:ordered-winding}) and extensions of the one-dimensional diffusion inequality~\eqref{eq:unicyclic-no-n-bound} to higher-dimensional continuous systems.

\vspace{10pt}

\begin{acknowledgments}
We thank Jean-Charles Delvenne, Guo-Hua Xu, Ruicheng Bao, Hadrien Vroylandt, and Chris Waggoner for helpful discussions. Several proofs were derived with assistance from ChatGPT~Pro. 
A.K. is partly supported by the John Templeton Foundation (grant 62828) and by the European Union's Horizon 2020 research and innovation programme under the Marie Sk{\l}odowska-Curie Grant Agreement No.~101068029. 
N.O. is supported by JSPS KAKENHI Grants No.~23KJ0732 and No.~26KJ0175. 
S.I. is supported by JSPS KAKENHI Grants No.~22H01141, No.~23H00467, and No.~24H00834, JST ERATO Grant No.~JPMJER2302 and UTEC-UTokyo FSI Research Grant Program.
\end{acknowledgments}

\vspace{10pt}
\begin{center}
\textbf{End Matter}
\end{center}

\newcommand\uniWadj{\widetilde{\W}}
\newcommand\uadj{\widetilde{\bm u}}
\newcommand\uadjxx{\widetilde{ u}}
\newcommand\madj{\widetilde{\winding}}

\newcommand\PhaseFunc{K}
\section{Derivation of Theorem~\ref{thm:unicyclic-bounds}.---}
For a unicyclic system, Theorem~\ref{thm:unicyclic-bounds}
follows as a simple corollary of Theorem~\ref{thm:multicyclic-main}, with inequalities~\eqref{eq:unicyclic-primal} and~\eqref{eq:unicyclic-dual} given by taking $\mixc=1$ and $\mixc=0$. In the unicyclic case,
Lemma~\ref{lem:upper-half-plane} in the SM\ref{app:unicyclic}~\cite{SupplementalMaterial} shows that the eigenvector ratios $u_\yy/u_\xx$ lie in a half-plane of the complex plane, thus the cycle is admissible. Since the branch parameter $\branchParam$ in Eq.~\eqref{eq:winding-multicycle} is chosen to avoid this half-plane, 
the multicyclic winding number~\eqref{eq:winding-multicycle} agrees with the unicyclic one~\eqref{eq:winding-unicycle}.

However, to build intuition, we also provide a self-contained derivation of the unicyclic result that does not reference Theorem~\ref{thm:multicyclic-main}. For simplicity, we consider the case where $\imL >0$, $0<\Aff<\infty$, and $\bm u$ is a left eigenvector (Prop.~\ref{prop:m} in the SM~\cite{SupplementalMaterial} shows that the right eigenvector case is equivalent). 

We first derive Eq.~\eqref{eq:unicyclic-primal}. Write the eigenvector equation as 
\begin{align}
\lambda u_\xx
=
\fwd_\xx (u_{\xx+1}-u_\xx) +\bwd_\xx (u_{\xx-1}-u_\xx)\,.
\label{eq:generator-m4}
\end{align}
Lemma~\ref{lem:upper-half-plane} in the SM\ref{app:unicyclic}~\cite{SupplementalMaterial} shows that $u_\xx \ne 0$ and $\im(u_{\xx+1}/u_\xx) >0$ for all $\xx$, allowing us to 
define eigenvector ratios $q_\xx=u_{\xx+1}/u_\xx$. We also write the eigenvalue in polar form as $\lambda=-|\lambda|e^{-\ii\psi}$. Then, we multiply Eq.~\eqref{eq:generator-m4} by $e^{\ii\psi}/u_\xx$, take the imaginary part, and use a bit of algebra to write
\begin{align*}
\fwd_\xx \im\bigl(e^{\ii\psi}(q_\xx -1)\bigr) =\bwd_\xx \im\bigl(e^{\ii\psi}(1-1/q_{\xx-1})\bigr) \,.
\end{align*}
It can be shown that both sides of this equation must be positive (e.g., assuming otherwise and propagating around the cycle leads to a contradiction). We then take logarithms, sum across the cycle, and rearrange to write the cycle affinity as
\begin{align}
\Aff\equiv \sum_\xx \ln\frac{\fwd_\xx}{\bwd_\xx}
=
\sum_\xx \ln\frac{\im\bigl(e^{\ii\psi}(1-1/q_{\xx})\bigr)}{\im\bigl(e^{\ii\psi}(q_\xx-1)\bigr)}\,.
\label{eq:affUni0}
\end{align}
Next, we express the eigenvector ratios in polar form, $q_\xx = e^{-\xlog_\xx+\ii\phase_\xx}$ with $\phase_\xx \in(0,\pi)$. We can then rewrite Eq.~\eqref{eq:affUni0} as $\Aff = \sum_\xx \PhaseFunc_\psi(\phase_\xx,\xlog_\xx)$, where 
\begin{align}
\PhaseFunc_\psi(\phase,\xlog)
:=
\ln\frac{e^{\xlog}\sin(\phase-\psi)+\sin\psi}
{e^{-\xlog}\sin(\phase+\psi)-\sin\psi} \,.
\label{eq:Gdef}
\end{align}
The function \(\PhaseFunc_\psi\) is jointly convex in both arguments, as shown in 
Corollary~\ref{cor:G-convexity} in SM\ref{app:convexity}~\cite{SupplementalMaterial}. 
Applying Jensen's inequality and using $\sum_\xx\phase_\xx=2\pi \winding,\sum_\xx\xlog_\xx=0$ gives 
\begin{align}
\Aff/n\ge \PhaseFunc_\psi(2\pi \winding/n,0)\,.
\label{eq:zcvdd}
\end{align}
The minimum is achieved by the uniform cycle, where the relative phases and
amplitudes are equal. With a bit of rearranging, Eq.~\eqref{eq:zcvdd} is shown to be equivalent to Eq.~\eqref{eq:unicyclic-primal}.

Eq.~\eqref{eq:unicyclic-dual} is derived by applying a similar technique to a reflected construction. Let us define a reflected (pseudo)-eigenvalue $\lambda^\prime = -2\maxrate - \lambda^*=-|\lambda'|e^{-\ii\psi'}$.
Rearranging Eq.~\eqref{eq:generator-m4} gives
\begin{align*}
-\lambda' u_\xx^*
=
\fwd_\xx(u_{\xx+1}^*+u_\xx^*)
+\bwd_\xx(u_{\xx-1}^*+u_\xx^*)
+2(\maxrate-\maxrate_\xx)u_\xx^* ,
\end{align*}
where $\maxrate_\xx:=\fwd_\xx + \bwd_\xx$ is the local escape rate.
We multiply both sides by 
$e^{\ii\psi'}/u_\xx^*$,
take the imaginary part, and use the inequality $2(\maxrate-\maxrate_\xx)\sin\psi' \ge 0$ to write
\begin{align*}
\fwd_\xx \im\bigl(e^{\ii\psi'}( -q_\xx^*-1)\bigr)
\ge 
\bwd_\xx \im\bigl(e^{\ii\psi'}(1+ 1/q_{\xx-1}^{*})\bigr)\,.
\end{align*}
It can be shown (e.g., by contradiction) that both sides are positive. Using the polar representation $q_\xx = e^{-\xlog_\xx+\ii\phase_\xx}$, this inequality 
can be rearranged to give $\Aff\ge 
\sum_\xx \PhaseFunc_{\psi'}(\pi-\phase_\xx,\xlog_\xx)$. Finally, applying Jensen's inequality gives $\Aff/n 
\ge \PhaseFunc_{\psi'}(\pi-{2\pi \winding}/{n},0)$, which is equivalent to Eq.~\eqref{eq:unicyclic-dual}.

\section{Certifying cycle in Theorem~\ref{thm:multicyclic-main}.---}
\label{app:certifying-algorithm}

The derivation of Theorem~\ref{thm:unicyclic-bounds} above expressed the 
cycle affinity as a sum of convex terms, then applied Jensen's inequality. 
The proof of the multicyclic Theorem~\ref{thm:multicyclic-main} 
in SM\ref{app:proof-main}~\cite{SupplementalMaterial} uses a similar strategy. For each choice of $\mixc\in[0,1]$, eigenvalue $\lambda$, and eigenvector $\bm u$, we identify an admissible certifying cycle, then derive the eigenvalue bound using a convexity argument.

The certifying cycle can be identified using a three-step algorithm. Here, we consider a rate matrix $\W\in\RR^{n\times n}$ with maximum escape rate $\maxrate$.
For simplicity, we assume that $\imL >0$ and that $\bm u$ is a left eigenvector. Otherwise, we first pass to the conjugate eigenpair and/or the adjoint rate matrix (for details, see SM\ref{app:righteig}~\cite{SupplementalMaterial}).

\newcommand\algoGraph{\mathcal G}
\newcommand\algoVert{\mathcal V}
\newcommand\algoEdge{\mathcal E}

First, define the matrix 
\begin{align*}
M_{\yy\xx}
:=
\operatorname{Im}(u_\yy u_\xx^*)
+
(\alpha -\beta)\operatorname{Re}(u_\yy u_\xx^*)
-(\alpha +\beta)|u_\xx|^2
\end{align*}
where we introduced $\alpha :=
\mixc\imL/(\minusReL)$ and $\beta:={(1-\mixc)\imL}/{(\dualReL)}$. Also, using this matrix, define a weighted directed graph \(\algoGraph =(\algoVert, \algoEdge ,\bm g)\) as 
\begin{align*}
\algoVert &:=\{\xx :u_\xx\ne0 \}\\
\algoEdge &:=\{\yxedge:\xx,\yy\in \algoVert,\ \W_{\yy\xx}>0, u_\yy\ne \pm u_\xx, M_{\yy\xx}\ge0\}\\
\oyxedge&:=
\begin{cases}
\ln\dfrac{\W_{\yy\xx}M_{\yy\xx}}{-\W_{\xx\yy}M_{\xx\yy}}
& M_{\yy\xx}>0,\W_{\xx\yy}>0\\
-\infty
& \text{otherwise}
\end{cases} \quad \forall\yxedge \in \algoEdge
\end{align*}

Second, find a cycle $c$ in $\algoGraph$ with maximum mean weight $\sum_{\yxedge \in c} \oyxedge/n_c$ using a standard graph-theoretic algorithm~\cite{dasdan1998faster}. 
If the maximum-mean-weight cycle $c$ has nonnegative weight $\sum_{\yxedge \in c} \oyxedge\ge 0$, take it as the certifying cycle. This cycle will only have reversible edges, since  $\algoGraph$ assigns weight $-\infty$ to irreversible edges.

Third, if the previous step did not return a cycle $c$ with nonnegative weight, then the certifying cycle must contain an irreversible edge.
To find this cycle, identify the strongly connected components of $\algoGraph$, take any irreversible edge $\yxedge$ whose endpoints $\xx,\yy$ lie in the same component, and concatenate $\yxedge$ with any simple directed path from $\yy$ back to $\xx$ within that component.

The proof of Theorem~\ref{thm:multicyclic-main} in SM\ref{app:proof-main}~\cite{SupplementalMaterial} uses a graph-theoretic argument to show that the above procedure always returns an admissible certifying  cycle. Once this cycle is found, its winding number can be computed using
Eq.~\eqref{eq:winding-multicycle}.

\bibliography{main}

\clearpage
\appendix
\renewcommand{\appendixname}{SM}
\onecolumngrid
\begin{center}
{\Large\bfseries SUPPLEMENTAL MATERIAL\par}
\end{center}
\vspace{10pt}
\twocolumngrid
\let\section\standardsection
\let\subsection\standardsubsection
\setcounter{secnumdepth}{2}
\setcounter{section}{0}
\setcounter{equation}{0}
\renewcommand{\thesection}{\arabic{section}}
\renewcommand{\thesubsection}{\thesection.\arabic{subsection}}
\renewcommand{\theequation}{S\arabic{equation}}
\providecommand{\theHequation}{\theequation}
\renewcommand{\theHequation}{S\arabic{equation}}
\makeatletter
\@removefromreset{equation}{section}
\renewcommand{\p@subsection}{}
\makeatother

\section{UNICYCLIC RESULTS: DISCRETE}
\label{app:unicyclic}
We show that the unicyclic winding number can be equivalently defined in terms of a (left or right) eigenvector. 

We begin with a helpful lemma. In the following, we refer to the cycle affinity $\Aff:=\sum_{\xx=1}^n\ln({\fwd_\xx}/{\bwd_\xx})\in (-\infty,\infty]$, where $\Aff=\infty$ if any $\bwd_\xx=0$. We also use the convention that $\sign \infty=1$ and $e^{-\infty}=0$.

\begin{lemma}
\label{lem:upper-half-plane}
Given $\fwdVec\in\RR_{>0}^n, \bwdVec\in\RR_{\ge 0}^n,\bm u\in\mathbb{C}^n\setminus\{\zz\}$ and nonreal $\lambda \in\mathbb{C}$, suppose that
for all $\xx=1,\dots,n$,
\begin{align}
\lambda u_\xx
=
\fwd_\xx u_{\xx+1}+\bwd_\xx u_{\xx-1}-(\fwd_\xx+\bwd_\xx)u_\xx \,.
\label{eq:baas-m4}
\end{align}
Let $\Aff:=\sum_{\xx=1}^n\ln({\fwd_\xx}/{\bwd_\xx})$. 
Then, for all $\xx$,
\begin{enumerate}[itemsep=0pt]
\item[A.] $u_\xx \ne 0$
\item[B.] $\im({u_{\xx+1}}/{u_\xx})\ne 0$
\item[C.] $\Aff \ne 0$ and $\sign[\im({u_{\xx+1}}/{u_\xx})]=\sign(\Aff \imL)$
\end{enumerate}
\end{lemma}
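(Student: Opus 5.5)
The plan is to turn the eigenvector equation~\eqref{eq:baas-m4} into a first-order recursion for a ``phase current'' around the cycle, and then run a sign-propagation argument. By complex conjugation (the rates are real), $(\lambda^*,\bm u^*)$ also satisfies~\eqref{eq:baas-m4}. This flips the sign of $\imL$ and of every $\im(u_{\xx+1}/u_\xx)$, so it suffices to treat $\imL>0$.

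First, define $J_\xx:=\im(u_{\xx+1}u_\xx^*)$. Multiply~\eqref{eq:baas-m4} by $u_\xx^*$ and take imaginary parts. The diagonal term is real, and $\im(u_{\xx-1}u_\xx^*)=-J_{\xx-1}$, so
\begin{align*}
\fwd_\xx J_\xx=\bwd_\xx J_{\xx-1}+\imL\,|u_\xx|^2 \qquad\forall \xx .
\end{align*}
Since $\im(u_{\xx+1}/u_\xx)=J_\xx/|u_\xx|^2$ whenever $u_\xx\neq 0$, the lemma reduces to showing that all $u_\xx\neq0$ and that all $J_\xx$ are nonzero with common sign $\sign\Aff$.

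Second, the propagation step. Because $\fwd_\xx>0$, $\bwd_\xx\ge0$ and $\imL>0$, the recursion gives $J_{\xx-1}\ge 0\Rightarrow J_\xx\ge 0$. Going around the cycle, either all $J_\xx\ge0$ or all $J_\xx<0$.

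\emph{Case all $J_\xx\ge0$.} To prove A, suppose $u_\xx=0$. Then $J_\xx=J_{\xx-1}=0$. The recursion at $\xx-1$ reads $0=\fwd_{\xx-1}J_{\xx-1}=\bwd_{\xx-1}J_{\xx-2}+\imL|u_{\xx-1}|^2$. Both terms on the right are nonnegative, so $u_{\xx-1}=0$. Iterating backwards gives $\bm u=\zz$, a contradiction. Hence all $u_\xx\ne0$, and then $\fwd_\xx J_\xx\ge \imL|u_\xx|^2>0$, so every $J_\xx>0$. It follows that $\fwd_\xx J_\xx>\bwd_\xx J_{\xx-1}$ for every $\xx$. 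If some $\bwd_\xx=0$, then $\Aff=\infty>0$ by convention. Otherwise, multiply these strict inequalities around the cycle and cancel $\prod J_\xx>0$ to get $\prod_\xx \fwd_\xx/\bwd_\xx>1$, i.e.\ $\Aff>0$.

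\emph{Case all $J_\xx<0$.} Here A and B are immediate, since $J_\xx\ne0$ forces $u_\xx\ne0$. The recursion gives $\bwd_\xx|J_{\xx-1}|=\fwd_\xx|J_\xx|+\imL|u_\xx|^2>\fwd_\xx|J_\xx|>0$. This forces every $\bwd_\xx>0$, and the product around the cycle yields $\prod\bwd_\xx/\fwd_\xx>1$, i.e.\ $\Aff<0$.

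In both cases $\Aff\ne0$ and $\sign J_\xx=\sign\Aff=\sign(\Aff\imL)$, which proves C; B then follows. The only delicate point is A, namely excluding zero entries in the nonnegative case. It is handled by the backward propagation above, and the irreversible case $\bwd_\xx=0$ needs the convention $\sign\infty=1$.
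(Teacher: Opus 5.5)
Your proof is correct and follows essentially the same route as the paper: you use the same quantity $J_\xx=\im(u_{\xx+1}u_\xx^*)$, the same recursion $\imL\,|u_\xx|^2=\fwd_\xx J_\xx-\bwd_\xx J_{\xx-1}$, and the same product around the cycle to determine $\sign\Aff$. The only difference is organizational. You reduce to $\imL>0$ by conjugation, then get a common sign by forward propagation of nonnegativity plus a backward argument excluding zeros. The paper instead treats both signs of $\imL$ at once, observing that any zero or sign change in the cyclic sequence $J_\xx$ would force both $\imL\le 0$ and $\imL\ge 0$.
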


\begin{proof}
For convenience, we define $\gamma_\xx:=\im(u_{\xx+1}u_\xx^*)$. Multiplying both sides of
Eq.~\eqref{eq:baas-m4} by $u_\xx^*$ and taking imaginary parts gives
\begin{align}
\im\lambda\,|u_\xx|^2=\fwd_\xx\gamma_\xx-\bwd_\xx\gamma_{\xx-1}.
\label{eq:dsaf2}
\end{align}
Observe that at least one $\gamma_\xx\ne 0$, since otherwise Eq.~\eqref{eq:dsaf2} would imply
$\bm u=\zz$, contradicting the assumption that $\bm u\ne \bm 0$.

Next, we show that $\gamma_\xx \ne 0$ for all $\xx$ and that all $\gamma_\xx$ have the same sign. 
Suppose that either some $\gamma_\xx=0$ or not all $\gamma_\xx$ share the same sign. Then,
cyclicity and $\bm \gamma \ne \zz$ imply there is some index $\yy$ such that either 
$\gamma_{\yy-1}>0,\gamma_\yy\le 0$ or $\gamma_{\yy-1}\ge 0,\gamma_\yy< 0$. In either case, $\fwd_\yy\gamma_\yy-\bwd_\yy\gamma_{\yy-1} \le 0$, implying that $\imL \le 0$. 
Cyclicity and $\bm \gamma \ne \zz$ also imply there must be some index $\yy$ such that either 
$\gamma_{\yy-1}\le 0,\gamma_\yy> 0$ or $\gamma_{\yy-1}< 0,\gamma_\yy\ge 0$. In either case, $\fwd_\yy\gamma_\yy-\bwd_\yy\gamma_{\yy-1}\ge 0$, implying that $\imL \ge 0$. Since $\imL \ne 0$, we have arrived at a contradiction.

To prove Claim~A, note that $\gamma_\xx \ne 0$ for all $\xx$ implies that $u_\xx\ne0$ for all $\xx$. Claim~B follows since $\im(u_{\xx+1}/u_\xx)={\gamma_\xx}/{|u_\xx|^2}\ne 0$. To prove Claim~C, we consider two cases. If $\imL > 0$, then $(\bwd_\xx/\fwd_\xx)\gamma_{\xx-1}<\gamma_\xx$ for each $\xx$. Iterating around the cycle implies $e^{-\Aff}\gamma_\xx<\gamma_\xx$, therefore $0\ne \sign \gamma_\xx=\sign \Aff=\sign (\Aff \imL)$. If $\imL <0$, then $(\bwd_\xx/\fwd_\xx)\gamma_{\xx-1}>\gamma_\xx$. Iterating around the cycle implies $e^{-\Aff}\gamma_{\xx}>\gamma_\xx$, so $0\ne \sign \gamma_{\xx}=-\sign \Aff=\sign (\Aff \imL)$.
\end{proof}

\begin{proposition}
\label{prop:m}
Consider an irreducible unicyclic generator $\W \in \RR^{n\times n}$ with nonreal eigenvalue $\lambda$ and left and right eigenvectors $\bm u,\bm v$. Then, the winding number
obeys 
\begin{align*}
\winding= \left\vert\frac{1}{2\pi} \sum_{\xx=1}^n\Arg\frac{u_{\xx+1}}{u_{\xx}} \right\vert_n=\left\vert \frac{1}{2\pi} \sum_{\xx=1}^n\Arg\frac{v_{\xx+1}}{v_{\xx}} \right\vert_n
\end{align*}
and $\winding\in \{1,\dots,\lceil {n}/{2}\rceil - 1\}$. 
\end{proposition}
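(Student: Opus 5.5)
The range statement follows directly from Lemma~\ref{lem:upper-half-plane}. For the equality of the left and right windings, the plan is to transport the right eigenvector to a left eigenvector of the time-reversed cycle, and then compare the two windings by a continuity (homotopy) argument anchored at the uniform cycle. The homotopy step is the one I have not worked out.

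\emph{Range of $\winding$ for the left eigenvector.} Assume first $\imL>0$ and $\Aff>0$; the other sign combinations follow by conjugating $\lambda$ and/or relabelling $\xx\mapsto-\xx$, and both operations leave $\vert\cdot\vert_n$ unchanged. The left eigenvector equation of~\eqref{eq:unicyclicW} is exactly Eq.~\eqref{eq:baas-m4}. By Lemma~\ref{lem:upper-half-plane}, every $u_\xx\ne0$ and every ratio $u_{\xx+1}/u_\xx$ has $\phase_\xx:=\Arg(u_{\xx+1}/u_\xx)\in(0,\pi)$. Since $\prod_\xx u_{\xx+1}/u_\xx=1$, we have $\sum_\xx\phase_\xx=2\pi k$ for some integer $k$. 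The constraint $0<\sum_\xx\phase_\xx<n\pi$ then forces $1\le k<n/2$. Hence $\vert k\vert_n=k\in\{1,\dots,\lceil n/2\rceil-1\}$.

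\emph{Right eigenvector via time reversal.} Let $\bm\pi>0$ be the stationary distribution, $D=\operatorname{diag}(\bm\pi)$, and $\uniWadj:=D^{-1}\W^\top D$. This $\uniWadj$ is again an irreducible unicyclic generator, with the orientation reversed and affinity $-\Aff$. If $\W\bm v=\lambda\bm v$, then $\uadj:=D\bm v$ satisfies $\uadj^\top\uniWadj=\lambda\uadj^\top$, so $\uadj$ is a left eigenvector of $\uniWadj$. Because $\pi_\xx>0$, we have $\Arg(\uadjxx_{\xx+1}/\uadjxx_\xx)=\Arg(v_{\xx+1}/v_\xx)$. After reversing the index direction, $\uniWadj$ has the form~\eqref{eq:baas-m4}, so the first step applies to $\bm v$. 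Thus the right winding is well defined and lies in the same range.

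\emph{Equality of the two windings.} I would use continuity. Connect $\W$ to the uniform cycle with the same $n$ by a path of irreducible unicyclic generators $\W(s)$, $s\in[0,1]$, along which the affinity keeps its sign, and follow the eigenvalue $\lambda(s)$ continuously together with normalized left and right eigenvectors. By the first two steps, as long as $\lambda(s)$ stays nonreal, every ratio stays strictly inside an open half-plane. The principal arguments therefore vary continuously, and each integer-valued winding is constant along the path. At the uniform cycle, the left and right eigenvectors are the Fourier modes $e^{\pm\ii2\pi k\xx/n}$, so both windings equal $\vert k\vert_n$. This gives equality at $s=0$, and hence for the original $\W$.

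\emph{Main obstacle.} The difficulty is guaranteeing that $\lambda(s)$ never hits the real axis or collides with another eigenvalue along the path. My first attempt is to interpolate the rates and simultaneously rescale them so that $\Aff(s)$ stays fixed and positive. I would then argue that a nonreal eigenvalue reaching the real axis must do so through a collision with its conjugate. At such a collision the eigenvector ratios would be real, yet they must stay in the open upper half-plane by Lemma~\ref{lem:upper-half-plane}, which rules the collision out by a limiting argument.

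If that argument fails, the fallback is to make the winding a function of $\lambda$ alone. The eigenvector equation makes $q_\xx=u_{\xx+1}/u_\xx$ obey a Möbius recursion, $q_\xx=\bigl(\lambda+\fwd_\xx+\bwd_\xx-\bwd_\xx/q_{\xx-1}\bigr)/\fwd_\xx$, which preserves the upper half-plane. The total phase $\sum_\xx\phase_\xx$ would then be read off as the rotation number of the composed transfer map at $\lambda$. The right eigenvector gives the transposed transfer matrix, which has the same trace and determinant and hence the same rotation number.
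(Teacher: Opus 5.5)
Your range argument for $\bm u$ matches the paper's, and so does the idea of moving $\bm v$ to a left eigenvector of the time-reversed cycle. The step that carries the statement, equality of the two windings, is not proved, and neither of your fixes works.

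\emph{The limiting argument.} It treats the open half-plane as if it were closed. Lemma~\ref{lem:upper-half-plane} places the ratios in the open upper half-plane for each nonreal $\lambda$, but with no uniform margin. So as $\im\lambda(s)\to0$ the ratios may converge to real numbers, and there is no contradiction. Conjugate collisions do occur along paths of unicyclic generators at fixed affinity. For the irreversible $3$-cycle with forward rates $(1,1,1)$, the nonzero eigenvalues are the roots of $\lambda^2+3\lambda+3$, a complex pair. With rates $(1,1,10)$ they are the roots of $\lambda^2+12\lambda+21$, which are real, and $\Aff=\infty$ throughout. Small reverse rates give the same picture at a fixed finite affinity. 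So you would need, for each given $\W$ and $\lambda$, a special path along which $\lambda(s)$ stays nonreal, and the proposal does not construct one.

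\emph{The rotation-number fallback.} For every eigenvalue $\lambda$, the monodromy $T_n\cdots T_1$ of your transfer recursion has eigenvalue $1$ and determinant $\prod_\xx \bwd_\xx/\fwd_\xx=e^{-\Aff}$. Its trace and determinant are therefore the same for all eigenvalues, including uniform-cycle modes with different windings. The integer $k$ is the lift of the total phase, which is exactly what is lost when you form the product. Hence ``same trace and determinant'' cannot force equal windings.

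\emph{The missing idea.} What you need is a characterization of the winding by spectral data that is invariant under time reversal. The paper gets this from Theorem~\ref{thm:ordered-winding}. Strict sign regularity of $e^{t\W}$ in odd orders does two things. It forces the eigenvalues into strictly separated blocks of two by real part. It also fixes the cyclic sign variation of every zero-free vector in $\operatorname{span}_\RR\{\re\bm u,\im\bm u\}$ at $2\lfloor k/2\rfloor$. Combined with the monotone lifted phase from Lemma~\ref{lem:upper-half-plane}, this gives $\winding=\lfloor k/2\rfloor$. Since $\W$ and $\uniWadj$ have the same spectrum, $\lambda$ has the same index in both, and applying the theorem to $\bm u$ and $\uadj$ gives equal windings. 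This is also what a continuity argument would have to use: along a path, the protected object is the block index, fixed by the strict gaps, not the nonreality of one eigenvalue.

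\emph{A smaller slip.} Your similarity runs in the wrong direction. The time-reversed generator is $\uniWadj=D\W^\top D^{-1}$, with left eigenvector $\uadj=D^{-1}\bm v$. Your $D^{-1}\W^\top D$ has column sums $\pi_\xx\sum_\yy \W_{\xx\yy}/\pi_\yy$, which do not vanish in general. So $D\bm v$ does not satisfy an equation of the form~\eqref{eq:baas-m4}, and Lemma~\ref{lem:upper-half-plane} does not apply to it as written. This is easy to repair, since positive diagonal rescaling does not change phases.
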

\begin{proof}
Let $k := (2\pi)^{-1} \sum_{\xx=1}^n\Arg(u_{\xx+1}/u_{\xx})$. 
Eq.~\eqref{eq:baas-m4} in Lemma~\ref{lem:upper-half-plane} is the eigenvector equation for the left eigenvector, $ \W^\top \bm u = \lambda \bm u$. Claim~C of that lemma implies that the eigenvector ratios lie in a common half-plane: either $\Arg({u_{\xx+1}}/{u_{\xx}})\in (0,\pi)$ or $\Arg({u_{\xx+1}}/{u_{\xx}})\in (-\pi,0)$ for all $\xx$. Therefore, either $k \in (0,n/2)$ or $k \in (-n/2,0)$. We also have the identity 
$$1=\prod_\xx \frac{u_{\xx+1}}{u_\xx}=e^{\ii\sum_\xx \Arg(u_{\xx+1}/u_\xx)}=e^{\ii 2\pi k},$$ therefore $k$ is an integer. 
Combining these facts implies $\winding=\vert k\vert_n= \min\{ k\bmod n,(n-k)\bmod n \}\in\{1,\dots,\lceil n/2\rceil - 1\} $. 

Next, we consider the right eigenvector $\bm v$. Since $\W$ is irreducible,
the steady state $\ppss$ has full support. We then define $\Pi:=\operatorname{diag}(\ppss)$ and the adjoint rate matrix $\uniWadj := \Pi \W^\top \Pi^{-1}$. $\uniWadj$ is a valid rate matrix and is related to $\W$ by a matrix transformation, thus it has the same eigenvalues and every left eigenvector $\uadj$ of $\uniWadj$ can be expressed as $\uadj=\Pi^{-1} \bm v$, where $\bm v$ is a right eigenvector of $\W$ for the same eigenvalue. 

Suppose that, once the eigenvalues are sorted by real part, eigenvalue $\lambda$ has index $k$. Applying Theorem~\ref{thm:ordered-winding} (proved below) to both $\W$ and $\uniWadj$ implies $\winding = \lfloor k/2\rfloor =\madj $, where $\madj$ is the winding number defined using vector $\uadj$. Since
\[
\Arg\frac{\uadjxx_{\xx+1} }{\uadjxx_\xx } =\Arg\frac{v_{\xx+1}/\ppssx{\xx+1} }{v_{\xx} /\ppssx{\xx}} = \Arg\frac{v_{\xx+1} }{ v_{\xx} } \,,
\]
we may equivalently define $\madj$ using $\bm v$, proving the result.
\end{proof}

Finally, we prove that the unicyclic winding number is equal to the sorted eigenvalue index. 

\begin{reptheoremnum}{thm:ordered-winding}
\orderedWindingStmt{}
\end{reptheoremnum}

\begin{proof}
We introduce some relevant definitions. A matrix is called \emph{Metzler} if all of its off-diagonal entries are nonnegative. It is called \emph{strictly sign regular} of order $r$, written SSR$_r$, if all of its minors of order $r$ are non-zero
and have the same sign~\cite[Def.~1]{BenAvrahamSharonZaraiMargaliot2020}. Also, for any vector $\bm {x}\in\RR^n$ with no zero entries, the \emph{sign variation} $s(\bm x)$ is the
number of sign changes in the linear sequence $(x_1,\dots,x_n)$. The
\emph{cyclic sign variation} $s_c(\bm x)$ is the number of sign
changes in the cyclic sequence $(x_1,\dots,x_n,x_1)$. The cyclic sign variation $s_c(\bm x)$ is even and equal to either $s(\bm x)$ or $s(\bm x)+1$.

An irreducible unicyclic rate matrix $\W \in \RR^{n\times n}$ is Metzler and can be ordered so that it is tridiagonal with corners. Then, Thm.~5 from Ref.~\cite{BenAvrahamSharonZaraiMargaliot2020} shows that the transition matrix $A(t):=e^{t\W}$ for any $t>0$ obeys the \emph{strong cyclic variation diminishing property} (SCVDP). For the definition of SCVDP, see~\cite[Def.~2 and Eq.~(14)]{BenAvrahamSharonZaraiMargaliot2020}. Because $A(t)$ is SCVDP, Thm.~2 from Ref.~\cite{BenAvrahamSharonZaraiMargaliot2020} implies that it is SSR$_r$ for all odd $r\in\{1,\dots,n\}$.

Let us write the eigenvalues of $A(t)$ in terms of the eigenvalues of $\W$ as $e^{t\lambda_k}$. 
Since $A(t)$ is SSR$_r$ for all odd $r$, Thm.~3(ii) from Ref.~\cite{AlseidiMargaliotGarloff2019} implies its eigenvalues can be ordered into strictly sorted pairs, 
\[
1=\vert e^{t\lambda_1} \vert > \vert e^{t\lambda_2} \vert \ge \vert e^{t\lambda_3} \vert > \vert e^{t\lambda_4} \vert \ge \vert e^{t\lambda_5} \vert> \dots 
\]
Since $\vert e^{t\lambda_k} \vert=e^{t\re\lambda_k}$, this is equivalent to
\begin{align}
0=\lambda_1 > \re\lambda_2 \ge \re\lambda_3 > \re \lambda_4\ge \re \lambda_5 > \cdots 
\label{eq:ordegapp}
\end{align}

Consider any nonreal eigenvalue $\lambda_k$. Its conjugate eigenvalue $\lambda_k^*$ has the same real part, so Eq.~\eqref{eq:ordegapp} implies that $(\lambda_k,\lambda_k^*)$ belong to block $\{\lambda_{2\evblockix},\lambda_{2\evblockix+1}\}$ with $\evblockix=\lfloor k/2\rfloor$.

Let $\bm u$ be a left eigenvector of $\W$ associated with $\lambda_k$, and note that it is also a left eigenvector of the matrix $A(t)$ associated with the $k$-th eigenvalue for $k\in \{2\evblockix,2\evblockix+1\}$. Since $A(t)$ is SSR$_r$ for all odd $r$, Thm.~3(v) from Ref.~\cite{AlseidiMargaliotGarloff2019} implies that every
$\bm x\in \operatorname{span}_{\RR}\{\re \bm u,\im \bm u\}$ without 
zero entries has $s(\bm x)\in \{2\evblockix-1,2\evblockix\}$. Since $s_c(\bm x)$
is even and equal either to $s(\bm x)$ or to $s(\bm x)+1$, we have 
$s_c(\bm x)=2\evblockix=2\lfloor k/2\rfloor$.

Next, we take $\bm \evblockfunc = \re \bm u$ and assume that $\evblockfunc_\xx \ne 0$ for all $\xx$. This assumption is made without loss of generality, since we may choose the eigenvector's phase $\bm u \mapsto e^{\ii \psi} \bm u$ to satisfy it (since $u_\xx\neq 0$ for all $\xx$ by
Lemma~\ref{lem:upper-half-plane}, only finitely many values of $\psi\in [-\pi,\pi]$ are excluded). 
We also introduce the lifted phase
\[
\Theta_1=\Arg u_1 \qquad \Theta_{\xx+1} = \Theta_{\xx} + \Arg\frac{u_{\xx+1}}{u_\xx} \,.
\]

The cyclic sign variation $s_c(\bm \evblockfunc)$ is the number of times that $\evblockfunc_{\xx+1}$ and $\evblockfunc_{\xx}$ have opposite signs. Noting that $\evblockfunc_\xx = |u_\xx|\cos \Theta_\xx$, it is also the number of times that 
\(\cos\Theta_\xx\) and \(\cos\Theta_{\xx+1}\) have opposite signs. This implies that $s_c(\bm \evblockfunc)$ is equal to the number of ``crossings'', where the open interval with endpoints \(\Theta_\xx\) and \(\Theta_{\xx+1}\) contains a point \(\pi/2+q\pi\) for some \(q\in\mathbb Z\). 
From Lemma~\ref{lem:upper-half-plane}, all 
$\im (u_{\xx+1}/u_\xx) \ne 0$ and have the same sign, therefore either all increments of $\Theta_{\xx+1}-\Theta_\xx$ lie in $ (0,\pi)$ or they all lie in $(-\pi,0)$.
Therefore, \(\Theta_\xx\) changes monotonically and each interval has magnitude less than $\pi$, so 
each interval can contain at most one such crossing, so the cyclic sign variation is proportional to the magnitude of the total increment:
\begin{align*}
s_c(\bm \evblockfunc)=\frac{1}{\pi}\vert \Theta_{n+1}-\Theta_1\vert \,.
\end{align*}
At the same time, we have 
\begin{align*}
\frac{1}{2\pi}\vert \Theta_{n+1}-\Theta_1\vert &= \Big\vert \frac{ \Theta_{n+1}-\Theta_1}{2\pi}\Big\vert_n\\
&=\Big\vert\frac{1}{2\pi}\sum_{\xx=1}^n\Arg\frac{u_{\xx+1}}{u_\xx} \Big\vert_n=\winding\,,
\end{align*}
where we first used $\vert \Theta_{n+1}-\Theta_1\vert \le n\pi$ and $\vert k\vert_n=|k|$ for $k\le n/2$. 
Combining $s_c(\bm \evblockfunc)=2
\winding$ and $s_c(\bm \evblockfunc)=2\lfloor k/2\rfloor$ gives the result.
\end{proof}

As mentioned in the main text, Theorem~\ref{thm:ordered-winding} can be understood from the perspective of topological invariants. The strict inequalities~\eqref{eq:ordegapp} between consecutive blocks guarantee that no eigenvalue can move between blocks under continuous deformation within the unicyclic class. In addition, the block index determines the eigenvector's cyclic sign variation, which in turn determines the winding number of any complex eigenvalue. Together, these facts imply that the winding number is topologically protected in unicyclic systems.

\section{UNICYCLIC RESULTS: CONTINUOUS}
\label{app:fp}

\newcommand{\rhoss}{\rho_{\mathrm{ss}}}
\newcommand{\jss}{j_{\mathrm{ss}}}

Consider the Fokker--Planck equation on the unit circle, %
\begin{align}
\partial_t\rho=\mathcal L\rho
:=-\partial_x[f(x)\rho(x,t)]+\partial_x^2[Q(x)\rho(x,t)],
\label{eq:app-fp-cont}
\end{align}
where $x \in [0,1]$ and all coefficients are periodic with period one. We assume that $Q(x)>0$ everywhere and that the drift $f(x)$ and diffusion $Q(x)$ fields are smooth. We define the continuous cycle affinity,
\begin{align}
\Aff:=\int_0^1\frac{f(x)}{Q(x)}\,dx,
\label{eq:app-cont-aff}
\end{align}
and we choose the orientation of the cycle so that $\Aff\ge0$.

In Theorem~\ref{thm:cont-bound-fp}, we show that the eigenvalues of this Fokker--Planck operator obey inequality~\eqref{eq:unicyclic-no-n-bound}, once it is stated using continuous cycle affinity~\eqref{eq:app-cont-aff} and the eigenfunction winding number, defined as
\begin{align}
\windFunc(u)
:=
\frac{1}{2\pi}\left|
\int_0^1
\im\frac{\partial_x u(x)}{u(x)}\,dx
\right| \,,
\label{eq:app-cont-winding-def}
\end{align}
which is well-defined if $u$ is nonvanishing everywhere. In Theorem~\ref{thm:cont-index}, we prove the direct relationship between the eigenfunction winding number and the eigenvalue index, analogous to Theorem~\ref{thm:ordered-winding}. 

We will refer to the adjoint generator, defined as 
\begin{align}
\mathcal L^\dagger u:=Q(x)\partial_x^2 u +f(x)\partial_x u\, .
\label{eq:app-adjoint}
\end{align}
Note that $\mathcal L$ and $\mathcal L^\dagger$ have the same set of eigenvalues.

We use the term \emph{eigenpair} to refer to an eigenvalue and eigenfunction of the forward Fokker--Planck operator (e.g., $(\lambda,v)$ which satisfies $\lambda v=\mathcal{L}v$) or the adjoint generator (e.g.,  $(\lambda,u)$ which satisfies $\lambda u=\mathcal{L}^\dagger u$).  As shown in Theorem~\ref{thm:cont-index}, the winding number can be equivalently computed using either the forward or adjoint eigenfunction.

\begin{theorem}
\label{thm:cont-bound-fp}
For the Fokker--Planck equation~\eqref{eq:app-fp-cont}, every nonreal eigenpair $(\lambda,u)$ has a well-defined winding number $\windFunc(u)\in \{1,2,3,\dots\}$ and obeys
\begin{align}
\frac{|\imL|}{-\re\lambda}
\le
\frac{\Aff}{2\pi\windFunc(u)}
\le
\frac{\Aff}{2\pi}\,.
\label{eq:app-cont-bound-final}
\end{align}
The first bound is tight for uniform drift $f$ and diffusion $Q$.
\end{theorem}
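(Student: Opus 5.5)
I will work with the adjoint eigenfunction $u$, solving $\lambda u=Qu''+fu'$, and assume $\imL>0$; the conjugate case follows by complex conjugation. This is the continuum analogue of the End Matter derivation.

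\emph{Step 1: nonvanishing and monotone phase.} Set $\gamma:=\im(u' u^*)$. Multiplying the eigen-equation by $u^*$ and taking imaginary parts gives
\[
\imL\,|u|^2=Q\gamma'+f\gamma .
\]
Write $\Phi(x):=\int_0^x (f/Q)\,dy$, so that $\Phi(1)=\Aff$. The equation becomes
\[
(e^{\Phi}\gamma)'=e^{\Phi}\,\imL\,|u|^2/Q .
\]
If $u\not\equiv0$, the right side is $\ge 0$ and not identically zero, so $e^{\Phi}\gamma$ strictly increases over a period. Periodicity of $\gamma$ then gives $e^{\Aff}\gamma(0)>\gamma(0)$. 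Hence $\Aff>0$ and $\gamma(0)>0$, and $\gamma(x)\ge e^{-\Phi(x)}\gamma(0)>0$ for all $x$, mirroring Lemma~\ref{lem:upper-half-plane}. It follows that $u$ never vanishes. Also $\im(u'/u)=\gamma/|u|^2>0$, so the lifted phase $\Theta$ of $u$ strictly increases and $\windFunc(u)=(\Theta(1)-\Theta(0))/2\pi$ is a positive integer.

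\emph{Step 2: express the affinity via the log-derivative.} Let $z:=u'/u=a+\ii b$ with $b>0$. Dividing the eigen-equation by $u$ gives the Riccati equation
\[
\lambda=Q(z'+z^2)+fz .
\]
Since $\int_0^1 z'\,dx$ vanishes after integrating, I will eliminate $z'$ by a weighted combination, analogous to multiplying by $e^{\ii\psi}$ in the End Matter. Write $\lambda=-|\lambda|e^{-\ii\psi}$. Then
\[
e^{\ii\psi}\lambda/Q=e^{\ii\psi}(z'+z^2)+e^{\ii\psi}(f/Q)z .
\]
Taking imaginary parts, $\im(e^{\ii\psi}\lambda)=-|\lambda|\sin 0$-type simplifications make the left side real, which lets me solve for $f/Q$ pointwise:
\[
\frac{f}{Q}=\frac{\im\!\big(e^{\ii\psi}\lambda\big)/Q-\im\!\big(e^{\ii\psi}(z'+z^2)\big)}{\im\!\big(e^{\ii\psi}z\big)} .
\]
This is messy. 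The cleaner route is simpler. Take imaginary parts of $\lambda/Q=z'+z^2+(f/Q)z$ directly:
\[
\imL/Q=b'+2ab+(f/Q)\,b .
\]
Dividing by $b>0$ and integrating, $\int b'/b=0$ by periodicity, so
\[
\Aff=\int_0^1\frac{\imL}{Q\,b}\,dx-2\int_0^1 a\,dx .
\]
Since $a=(\ln|u|)'$ integrates to zero, this reduces to
\[
\Aff=\imL\int_0^1\frac{dx}{Q\,b}, \qquad \int_0^1 b\,dx=2\pi\winding .
\]

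\emph{Step 3: bound $-\reL$ and apply Jensen.} The real part gives
\[
\reL/Q=a'+a^2-b^2+(f/Q)\,a .
\]
I need an inequality of the form $-\reL\ge$ something that couples $\int Qb^2$ with $1/(Qb)$. I would first try an energy identity: multiply the eigen-equation by $\rho_{ss}u^*$ and integrate. The stationary current then gives
\[
-\reL\int\rho_{ss}|u|^2=\int\rho_{ss}Q|u'|^2 ,
\]
together with a companion identity for $\imL$ involving $\jss\int\im(u^*u')$. Writing $u=|u|e^{\ii\Theta}$ and using $\jss=\rho_{ss}(f-Q\,\partial\ln\rho_{ss})$, the ratio $\imL/(-\reL)$ should be bounded by a Cauchy--Schwarz argument of the form
\[
\left(\int b\right)^2\le\int Qb^2\rho\cdot\int\frac{1}{Q\rho} .
\]
Combined with the constant-current identity $\jss\int_0^1 dx/(Q\rho_{ss}e^{\Phi})$ and related relations, this should reduce the affinity to $\Aff\ge 2\pi\winding\,|\imL|/(-\reL)$.

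The main obstacle is finding the right weighted Cauchy--Schwarz (continuum Jensen) step so that the bound comes out exactly $\Aff/(2\pi\winding)$. As a fallback, I would discretize the diffusion into uniform-grid unicyclic generators with $n\to\infty$. Applying Theorem~\ref{thm:unicyclic-bounds} in its weakened form~\eqref{eq:unicyclic-no-n-bound}, the discrete affinity converges to $\int_0^1 f/Q\,dx$. Eigenpairs converge by standard spectral approximation, and winding numbers are eventually equal to $\windFunc(u)$ because $u$ is nonvanishing. Tightness follows from the explicit Fourier modes for constant $f$ and $Q$, which give $\lambda=-Q(2\pi k)^2+\ii f\,2\pi k$ and hence $\imL/(-\reL)=f/(2\pi kQ)=\Aff/(2\pi k)$. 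The second inequality in~\eqref{eq:app-cont-bound-final} follows from $\windFunc(u)\ge1$.
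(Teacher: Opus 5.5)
Your Step~1 is correct and matches the paper's integrating-factor argument. The identity $\Aff=\im\lambda\int_0^1 dx/(Qb)$ from Step~2 is also correct. However, the proposal never proves the inequality itself.

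\emph{Main route.} Given that identity, the theorem (for $\im\lambda>0$) is \emph{equivalent} to $-\re\lambda\int_0^1 dx/(Qb)\ge\int_0^1 b\,dx$. Step~3 only says that some weighted Cauchy--Schwarz built from $\rho_{\mathrm{ss}}$ and $j_{\mathrm{ss}}$ ``should'' give this, and no such inequality is exhibited. It is also not clear that one exists with the sharp constant. The current $j_{\mathrm{ss}}$ is tied to $\Aff$ only through exponential relations, e.g.\ $Q\rho_{\mathrm{ss}}(0)(1-e^{-\Aff})=j_{\mathrm{ss}}\int_0^1 e^{-\Phi}dx$. The target, by contrast, is linear in $\Aff$ and saturated exactly by uniform fields.

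\emph{Fallback.} The discretization route is legitimate in principle, but as written it delegates all the work to unproved claims. For this non-self-adjoint operator you would need to show that $\lambda$ is a limit of discrete eigenvalues whose suitably normalized eigenvectors converge \emph{uniformly} to $u$, so that the discrete winding numbers stabilize at $\windFunc(u)$. ``Standard spectral approximation'' does not give this for free.

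\emph{Forward eigenfunctions.} Your main route treats only adjoint eigenfunctions, but the theorem also covers $\mathcal L v=\lambda v$. The paper handles this case by setting $u=v/\rho_{\mathrm{ss}}$. This is an adjoint eigenfunction of the conjugate process with drift $f-2j_{\mathrm{ss}}/\rho_{\mathrm{ss}}$, affinity $-\Aff$, and the same winding number.

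The missing idea is exactly the step you abandoned as ``messy.'' Write $\lambda=-|\lambda|e^{-\ii\psi}$ with $\psi\in(0,\pi/2)$, which uses $\re\lambda<0$ from your energy identity. Then $e^{\ii\psi}\lambda=-|\lambda|$ is real, so the $\lambda$ term drops out after taking imaginary parts. Setting $D_\psi:=\im(e^{\ii\psi}z)=a\sin\psi+b\cos\psi$, your formula becomes the pointwise identity
\[
\frac{f}{Q}=-\frac{D_\psi'}{D_\psi}+b\,H_\psi\!\Big(\frac{a}{b}\Big),\qquad H_\psi(r)=\frac{(1-r^2)\sin\psi-2r\cos\psi}{\cos\psi+r\sin\psi}.
\]
This identity requires $D_\psi>0$, which you must prove:
\begin{itemize}
\item At any zero of $D_\psi$ one has $a=-b\cot\psi$, hence $D_\psi'=b^2/\sin\psi>0$. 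So every zero is an upward crossing, and a smooth periodic $D_\psi$ therefore cannot change sign.
\item $D_\psi<0$ everywhere would force $a<-b\cot\psi<0$ everywhere, contradicting $\int_0^1 a\,dx=0$.
\end{itemize}
Next, $H_\psi''(r)=2\sin\psi/(\cos\psi+r\sin\psi)^3>0$ for $r>-\cot\psi$, so $H_\psi$ is convex. Integrate the identity over the circle; the $D_\psi'/D_\psi$ term vanishes by periodicity. Then apply Jensen with the positive weight $b$, using $\int_0^1 a\,dx=0$, $\int_0^1 b\,dx=2\pi\windFunc(u)$, and $H_\psi(0)=\tan\psi$. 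This gives directly
\[
\Aff\ge2\pi\windFunc(u)\tan\psi=2\pi\windFunc(u)\,\frac{|\im\lambda|}{-\re\lambda}.
\]
This rotation-plus-convexity step is the paper's proof, and it is the step your proposal is missing.
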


\begin{proof}
We first prove the result for an eigenpair $(\lambda,u)$ of the adjoint generator $\mathcal L^\dagger$. We begin by showing that $u(x)\ne 0$ everywhere. We define the amplitude-weighted phase advance
\begin{align}
\Gamma_u(x):=\im(u \partial_x u^*)\,.
\label{eq:app-gamma}
\end{align}
Then, we multiply the eigenvalue equation $\mathcal L^\dagger u=\lambda u$ by $u^*$ and take imaginary parts to give 
\begin{align}
Q\partial_x\Gamma_u+f\Gamma_u=(\imL)|u|^2 \,,
\label{eq:app-gamma-ode}
\end{align}
where we used $\partial_x\Gamma_u=\im(u^*\partial_x^2 u)$. For each \(x\), we define the integrating factor
$S_x(y)
:=
\exp[
\int_x^y ({f(z)}/{Q(z)})\,dz]$. 
Then Eq.~\eqref{eq:app-gamma-ode} can be written as
\begin{align}
\partial_y\!\left[S_x(y)\Gamma_u(y)\right]
=
(\im\lambda)\,
S_x(y)\frac{|u(y)|^2}{Q(y)} .
\label{eq:app-gamma-integrating-factor}
\end{align}
Integrating from \(y=x\) to \(y=x+1\) and using \(S_x(x)=1,S_x(x+1)=e^\Aff,\Gamma_u(x+1)=\Gamma_u(x)\) gives
\begin{align}
(e^\Aff-1)\Gamma_u(x)
=
(\im\lambda)
\int_x^{x+1}
S_x(y)\frac{|u(y)|^2}{Q(y)}\,dy .
\label{eq:app-gamma-sign}
\end{align}
The integral is strictly positive. Hence, if $\Aff=0$, no nonreal eigenvalue exists. If $\Aff>0$ and $\imL \ne 0$, then $\sign[\Gamma_u(x)]=\sign (\imL)\ne 0$ for all $x$, and therefore $u(x)\neq0$ for all $x$.

Next, we introduce the polar representation
\[
u(x)=e^{\xi(x)+\ii\theta(x)},
\]
where $\xi$ is periodic and $\theta$ is a lifted phase. Because $u$ is periodic and nonvanishing,
\(\frac{1}{2\pi}\int_0^1
\im\frac{\partial_x u(x)}{u(x)}\,dx
=\frac{1}{2\pi}\int_0^1 \partial_x\theta(x)\,dx\)
is an integer. In addition, since $\Gamma_u=|u|^2\partial_x\theta$, we have 
\begin{align}
\sign[\partial_x\theta(x)]=\sign[\Gamma_u(x)]=\sign (\imL)\ne 0\,,
\label{eq:monophase}
\end{align} 
thus the unwound phase changes monotonically. 
Combining gives
\begin{align}
\windFunc(u)=\left\vert\frac{1}{2\pi}\int_0^1\partial_x\theta(x)\,dx\right\vert\in\{1,2,3,\dots\} .
\label{eq:app-cont-winding}
\end{align}

Lemma~\ref{lem:reLfp} below shows that $\reL<0$ whenever $\imL \ne 0$. Also, without loss of generality, hereafter we assume that $\imL>0$. We write the eigenvalue in polar coordinates as
\begin{align}
\lambda=-|\lambda|e^{-\ii\psi},
\quad
\psi\in(0,\pi/2),
\quad
\tan\psi=\frac{\imL}{-\re\lambda}\,.
\label{eq:app-psi}
\end{align}
Multiplying $\mathcal L^\dagger u=\lambda u$ by $e^{\ii\psi}/(Qu)$ and taking the imaginary part of both sides gives
\begin{align}
\im\left(e^{\ii\psi}\frac{\partial_x^2 u}{u}+e^{\ii\psi}\frac{f}{Q}\frac{\partial_x u}{u}\right)
=\im\left(\frac{-\vert\lambda\vert}{Q}\right)=0.
\label{eq:xzdsdf9221}
\end{align}
Next, we define the following function:
\begin{align}
D_\psi
:=
\im\left(e^{\ii\psi}\frac{\partial_x u}{u}\right)
=
(\partial_x\xi)\sin\psi+(\partial_x\theta)\cos\psi \,,
\label{eq:Ddef}
\end{align}
which satisfies the following identity:
\begin{align*}
&\im\left(e^{\ii\psi}\frac{\partial_x^2 u}{u}\right)\\ 
&=
\partial_x D_\psi
+\im\left\{
e^{\ii\psi}
\bigl[(\partial_x\xi)^2-(\partial_x\theta)^2+2\ii(\partial_x\xi)(\partial_x\theta)\bigr]
\right\}%
\\
&=\partial_x D_\psi
+\bigl[(\partial_x\xi)^2-(\partial_x\theta)^2\big]\sin\psi
+2(\partial_x\xi)(\partial_x\theta)\cos\psi .
\end{align*}
Plugging these identities into Eq.~\eqref{eq:xzdsdf9221} and rearranging gives
\begin{multline}
\frac{f}{Q}D_\psi=
-\partial_x D_\psi
+\bigr[(\partial_x\theta)^2-(\partial_x \xi)^2\bigr]\sin\psi
\\-2(\partial_x \xi)(\partial_x \theta)\cos\psi\,.
\label{eq:app-D-eq}
\end{multline}

We show that $D_\psi$ is strictly positive everywhere. If $D_\psi(x)=0$, then $\partial_x\xi/\partial_x\theta=-\cot\psi$, and Eq.~\eqref{eq:app-D-eq} gives $\partial_x D_\psi={(\partial_x\theta)^2}/{\sin\psi}>0$. Thus, every zero of $D_\psi$ would be crossed from negative to positive, which is impossible for a smooth periodic function. Hence $D_\psi$ has a fixed sign. It cannot be everywhere negative, since $D_\psi<0$ would imply $\partial_x\xi<-(\partial_x\theta)\cot\psi<0$ everywhere, contradicting the periodicity of $\xi$.
Therefore $D_\psi>0$ everywhere.

Dividing Eq.~\eqref{eq:app-D-eq} by $D_\psi$ and writing the result in terms of the ratio $\partial_x\xi/\partial_x\theta$ gives
\begin{align}
\frac{f}{Q}
=
-\partial_x(\ln D_\psi)
+(\partial_x\theta)\,
H_\psi\bigg(\frac{\partial_x{\xi}}{\partial_x{\theta}}\bigg),
\label{eq:app-aff-density}
\end{align}
where
\begin{align}
H_\psi(r)
:=
\frac{(1-r^2)\sin\psi-2r\cos\psi}
{\cos\psi+r\sin\psi},
\quad
 r>-\cot\psi .
\label{eq:app-H}
\end{align}
Observe that $r ={\partial_x{\xi}}/{\partial_x{\theta}}> -\cot \psi$ due to  $D_\psi>0$ and the definition of $D_\psi$ in Eq.~\eqref{eq:Ddef}. 
The function $H_\psi$ is the continuum limit of the function $K_\psi$~\eqref{eq:Gdef} used in our discrete unicyclic derivation, 
$H_\psi(r)
=
\lim_{\epsilon\to0}
{K_\psi(\epsilon,-r\epsilon)}/{\epsilon}$. 

Integrating Eq.~\eqref{eq:app-aff-density} over the circle gives
\begin{align}
\Aff
=
\int_0^1
(\partial_x\theta)\,
H_\psi\bigg(\frac{\partial_x{\xi}}{\partial_x{\theta}}\bigg)\,dx .
\label{eq:app-aff-H}
\end{align}
Importantly, $H_\psi$ is convex on its domain, since
\begin{align}
\partial_r^2 H_\psi(r)
&=
\frac{2}{(\cos\psi+r\sin\psi)^2}\,
\frac{1}{r+\cot\psi}
>0\,.
\nonumber %
\end{align}
Also, because  $\partial_x\theta>0$ everywhere, we may use it as a positive weight. 
Applying Jensen's inequality with respect to the probability measure $\partial_x\theta(x)dx/\int_0^1\partial_x\theta\,dx$ gives
\begin{align}
\Aff
&\ge
\left(\int_0^1\partial_x\theta\,dx\right)
H_\psi(0)
=
2\pi\windFunc(u)\tan\psi .
\label{eq:app-jensen}
\end{align}
Here we used $\int_0^1
\partial_x\theta\,
(\partial_x{\xi}/\partial_x{\theta})\,dx=\int_0^1
\partial_x{\xi}\,dx=0$ and $H_\psi(0)=\tan\psi$.
Given Eq.~\eqref{eq:app-psi}, this proves the first inequality in Eq.~\eqref{eq:app-cont-bound-final}. The second inequality follows from $\windFunc(u)\ge1$.

For constant $f$ and $Q$, the eigenfunctions are $u_m(x)=e^{2\pi\ii mx}$ with eigenvalues
\begin{align}
\lambda_m=-Q(2\pi m)^2+\ii f(2\pi m).
\end{align}
Then $\Aff=f/Q$, $\windFunc(u_m)=|m|$, and Eq.~\eqref{eq:app-cont-bound-final} is tight.

We now show that the Theorem also holds for a forward eigenpair,
\(\mathcal L v = \lambda v\), by reducing it to the adjoint case. 
Specifically, we construct a conjugate Fokker--Planck operator whose affinity
has the same magnitude (after reversing orientation) and whose adjoint has an
eigenpair \((\lambda,u)\) with \(\windFunc(u)=\windFunc(v)\).

To proceed, let
\(\rhoss\) be the stationary density and $\jss:=f\rhoss-\partial_x(Q\rhoss)$ be the stationary flux. Lemma~\ref{lem:fp-p-pos} below shows that
\(\rhoss(x)>0\) everywhere, and \(\jss\) is constant by steady-state condition $0=-\partial_x \jss$. 
We define the conjugate drift
\begin{align}
f^{\rm R}
:=
f-\frac{2\jss}{\rhoss}
=
-f+2\frac{\partial_x(Q\rhoss)}{\rhoss},
\label{eq:app-conjugate-drift}
\end{align}
and the conjugate Fokker--Planck operator $\mathcal L^{\rm R}\rho
=
-\partial_x(f^{\rm R}\rho) + \partial_x^2(Q\rho)$. 
Then
\begin{align}
\rhoss^{-1}\mathcal L(\rhoss u)
=
(\mathcal L^{\rm R})^\dagger u
=
Q \partial_x^2 u+f^{\rm R}\partial_x u .
\label{eq:app-forward-adjoint-similarity}
\end{align}
If \(v\)  satisfies 
\(\mathcal Lv=\lambda v\), then \(u:=v/\rhoss\) is an eigenfunction of the adjoint of $\mathcal{L}^{\rm R}$: 
\[
(\mathcal L^{\rm R})^\dagger u=\lambda u .
\]
Moreover,
${\partial_x u}/{u}
=
{\partial_x v}/{v}
-
{\partial_x \rhoss}/{\rhoss}$ 
and the second term is real. Therefore, 
$\im(\partial_x u/u)
=
\im(\partial_x v/v)$, and
so the winding number is unchanged:
\(\windFunc(u)=\windFunc(v)\). 
Finally, the conjugate drift reverses the affinity:
\[
\Aff^{\rm R}
=
\int_0^1 \frac{f^{\rm R}}{Q}\,dx
=
-\Aff
+
2\int_0^1\partial_x(\ln Q+\ln\rhoss)\,dx
=
-\Aff .
\]
After reversing the orientation of the cycle, the conjugate process has the
same affinity as the original process.
\end{proof}

We now prove the relationship between winding numbers and eigenvalue indices, analogous to Theorem~\ref{thm:ordered-winding}.

\begin{theorem}
\label{thm:cont-index}
Sort the eigenvalues of the Fokker--Planck operator~\eqref{eq:app-fp-cont}  by real part as 
$0=\lambda_1\ge \re\lambda_2\ge \re\lambda_3\ge \cdots$ (counting algebraic multiplicity). For any nonreal  $\lambda_k$, consider a corresponding eigenfunction $\mathcal L v=\lambda_k v$ and adjoint eigenfunction  $\mathcal L^\dagger u=\lambda_k u$. Then, their winding numbers obey
\begin{align}
\windFunc(u)=\windFunc(v)=\lfloor{k}/{2}\rfloor\,.
\end{align} 
\end{theorem}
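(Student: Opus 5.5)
The plan is to deduce Theorem~\ref{thm:cont-index} from its discrete counterpart, Theorem~\ref{thm:ordered-winding}, by a finite-difference approximation, and then to close the one gap that the limit leaves open with a continuous oscillation argument.

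\emph{Reduction to the adjoint case.} It suffices to treat adjoint eigenfunctions $u$ with $\mathcal L^\dagger u=\lambda_k u$. For a forward eigenfunction $v$, the proof of Theorem~\ref{thm:cont-bound-fp} already shows that $u:=v/\rhoss$ is an adjoint eigenfunction of the conjugate operator $\mathcal L^{\rm R}$, with the same eigenvalue and $\windFunc(u)=\windFunc(v)$. Relation~\eqref{eq:app-forward-adjoint-similarity} is a similarity transform, so $\mathcal L^{\rm R}$ has the same spectrum with the same algebraic multiplicities. Hence the index $k$ of $\lambda_k$ is unchanged. As in that proof, $\mathcal L^{\rm R}$ has affinity $-\Aff$, and reversing the cycle orientation does not change the winding number. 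This gives $\windFunc(v)=\windFunc(u)$.

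\emph{Discretization.} For large $N$, I will define a unicyclic generator $\W_N$ on the grid $x_\xx=\xx/N$ whose transpose is the standard central difference of $Q\partial_x^2+f\partial_x$. Its rates are
\[
\fwd_\xx=N^2Q(x_\xx)+Nf(x_\xx)/2,\qquad \bwd_\xx=N^2Q(x_\xx)-Nf(x_\xx)/2 .
\]
Both rates are positive once $N$ is large, so $\W_N$ is irreducible. Its left-eigenvector equation is exactly Eq.~\eqref{eq:baas-m4}. The resolvent of $\mathcal L^\dagger$ is compact and the scheme is consistent and stable. Standard spectral convergence for such discretizations (e.g.\ via Kato-type perturbation of Riesz projections) should give the following for each fixed $k$: the $k$-th eigenvalue of $\W_N$ sorted by real part converges to $\lambda_k$, with multiplicities. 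The associated eigenvectors, suitably normalized, converge to $u$ together with their discrete derivatives.

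Theorem~\ref{thm:ordered-winding} gives discrete winding $\lfloor k/2\rfloor$ for the nonreal $k$-th eigenvector of $\W_N$. By Lemma~\ref{lem:upper-half-plane}, every increment $\Arg(u_{\xx+1}/u_\xx)$ has the same sign. The sum of these increments is the increment of a lifted phase along the grid. Under $C^1$ convergence it tends to $\int_0^1\partial_x\theta\,dx$, and this limit is nonzero by Eq.~\eqref{eq:monophase}. Since both the discrete and continuous windings are integers, $\windFunc(u)=\lfloor k/2\rfloor$ follows for $N$ large.

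\emph{Main obstacle: matching indices in the limit.} The discrete proof relied on the strict block ordering~\eqref{eq:ordegapp}, namely $\re\lambda_{2m-1}>\re\lambda_{2m}$. In the limit $N\to\infty$ these strict inequalities could collapse into equalities. If they did, a nonreal $\lambda_k$ of $\mathcal L$ could be the limit of discrete eigenvalues from two different blocks, and the index would be ambiguous.

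To rule this out I will prove the continuous strict ordering $0=\lambda_1>\re\lambda_2\ge\re\lambda_3>\re\lambda_4\ge\cdots$ directly. The tool is the zero-number (lap-number) argument of Sturm--Angenent for the parabolic equation $\partial_t w=\mathcal L^\dagger w$ on the circle. The number of sign changes of a real solution is nonincreasing in $t$, drops strictly at multiple zeros, and is always even on a circle. Suppose a real solution lies in the real span of the eigenfunctions for the eigenvalues of a given real part. Such a solution has a constant zero count $2m$. For the nonreal eigenfunction this count equals $2\windFunc(u)$, because $\theta$ is monotone (counting crossings of $\pi/2+q\pi$, as in the discrete proof). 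Comparing the long-time asymptotics of mixed solutions then shows that each value $2m$ is carried by at most two eigenvalues, and that these appear in order of decreasing real part. This is the Floquet--Sturm statement for periodic operators. It gives both the strict gaps and, independently, the identity $2\windFunc(u)=2\lfloor k/2\rfloor$.

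I expect this oscillation and gap step to be the hardest part. I would first try to obtain it from the discrete SSR$_r$ property of $e^{t\W_N}$, passing to the limit with uniform control on the gaps. If that fails, I would fall back on Angenent's zero-number theorem applied directly to the continuous problem.
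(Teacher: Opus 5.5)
Your discretization route does not close, and the step you flag as the ``main obstacle'' is in fact the whole content of the theorem. To transfer the index from $\W_N$ to $\mathcal L^\dagger$, you need to know, in the continuum, that $0=\lambda_1>\re\lambda_2\ge\re\lambda_3>\re\lambda_4\ge\cdots$ and that a nonreal $\lambda_k$ is algebraically simple. Otherwise the tied group at $\re\lambda_k$ spans two blocks, for example through a Jordan block at $\lambda_k$ or a real eigenvalue sharing $\re\lambda_k$, and the index $k$ is not even well defined relative to the blocks. Strict inequalities do not survive $N\to\infty$. The SSR$_r$ property of $e^{t\W_N}$ is purely qualitative, so it gives no ``uniform control on the gaps.'' The spectral-convergence claims you rely on are also only asserted: convergence of the real-part-sorted index with algebraic multiplicity, and $C^1$ convergence of eigenvectors, for a non-self-adjoint scheme. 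At minimum you must exclude discrete eigenvalues with bounded real part and divergent imaginary part. That can be done with the discrete sector bound, since $\Aff_N\to\Aff$, but none of this work is done.

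Your fallback is exactly the paper's proof, and once you have it the discretization is unnecessary. The paper cites two results of Angenent--Fiedler for $e^{t\mathcal L^\dagger}$:
\begin{itemize}
\item Theorem~2.1: its eigenvalues, hence those of $\mathcal L^\dagger$, fall into strictly ordered blocks $\{\lambda_1\},\{\lambda_2,\lambda_3\},\dots$.
\item Theorem~2.2: every nonzero real function in the span of the block-$m$ eigenfunctions has exactly $2m$ zeros.
\end{itemize}
It applies these to $\re(e^{\ii\phi}u)=|u|\cos(\theta+\phi)$, which has $2\windFunc(u)$ zeros because $\theta$ is strictly monotone. The forward case then follows from your $\rhoss$-conjugation, which is correct.

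If you instead mean to derive these facts yourself from zero-number monotonicity, your sketch is incomplete. ``At most two eigenvalues per zero level, in order of decreasing real part'' gives only $\ell_k\ge 2\lfloor k/2\rfloor$ for the zero level $\ell_k$ of the $k$-th eigenvalue. It does not show that level $0$ carries exactly one eigenvalue and each level $2m\ge2$ exactly two, counted with algebraic multiplicity. Without the reverse inequality $\ell_k\le 2\lfloor k/2\rfloor$, a level could be underpopulated and the indices would shift, so $2\windFunc(u)$ would not convert into $2\lfloor k/2\rfloor$. One way to get it is a counting argument:
\begin{itemize}
\item Intersect the kernel of the Riesz projection onto the first $k-1$ eigenvalues (codimension $k-1$) with the $(2\lfloor k/2\rfloor+1)$-dimensional space of real trigonometric polynomials of degree $\le\lfloor k/2\rfloor$.
\item Evolve the resulting function forward and use monotonicity of the zero number.
\end{itemize}
A homotopy to constant coefficients would also work. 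You must also handle possible Jordan blocks in your ``recurrent solution has constant zero count'' step.
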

\begin{proof}
We first prove the claim for the adjoint operator $\mathcal{L}^\dagger$.  Let $T_t:=e^{t\mathcal L^\dagger}$ 
be the time-$t$ evolution operator generated by the adjoint operator,
\[
\partial_t u=\mathcal L^\dagger u:=Q(x)\partial_x^2 u +f(x)\partial_x u\, .
\]
For any $t>0$, $T_t$ is compact~\cite{henry2006geometric}. Then, by the Spectral Mapping Theorem, there is a one-to-one mapping between eigenpairs $(\lambda,u)$ of $\mathcal{L}^\dagger$ and eigenpairs $(e^{t\lambda},u)$ of $T_t$ for a generic $t>0$ that avoids exponential collisions.

We now apply the results of Angenent and Fiedler~\cite{angenent1988dynamics} to the parabolic evolution operator $T_t$. We note that, although the results of that paper were originally stated for analytic drift and diffusion fields, they were later generalized to smooth fields~\cite{angenent1988zero}. 
Theorem~2.1 in Ref.~\cite{angenent1988dynamics} states that the eigenvalues of $T_t$ can be sorted by modulus as
\[
|e^{t\lambda_1}|>|e^{t\lambda_2}|\ge|e^{t\lambda_3}|>|e^{t\lambda_4}|\ge|e^{t\lambda_5}|>\cdots ,
\]
thus falling into strictly-ordered blocks of two. Since $|e^{t\lambda_j}|=e^{t\re \lambda_j}$, this is equivalent to the block ordering of the generator eigenvalues as
\begin{align*}
     \lambda_1 > \re \lambda_2 \ge \re \lambda_3 > \re \lambda_4 \ge\re \lambda_5 > \dots 
\end{align*}
Next, consider two eigenfunctions $g,h$ of $T_t$ associated with eigenvalues $e^{t\lambda_{2\evblockix}}$ and $e^{t\lambda_{2\evblockix+1}}$, thus also eigenfunctions of $\mathcal{L}^\dagger$ associated with eigenvalues $\lambda_{2\evblockix},\lambda_{2\evblockix+1}$. Theorem~2.2 in Ref.~\cite{angenent1988dynamics} implies that any nonzero real-valued function in $\operatorname{span}_\mathbb{C}\{g,h\}$ has exactly \(2\evblockix\) zero crossings. 

Let us now consider an eigenpair $(\lambda_k, u)$ of the adjoint operator \(\mathcal L^\dagger\). Since this operator has real coefficients, \(u^*\) is also an eigenfunction with eigenvalue \(\lambda_k^*\). Therefore, this pair of eigenvalues belongs to block $\{\lambda_{2\evblockix},\lambda_{2\evblockix+1}\}$ with $\evblockix=\lfloor k/2\rfloor$. The set of real-valued functions in  $\operatorname{span}_\mathbb{C}\{u,u^*\}$ is equal to $\operatorname{span}_\mathbb{R}\{\re u,\im u\}$. As mentioned above, any nonzero function in this set has \(2\lfloor k/2\rfloor\) zero crossings. 

The nonvanishing and monotone-phase arguments from the proof of
Theorem~\ref{thm:cont-bound-fp} apply equally here, so \(u\) is nonvanishing and can be written as $u(x)=|u(x)|e^{\ii\theta(x)}$, 
where the lifted phase \(\theta\) changes strictly monotonically. Then, for a
generic phase shift \(\phi\), the function
\[
b(x) := \re(e^{\ii\phi}u(x))
=
|u(x)|\cos(\theta(x)+\phi)
\]
has \(2\windFunc(u)\) zero crossings on $x\in[0,1]$. At the same time, $b = \cos\phi\re u-\sin\phi \im u\in \operatorname{span}_\mathbb{R}\{\re u,\im u\}$, thus it   has \(2\lfloor k/2\rfloor\) zero crossings. 
Combining implies  $\windFunc(u)=\lfloor k/2\rfloor$.

It remains to consider an eigenpair $(\lambda_k,v)$ of the forward operator $\mathcal{L}$. As in the proof of
Theorem~\ref{thm:cont-bound-fp}, let \(\rhoss>0\) be the stationary density,
let \(\jss=f\rhoss-\partial_x(Q\rhoss)\) be the stationary flux, and define the
conjugate drift $f^{\rm R}:=f-{2\jss}/{\rhoss}$. Then, define $h:=v/\rhoss$, and observe that
\[
\rhoss^{-1}\mathcal L(\rhoss h)
=
(\mathcal L^{\rm R})^\dagger h .
\]
Thus \(h\) satisfies $(\mathcal L^{\rm R})^\dagger h=\lambda_k h$. 
It is straightforward to show that $\im[ (\partial_x{h})/{h}]=\im[ (\partial_x{v})/{v}]$, so $\windFunc(h)=\windFunc(v)$. Finally, because the operators \(\mathcal L\) and \((\mathcal L^{\rm R})^\dagger\) are
similar, \(\lambda_k\) has the same ordered index for both. Applying the
adjoint result proved above to $(\mathcal L^{\rm R})^\dagger$ gives $\windFunc(h)=\lfloor{k}/{2}\rfloor$. 
Thus, $\windFunc(v)=\left\lfloor{k}/{2}\right\rfloor$. 
\end{proof}

We finish with a few helpful lemmas that were used above.

\begin{lemma}
\label{lem:fp-p-pos}
The stationary distribution of the Fokker--Planck equation~\eqref{eq:app-fp-cont} is strictly positive.
\end{lemma}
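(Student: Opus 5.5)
The aim is to show that the stationary density $\rhoss$ of \eqref{eq:app-fp-cont} is strictly positive everywhere on the circle. I will do this by solving the stationary equation explicitly and checking that the solution has a sign.

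\emph{Step 1: reduce to a first-order equation.} Stationarity means $0=-\partial_x\jss$, so the flux $\jss=f\rhoss-\partial_x(Q\rhoss)$ is a constant $J$. Set $w:=Q\rhoss$. Then $w$ satisfies the first-order linear equation
\[
\partial_x w-(f/Q)w=-J .
\]
With the integrating factor $S_0(y)=\exp[\int_0^y (f/Q)\,dz]$ from the proof of Theorem~\ref{thm:cont-bound-fp}, this becomes $\partial_y[w/S_0]=-J/S_0$.

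\emph{Step 2: impose periodicity.} Integrate from $x$ to $x+1$ and use $S_0(x+1)=e^{\Aff}S_0(x)$ together with $w(x+1)=w(x)$. This gives the explicit formula
\[
w(x)\,(e^{-\Aff}-1)=-J\int_x^{x+1}\frac{S_0(x)}{S_0(y)}\,dy .
\]
There are two cases.

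\emph{Case $\Aff>0$.} The left side carries the factor $e^{-\Aff}-1<0$, and the integral on the right is strictly positive. Hence $w(x)$ has the sign of $J$ for every $x$. Since the density is normalized, $\int\rhoss=1$, we cannot have $J=0$, because that would force $w\equiv0$. So $J\ne0$, $w$ is nonvanishing and of one sign, and normalization fixes that sign to be positive. Because $Q>0$, it follows that $\rhoss=w/Q>0$.

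\emph{Case $\Aff=0$.} Now the left side vanishes, which forces $J=0$. Then $w/S_0$ is constant, and $S_0$ is periodic in this case. So $w=CS_0$ with $C>0$ fixed by normalization, and again $\rhoss>0$.

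\emph{Potential gap: existence.} The argument above assumes a stationary density exists. I would settle this in the same stroke by reading the formulas backwards. For $\Aff\neq0$, the expression for $w$ with $J>0$ defines a smooth periodic positive function that solves the stationary equation. For $\Aff=0$, $S_0$ itself does. Normalizing either gives a stationary density. Uniqueness up to scaling holds because the general solution of the first-order ODE has only the two free parameters $J$ and the integration constant, and periodicity ties them together. Since the paper fixes $\Aff\ge0$ by orientation, the two cases above cover everything.

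I expect no serious obstacle here. The main care is in the signs and in remembering that normalization excludes the trivial solution $J=0$, $w\equiv0$ in the case $\Aff>0$.
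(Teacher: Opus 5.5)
Your proof is correct, but it takes a different route from the paper. The paper's argument is local. It sets $R=Q\rhoss\ge 0$ and observes that a zero $x_0$ of $R$ would be a minimum, so $\partial_x R(x_0)=0$, which forces $\jss=0$ in $\partial_x R=(f/Q)R-\jss$. Uniqueness for this first-order linear ODE with $R(x_0)=0$ then gives $R\equiv 0$, contradicting normalization.

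You instead solve the stationary equation globally with the integrating factor and impose periodicity. This is the same device the paper uses for $\Gamma_u$ in Eq.~\eqref{eq:app-gamma-sign}.

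The paper's route is shorter and needs no case split on $\Aff$. However, it relies on the a priori nonnegativity of $\rhoss$ (so that a zero must be a minimum), and it takes existence of the stationary distribution for granted.

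Your route costs some sign bookkeeping but gives more:
\begin{itemize}
\item It uses only normalization, so any real normalized stationary solution is automatically positive.
\item It establishes existence and uniqueness of $\rhoss$, with an explicit formula.
\item It shows $\sign\jss=\sign\Aff$, with $\jss=0$ exactly when $\Aff=0$.
\end{itemize}

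One small slip: in your existence paragraph, ``for $\Aff\neq0$ \dots\ with $J>0$'' should read $\Aff>0$, since for $\Aff<0$ one would need $J<0$. This is harmless under the orientation convention $\Aff\ge0$.
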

\begin{proof}
Let
\(\rhoss\) be the stationary distribution and $\jss=f\rhoss-\partial_x(Q\rhoss)$ the
stationary flux. The steady state condition $0=-\partial_x \jss$ means that $\jss$ is constant on the cycle. 

Next, we define $R(x):=Q(x)\rhoss(x) \ge 0$, which lets us  rewrite $\jss=f\rhoss-\partial_x(Q\rhoss)$ as
\begin{align}
\partial_xR=\frac{f}{Q}R-\jss\,.
\label{eq:zxsa93}
\end{align}
Suppose \(R(x_0)=0\) for some \(x_0\). Then \(x_0\) is a minimum of \(R\), so
\(\partial_xR(x_0)=0\). Plugging into Eq.~\eqref{eq:zxsa93} gives
\(\jss=0\). In this case, $R(x)=0$ would become the unique solution to the first-order ODE~\eqref{eq:zxsa93} and, since $Q(x)>0$, it would imply $\rhoss(x)=0$ everywhere, contradicting 
\(\int_0^1\rhoss\,dx=1\). Therefore, \(R(x)>0\) and \(\rhoss(x)>0\) everywhere.
\end{proof}

\begin{lemma}
\label{lem:reLfp}
For any eigenvalue $\lambda$ of the adjoint generator~\eqref{eq:app-adjoint}, if $\imL \ne 0$ then $\reL<0$.
\end{lemma}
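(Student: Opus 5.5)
Let $(\lambda,u)$ be an eigenpair of $\mathcal L^\dagger$ with $\imL\neq0$. The plan is to reuse the phase-advance identity from the proof of Theorem~\ref{thm:cont-bound-fp}, this time taking real parts.

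Multiply $\mathcal L^\dagger u=\lambda u$ by $u^*$ to get
\[
\lambda|u|^2=Q\,u^*\partial_x^2u+f\,u^*\partial_x u .
\]
Since $\re(u^*\partial_x u)=\tfrac12\partial_x|u|^2$ and $\re(u^*\partial_x^2u)=\tfrac12\partial_x^2|u|^2-|\partial_x u|^2$, the real part gives
\[
(\reL)\,|u|^2=\tfrac{Q}{2}\partial_x^2|u|^2-Q|\partial_x u|^2+\tfrac f2\partial_x|u|^2 .
\]
Integrating this directly is awkward because of the drift term. So, first, I will instead divide by $|u|^2$, which is legitimate because $u$ is nonvanishing whenever $\imL\neq0$ (this was shown in that proof, with $\Aff>0$ after orientation; when $\Aff=0$ there are no nonreal eigenvalues at all).

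Second, write $u=e^{\xi+\ii\theta}$. Then $\partial_x u/u=\xi'+\ii\theta'$ and $\partial_x^2u/u=\xi''+(\xi')^2-(\theta')^2+\ii(\cdot)$. The real part of $\lambda=Q\,\partial_x^2u/u+f\,\partial_x u/u$ becomes
\[
\reL=Q\bigl[\xi''+(\xi')^2-(\theta')^2\bigr]+f\xi' .
\]
Third, eliminate the indefinite terms with an integrating factor. Set $w:=\exp\bigl(\int_0^x f/Q\bigr)/Q$, so that $(Qw)'=fw$ and $Qw$ is not periodic unless $\Aff=0$. This lack of periodicity is the obstacle.

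The cleaner route is the $\psi$-rotated argument already proved. In the proof of Theorem~\ref{thm:cont-bound-fp}, the positivity of $D_\psi$ relied only on $\psi\in(0,\pi)$ with $\lambda=-|\lambda|e^{-\ii\psi}$, plus periodicity. So I will run it by contradiction: suppose $\reL\ge0$ and $\imL>0$. Then $\lambda=|\lambda|e^{\ii\chi}$ with $\chi\in(0,\pi/2]$. Define $D:=\im(e^{-\ii\chi}\partial_x u/u)=\theta'\cos\chi-\xi'\sin\chi$. Dividing the equation by $e^{\ii\chi}Q$ and taking imaginary parts gives $\im(e^{-\ii\chi}\partial_x^2u/u)+(f/Q)D=0$. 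Expanding as before yields
\[
\partial_xD+(f/Q)D=\bigl[(\xi')^2-(\theta')^2\bigr]\sin\chi-2\xi'\theta'\cos\chi ,
\]
with the sign of $\sin\chi$ now positive.

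At a zero of $D$ we have $\theta'\cos\chi=\xi'\sin\chi$. If $\chi<\pi/2$, then $\xi'=\theta'\cot\chi$, and the right side equals $-(\theta')^2/\sin\chi<0$. Note that $\theta'>0$ by Eq.~\eqref{eq:monophase}. Hence every zero of $D$ is crossed downward, so $D$ has a fixed sign. But $\int_0^1 D\,dx=2\pi\windFunc\cos\chi>0$ by periodicity of $\xi$, so $D>0$ everywhere.

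The hard case is excluding a positive $D$. My first attempt: divide by $D$ and integrate. The term $\partial_x\ln D$ integrates to zero, so $\Aff=\int(f/Q)=\int(\text{RHS})/D$. The RHS at fixed $\theta'$ is a concave quadratic in $\xi'$, and its value can be bounded using $D>0$. I would show that $\text{RHS}/D\le\theta'\,G(r)$ with $r=\xi'/\theta'<\cot\chi$, for some concave $G$ satisfying $G(0)=-\tan\chi\le0$ and, by Jensen with weight $\theta'$ and $\int\xi'=0$, $\Aff\le2\pi\windFunc\,G(0)<0$. This contradicts $\Aff>0$.

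The case $\chi=\pi/2$ (purely imaginary $\lambda$) is handled the same way: there $D=-\xi'$, whose integral vanishes, and the zero-crossing argument forces $D\equiv0$, which contradicts the equation since $(\theta')^2>0$. Checking concavity of $G$, which mirrors the convexity of $H_\psi$ with $\psi\to-\chi$, is the step I expect to require care.
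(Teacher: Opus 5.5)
Your final argument is correct, but it takes a genuinely different route from the paper's proof.

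\emph{What the paper does.} It never uses the phase structure of $u$. It multiplies the eigenvalue equation by the stationary density $\rho_{\mathrm{ss}}>0$, which rewrites $\rho_{\mathrm{ss}}\mathcal L^\dagger u$ as $\partial_x(Q\rho_{\mathrm{ss}}\partial_x u)+j_{\mathrm{ss}}\partial_x u$ with constant flux $j_{\mathrm{ss}}$. It then multiplies by $u^*$, integrates, and takes real parts. The drift remainder $\re(j_{\mathrm{ss}}u^*\partial_x u)=\tfrac{j_{\mathrm{ss}}}{2}\partial_x|u|^2$ integrates to zero, leaving $(\reL)\int\rho_{\mathrm{ss}}|u|^2=-\int Q\rho_{\mathrm{ss}}|\partial_x u|^2$. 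This resolves the obstacle you hit in your first attempt: the right periodic weight is $\rho_{\mathrm{ss}}$, not $\exp(\int f/Q)/Q$, because it leaves only a total-derivative remainder.

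\emph{What you do.} You rerun the $D_\psi$ machinery of Theorem~\ref{thm:cont-bound-fp} at the reflected angle $\psi=-\chi$. I checked the steps:
\begin{itemize}
\item At a zero of $D$, $\partial_x D=-(\theta')^2/\sin\chi$, so every zero is a strict downward crossing.
\item $\Aff=\int\theta'\,G(\xi'/\theta')\,dx$ holds with equality, not just $\le$.
\item The conclusion $\Aff\le-2\pi\windFunc\tan\chi<0$ follows.
\item The concavity you flagged is immediate: with $t:=\cos\chi-r\sin\chi>0$ one has $G=(t-1/t)/\sin\chi$, which is concave in $t$ and hence in $r$. Equivalently, the paper's formula for $\partial_r^2H_\psi$ changes sign at $\psi=-\chi$.
\end{itemize}
For $\chi=\pi/2$, the cleaner phrasing is that $\int D=0$ forces $D$ to have a zero, and every zero is a strict downward crossing, which is impossible for a periodic function. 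Your ``forces $D\equiv0$'' is not quite the right logic, though the contradiction stands.

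\emph{What each route buys.} The paper's argument is shorter and more general. It needs only $\rho_{\mathrm{ss}}>0$; it does not need nonvanishing of $u$, monotone phase, or orientation with $\Aff>0$. It also shows $\reL<0$ for every nonzero eigenvalue, real or not. Your argument keeps the proof of Theorem~\ref{thm:cont-bound-fp} within a single convexity/Jensen technique: the lemma becomes the statement that the Jensen bound turns contradictory for angles in the closed right half-plane. The cost is that it applies only to nonreal eigenvalues and depends on the nonvanishing and monotone-phase steps. Those steps do not themselves use the lemma, so there is no circularity.
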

\begin{proof}
Let
\(\rhoss\) be the stationary distribution and
\(\jss=f\rhoss-\partial_x(Q\rhoss)\) the
stationary flux. The steady state condition $0=-\partial_x \jss$ implies that $\jss$ is constant on the cycle, while Lemma~\ref{lem:fp-p-pos} shows that $\rhoss(x) >0$ everywhere. 

We multiply the eigenvalue equation $\lambda u
=\mathcal L^\dagger u$ by $\rhoss$,
\begin{align*}
\lambda\rhoss u
=\rhoss\mathcal L^\dagger u
&=
\partial_x(Q\rhoss \partial_xu )
+
\jss\partial_xu .
\end{align*}
Further multiplying by \(u^*\), integrating
over the circle, and taking real parts gives
\begin{multline*}
(\reL)\int_0^1 \rhoss|u|^2dx
\\=
-\int_0^1 Q\rhoss|\partial_x u|^2 dx
+\frac{\jss}{2}\int_0^1 \partial_x (|u|^2) dx\,.
\end{multline*}
The last integral cancels by periodicity. In addition, $\int_0^1 \rhoss|u|^2dx>0$ since $u \ne 0$ and $\rhoss>0$. Finally, if $\int_0^1 Q\rhoss|\partial_x u|^2 dx=0$, then $u$ is constant, which would imply $\lambda=0$ and contradict $\imL \ne 0$. Hence, $\int_0^1 Q\rhoss|\partial_x u|^2 dx>0$. Combining gives $\reL<0$.
\end{proof}

\section{PROOF OF THEOREM~\ref{thm:multicyclic-main}}
\label{app:proof-main}

\begin{reptheoremnum}{thm:multicyclic-main}
\mainTheoremStmt{}
\end{reptheoremnum}

The proof proceeds via the following steps:
\begin{enumerate}[itemsep=2pt]
\item[\S\ref{app:proofsetup})] We define an $n\times n$ matrix $M$ in terms of $\lambda$, $\bm u$, and $\mixc$. We then use it to construct a weighted directed graph $G$ based on the generator $\W$ and matrix $M$.

\item[\S\ref{app:esconstruct})] We apply Theorem~\ref{thm:witness-cycle-m4}, a graph-theoretic result proved below, to extract a certifying cycle $c$ from graph $G$.

\item[\S\ref{app:cyclewinding})] We show that the certifying cycle $c$ is admissible and compute its winding number $\windFunc_c(\bm u)$. 
\item[\S\ref{app:finalbound})] We express the cycle affinity $\Aff_c$ as a sum of $n_c$ terms, each one being a convex function of the log-amplitude and phase of the eigenvector ratio $u_\yy/u_\xx$ on edges of $c$. Applying Jensen's inequality then gives the final result.
\end{enumerate}

\subsection{Assumptions}

Without loss of generality, we make three assumptions.

First, we assume that $n \ge 3$, since generators with $n=2$ have only real-valued eigenvalues.

Second, we assume that the nonreal eigenvalue $\lambda$ has $\imL > 0$. To show the result for a nonreal left eigenpair $(\lambda,\bm u)$ with $\imL < 0$, one may consider the conjugate eigenpair $(\lambda^*,\bm u^*)$ with $\imLconj =-\imL> 0$. It is straightforward to verify that the winding number~\eqref{eq:winding-multicycle} is invariant to conjugation, so $\windFunc_c(\bm u)=\windFunc_c(\bm u^*)$ for any admissible cycle where they are well-defined. Therefore, if the theorem holds for eigenpairs with $\imL >0$, it also holds for eigenpairs with $\imL <0$.

Finally, we assume that $\bm u$ is a left eigenvector ($\W^\top \bm u=\lambda \bm u$). The case where $\bm u$ is a right eigenvector is addressed separately in \S\ref{app:righteig} at the end of this proof.

\subsection{Setup and graph construction}
\label{app:proofsetup}
For convenience, we define the constants
\begin{align}
\begin{aligned}
\alpha :=
\mixc\frac{\imL}{\minusReL} \qquad 
\beta :=(1-\mixc) \frac{\imL}{\dualReL} \,.
\end{aligned}
\label{eq:asdfzz0}
\end{align}
A simple application of Gershgorin's Circle Theorem shows that any nonreal eigenvalue obeys $-2\maxrate < \reL< 0$. Therefore, $\alpha \ge 0$ and $\beta \ge 0$, with one of these inequalities being strict. Thus, we have the inequalities
\begin{align}
\alpha +\beta > 0 \qquad \alpha +\beta \ge \vert \alpha -\beta \vert \,.
\label{eq:alphabetineqs}
\end{align}

In addition, 
using the eigenvector $\bm u$, we define the following matrix of edge-wise observables:
\begin{align}
M_{\yy\xx}:= \operatorname{Im}(u_\yy u_\xx^*)
+(\alpha\!-\!\beta)\operatorname{Re}(u_\yy u_\xx^*)
-(\alpha\!+\!\beta)|u_\xx|^2 .
\label{eq:Mdef}
\end{align}

Define a weighted directed graph \(G=(V,E,\bm f)\) with
\begin{align*}
V&:=\{\xx \in \{1,\dots,n\}: u_\xx \neq 0\}\\
E&:=\{\yxedge:\xx,\yy\in V,\ \W_{\yy\xx}>0,\ u_\yy\ne u_\xx,u_\yy\ne -u_\xx\}\\
\fyxedge&:=\W_{\yyxx}M_{\yyxx},\qquad\qquad \yxedge\in E.
\end{align*}
We call
\(\yxedge\in E\) a \emph{nonnegative edge} if \(\fyxedge\ge0\).

\newcommand\sinkSCC{\mathcal{S}}

\subsection{Cycle extraction}
\label{app:esconstruct}
\label{app:subgraphextract}

We will apply Theorem~\ref{thm:witness-cycle-m4}, a graph-theoretic result proved in SM\ref{app:cycle-extraction} below, to the weighted graph $G$. We first verify that this graph satisfies the four assumptions of the theorem.

First, since \(\bm u\) is a nonzero eigenvector, \(V\) is not empty.

Second, every vertex has nonnegative total outgoing weight. For each \(\xx\in V\), we have
\begin{align*}
\sum_{\yy:\yy\ne \xx} \W_{\yyxx}M_{\yyxx}
&=
\bigl[\imL + (\alpha-\beta)\reL + 2\beta\W_{\xx\xx}\bigr]|u_\xx|^2 \,,
\end{align*}
where we used \(\sum_{\yy:\yy\ne \xx} \W_{\yyxx} u_\yy=(\lambda -\W_{\xx\xx})u_\xx\) and \(\sum_{\yy:\yy\ne \xx} \W_{\yyxx} =-\W_{\xx\xx}\). Plugging in the definitions of $\alpha,\beta$ and rearranging gives
\begin{align*}
\sum_{\yy:\yy\ne \xx} \W_{\yyxx}M_{\yyxx}
&=
2\beta(\maxrate+\W_{\xx\xx})
|u_\xx|^2
\ge 0 \,.
\end{align*}
Therefore, for each \(\xx\in V\),
\[
\sum_{\substack{\yy \in V:\\ \yxedge\in E}} \fyxedge
=
\sum_{\substack{\yy \in V:\\ \yxedge\in E}} \W_{\yyxx}M_{\yyxx}
\ge
\sum_{\yy:\yy\ne \xx} \W_{\yyxx}M_{\yyxx}
\ge 0.
\]
The first inequality holds because all additional terms are nonpositive: nonphysical edges ($\W_{\yyxx}=0$) contribute nothing, 
physical edges with \(\yy\notin V\) contribute
\(\W_{\yyxx}M_{\yyxx}=-\W_{\yyxx}(\alpha+\beta)|u_\xx|^2<0\), and physical edges with
\(u_\yy=\pm u_\xx\) contribute \(\W_{\yyxx}M_{\yyxx}\le0\).

Third, every vertex has an outgoing
edge in $E$. To show this, for any \(\xx\in V\), we write the eigenvector equation as
\[
\lambda 
=\sum_{\yy:\yy\ne\xx}\W_{\yyxx}(u_\yy/u_\xx-1).
\]
Now suppose for contradiction that no outgoing edge in
\(E\) leaves \(\xx\). Then every physical edge out of \(\xx\) either goes
to \(\yy\notin V\) (so $u_\yy = 0$) or satisfies \(u_\yy=\pm u_\xx\). In these cases, the right-hand side sums to a real number, contradicting the assumption that $\lambda$ is nonreal.

Fourth, every reversible nonnegative edge has 
strictly negative reverse weight. For any edge \(\yxedge\in E\), we have 
\begin{align}
M_{\yyxx}+M_{\xxyy}
=-\alpha|u_\yy-u_\xx|^2-\beta|u_\yy+u_\xx|^2 < 0\,.
\label{eq:Msymm}
\end{align}
For a nonnegative edge \(\fyxedge\ge0\), this gives the implications
\[
\fyxedge\ge0 \implies \W_{\yyxx}M_{\yyxx}\ge0 \implies M_{\yyxx}\ge0
\implies M_{\xxyy}<0 .
\]
For a reversible edge, \(\revedge\in E\) and
\(\fxyedge=\W_{\xxyy}M_{\xxyy}<0\).

We now apply Theorem~\ref{thm:witness-cycle-m4}
to
\(G=(V,E,\bm f)\). 
The theorem returns a cycle $c$ consisting of nonnegative edges $f_{\yxedge} \ge 0$. This cycle satisfies one of two possible conditions. 
First, \(c\) may contain an irreversible edge, in which case
\(\Aff_c=\infty\). Otherwise, every edge of \(c\) is reversible and the theorem gives
\begin{align}
\chi_c
&:= \sum_{\yxedge \in c} \ln\frac{\fyxedge}{-\fxyedge}\nonumber\\
&=\sum_{\yxedge \in c} \ln\frac{\W_{\yyxx}M_{\yyxx}}{-\W_{\xxyy}M_{\xxyy}} \nonumber \\
&= \Aff_{c} + \sum_{\yxedge \in c} \ln\frac{M_{\yyxx}}{-M_{\xxyy}}\ge 0 \,.
\label{eq:chires}
\end{align}

\subsection{Cycle winding}
\label{app:cyclewinding}

For convenience, we define the constants
\begin{align}
\gamma&:=\myarctan(\alpha-\beta)\in(-\pi/2,\pi/2)\\
\sigma&:=\frac{\alpha+\beta}{\sqrt{1+(\alpha-\beta)^2}} > 0
\label{eq:gammadef}
\end{align}
We will also use the inequality
\begin{align}
\sigma &\ge \frac{\vert \alpha-\beta\vert}{\sqrt{1+(\alpha-\beta)^2}}=\vert \sin \gamma\vert\,.
\label{eq:appsigineq}
\end{align}

Since $c\subseteq V$, all states $\xx$ in the cycle $c$ have $u_\xx\ne 0$ and all edges $\yxedge\in c$ have well-defined ratios $u_\yy/u_\xx$. 
We write these ratios in polar coordinates as
\begin{align}
\frac{u_\yy}{u_\xx}=e^{-\xlog_{\yyxx}+\ii(\varphi_{\yyxx}-\gamma)}\,.
\label{eq:polar}
\end{align}

Using the polar coordinates, we may write Eq.~\eqref{eq:Mdef} as
\begin{align}
M_{\yyxx}
&=
|u_\xx|^2\sqrt{1+(\alpha-\beta)^2}(e^{-\xlog_{\yyxx}}\sin\varphi_{\yyxx}-\sigma)\,.
\label{eq:local-M-00}
\end{align}
Recall that \(\fyxedge\ge0\) and \(\W_{\yyxx}>0\) on every edge \(\yxedge\in c\). Therefore, \(M_{\yyxx}\ge0\) and so 
\begin{align}
0<\sigma\le e^{-\xlog_{\yyxx}}\sin\varphi_{\yyxx}\,,
\label{eq:xzcvs}
\end{align}
implying that $\sin\varphi_{\yyxx}>0$. Thus, we may choose $\varphi_{\yyxx} \in (0,\pi)$.

For convenience, we define the \emph{winding phase} as
\begin{align}
\vartheta:=\frac{1}{n_c}\sum_{\yxedge\in c}(\varphi_{\yyxx}-\gamma)\,.
\label{eq:vtdef}
\end{align}
Below, we will use the bounds
\begin{align}
\vartheta+\gamma \in(0,\pi) \,,
\label{eq:vtineq}
\end{align}
which follow since $\vartheta+\gamma=\frac{1}{n_c}\sum_{\yxedge\in c}\varphi_{\yyxx}$ and each $\varphi_{\yyxx}\in (0,\pi)$. In addition, the identity $\prod_{\yxedge \in c} (u_\yy/u_\xx)=e^{-\sum\xlog_{\yyxx}+\ii n_c \vartheta}=1$ implies
\begin{align}
\sum_{\yxedge\in c} \xlog_{\yyxx} = 0\,,\qquad
\frac{n_c\vartheta}{2\pi} \in\mathbb Z\,.
\label{eq:cyclicid}
\end{align}

We take logarithms in Eq.~\eqref{eq:xzcvs}, average across the cycle, and simplify to
derive the bound
\begin{align}
\label{eq:zvxs02}
\ln\sigma
\le
\frac{1}{n_c}\sum_{\yxedge\in c}\ln\sin\varphi_{\yyxx}\,.
\end{align}
Since $\ln \sin$ is concave on $(0,\pi)$, Jensen's inequality implies
\begin{align}
\label{eq:appjens}
 \frac{1}{n_c}\sum_{\yxedge\in c}\ln\sin\varphi_{\yyxx}\le \ln\sin(\vartheta+\gamma)\,.
\end{align}
Combining the last two inequalities gives
\begin{align}
\ln\sigma
\le
\ln\sin(\vartheta+\gamma)\,.
\label{eq:geometry-bound-m4}
\end{align}

We now prove that \(\vartheta\in(0,\pi)\). Together with $n_c\vartheta/2\pi \in\mathbb Z$, this will imply that
\begin{align}
\label{eq:vtnd1}
\frac{n_c\vartheta}{2\pi} \in \{1,\dots, \lceil n_c/2\rceil - 1\}\,. 
\end{align}

We show that \(\vartheta\in(0,\pi)\) by contradiction. 
Suppose for the moment that \(\vartheta\le0\), in which case Eq.~\eqref{eq:vtineq} implies that 
\(\gamma>0\). We combine the inequalities~\eqref{eq:appsigineq} and~\eqref{eq:geometry-bound-m4}, along with the fact that \(\sin\) is monotonically increasing on \((0,\pi/2)\), to write
\begin{align}
\sigma \le \sin(\vartheta+\gamma)\le \sin\gamma \le \sigma \,,
\label{eq:fdd01}
\end{align}
which forces $\vartheta = 0$. We then consider two cases, showing that both lead to a contradiction. If the edge ratios \(u_\yy/u_\xx\) are not constant along $c$, then the inequality~\eqref{eq:zvxs02} and/or Jensen's inequality~\eqref{eq:appjens} are strict, therefore 
Eq.~\eqref{eq:geometry-bound-m4} is strict, contradicting Eq.~\eqref{eq:fdd01}. Conversely, if all edge ratios are constant along $c$, 
then all \(\xlog_{\yyxx}\) are equal, so
\(\sum_{\yxedge \in c}\xlog_{\yyxx}=0\) gives \(\xlog_{\yyxx}=0\) on every edge. Also
\(\varphi_{\yyxx}-\gamma=\vartheta=0\), hence
\(\varphi_{\yyxx}=\gamma\) on every edge. 
Therefore \(u_\yy/u_\xx=e^{-\xlog_{\yyxx}+\ii(\varphi_{\yyxx}-\gamma)}=1\) in Eq.~\eqref{eq:polar}, contradicting 
\(c\subseteq E\). 

Conversely, suppose that \(\vartheta\ge \pi\), in which case 
\(\gamma<0\). Since \(\sin\) is decreasing on \((\pi/2,\pi)\), we may write 
\begin{align}
\sigma\le \sin(\vartheta+\gamma)\le \sin(\pi+\gamma)=-\sin\gamma\le\sigma \,,
\label{eq:fdd02}
\end{align}
forcing $\vartheta = \pi$. If the edge ratios \(u_\yy/u_\xx\) are not constant along $c$, then Eq.~\eqref{eq:geometry-bound-m4} is strict, contradicting Eq.~\eqref{eq:fdd02}. Conversely, if all edge ratios are constant, then on every edge, \(\xlog_{\yyxx}=0\) and 
\(\varphi_{\yyxx}=\pi+\gamma\). Therefore \(u_\yy/u_\xx=e^{-\xlog_{\yyxx}+\ii(\varphi_{\yyxx}-\gamma)}=-1\), contradicting 
\(c\subseteq E\).

Finally, we connect the winding phase \(\vartheta\) to the winding number
\(\windFunc_c(\bm u)\) from Eq.~\eqref{eq:winding-multicycle}. 
Since $\varphi_{\yyxx}-\gamma\in(-\gamma,\pi-\gamma)$, the phases of the edge ratios $u_\yy/u_\xx=e^{-\xlog_{\yyxx}+\ii(\varphi_{\yyxx}-\gamma)}$ belong to a common open interval of length $\pi$. %
Thus, the cycle \(c\) is admissible for \(\bm u\). 
To compute the
winding number, choose the branch parameter using the average phase vector,
\begin{align}
\branchParam
:=
\Arg\sum_{\yxedge\in c}
e^{\ii\Arg(u_\yy/u_\xx)} .
\label{eq:branchParamapp}
\end{align}
Because the edge-ratio phases belong to a common open interval of length \(\pi\), this average phase vector $\sum_{\yxedge\in c} e^{\ii\Arg(u_\yy/u_\xx)}$  is nonzero and its phase lies within the same interval. 
Then, $\arg_{\branchParam} z:=\Arg(e^{-\ii \branchParam}z)+\branchParam$ places the branch cut at the opposite side $\branchParam\pm \pi$, guaranteeing that it does not intersect this interval. %

Since all edge-ratio phases lie in a common interval and the branch cut of $\arg_{\branchParam}$ avoids this interval,  
$\arg_{\branchParam}({u_\yy}/{u_\xx})$ differs from $\varphi_{\yy\xx}-\gamma$ by a constant integer multiple of \(2\pi\) on all edge ratios. This implies that 
\begin{align*}
\windFunc_c(\bm u)&=\Bigg\vert 
\frac{1}{2\pi}\sum_{\yxedge\in c}
\arg_{\branchParam}\frac{u_\yy}{u_\xx}
\Bigg\vert_{n_c}
\\
&=\Bigg\vert
\frac{1}{2\pi}\sum_{\yxedge\in c}(\varphi_{\yy\xx}-\gamma)
\Bigg\vert_{n_c}\\
&=
\Big\vert\frac{n_c\vartheta}{2\pi}\Big\vert_{n_c}=\frac{n_c\vartheta}{2\pi}\,.
\end{align*}
In the last line, we used Eq.~\eqref{eq:vtnd1} along with the fact that $\vert k\vert_{n_c}=k$ for any $k\in \{1,\dots, \lceil n_c/2\rceil - 1\}$.

\subsection{Final bound}
\label{app:finalbound}

For convenience, we define a constant to represent the left side of the inequality in
Theorem~\ref{thm:multicyclic-main},
\begin{align}
X:=\alpha\tan\frac{\vartheta}{2}
+
\beta\cot\frac{\vartheta}{2} \,.
\label{eq:appXdef}
\end{align}
Here we used the definitions of $\alpha,\beta$ from Eq.~\eqref{eq:asdfzz0} and
\(\vartheta=2\pi \windFunc_c(\bm u)/n_c\in(0,\pi)\). To prove Theorem~\ref{thm:multicyclic-main}, we show that
\begin{align}\label{eq:trg0}
X\le \tanh\frac{\Aff_c}{2n_c}.
\end{align}
Using standard trigonometric identities, we may write
\begin{align}
X
&=\alpha\frac{1-\cos\vartheta}{\sin\vartheta}
+\beta\frac{1+\cos\vartheta}{\sin\vartheta}\nonumber \\
&=\frac{(\alpha+\beta)-(\alpha-\beta)\cos\vartheta}
{\sin\vartheta} \nonumber \\
&=\frac{\sigma-\sin\gamma\cos\vartheta}
{\cos\gamma\sin\vartheta} \label{eq:bzxvsdf2}\\
&=1+\frac{\sigma-\sin(\vartheta+\gamma)}
{\cos\gamma\sin\vartheta}\,.\label{eq:bzxvsdf3}
\end{align}
In Eq.~\eqref{eq:bzxvsdf2}, we divided the numerator and denominator by $\sqrt{1+(\alpha-\beta)^2}$ and used $\sigma={(\alpha+\beta)}/{\sqrt{1+(\alpha-\beta)^2}}$, $\cos\gamma=1/{\sqrt{1+(\alpha-\beta)^2}}$, and 
$\sin\gamma=({\alpha-\beta})/{\sqrt{1+(\alpha-\beta)^2}}$. 
In Eq.~\eqref{eq:bzxvsdf3}, we used 
\begin{align}\sin\gamma\cos\vartheta
=
\sin(\vartheta+\gamma)-\cos\gamma\sin\vartheta\,.
\label{eq:vzxds321}
\end{align}
The denominator $\cos\gamma\sin\vartheta$ is positive, since $\gamma\in(-\pi/2,\pi/2)$ and $\vartheta\in(0,\pi)$.

We first consider the irreversible case where \(\Aff_c=\infty\), so
\begin{align}
\tanh\frac{\Aff_c}{2n_c}=1\,.
\end{align}
Combining~\eqref{eq:geometry-bound-m4} and~\eqref{eq:bzxvsdf3} gives $X\le1$, proving~\eqref{eq:trg0}.

Next, we consider the case where \(\Aff_c < \infty\).
On any edge of the cycle, Eq.~\eqref{eq:local-M-00} implies that
\begin{align*}
\ln \frac{M_{\yyxx}}{-M_{\xxyy}} 
&=
\ln \frac{|u_\xx|^2\left(e^{-\xlog_{\yyxx}}\sin\varphi_{\yyxx}-\sigma\right)}
{|u_\yy|^2\left(e^{\xlog_{\yyxx}}\sin(\varphi_{\yyxx}-2\gamma)+\sigma\right)}
\\
&=-J_{\sigma,\gamma}(\varphi_{\yyxx},\xlog_{\yyxx}) + \ln\frac{|u_\xx|^2}{|u_\yy|^2}\,,
\end{align*}
where we introduced the function
\begin{align*}
J_{\sigma,\gamma}(\varphi,\xlog)
&:= \ln\frac{e^{\xlog}\sin(\varphi-2\gamma)+\sigma}
{e^{-\xlog}\sin\varphi-\sigma}
\end{align*}
on the domain $\jDomSG
:=
\{(\varphi,\xlog):
0<\varphi<\pi,\,
\xlog<\ln\frac{\sin\varphi}{\sigma}
\}$. 
Combining with Eq.~\eqref{eq:chires} gives
\begin{align}
\Aff_c &\ge \sum_{\yxedge\in c} J_{\sigma,\gamma}(\varphi_{\yyxx},\xlog_{\yyxx}) - \sum_{\yxedge\in c} \ln\frac{|u_\xx|^2}{|u_\yy|^2}\nonumber \\
&= \sum_{\yxedge\in c} J_{\sigma,\gamma}(\varphi_{\yyxx},\xlog_{\yyxx})\,.
\label{eq:affbound9}
\end{align}

Observe that all $\varphi_{\yyxx} \in (0,\pi)$. Moreover, whenever \(\Aff_c < \infty\),
the certifying cycle in Theorem~\ref{thm:witness-cycle-m4} obeys 
\(\fyxedge>0\) on every edge of \(c\), hence \(M_{\yyxx}>0\).
Eq.~\eqref{eq:local-M-00} then implies that $\xlog_{\yyxx}<\ln({\sin\varphi_{\yyxx}}/{\sigma})$, so all points \((\varphi_{\yyxx},\xlog_{\yyxx})\) lie in the open convex domain $\jDomSG$. By convexity, their average also lies in this domain, so $\sigma <\sin(\vartheta +\gamma)$. 
Next, we apply Theorem~\ref{thm:local-convexity}, which proves that \(J_{\sigma,\gamma}\) is convex on its domain.
Applying Jensen's inequality to Eq.~\eqref{eq:affbound9} gives
\[
\Aff_c \ge n_c J_{\sigma,\gamma}(\vartheta+\gamma,0)
=
n_c \ln\frac{\sin(\vartheta-\gamma)+\sigma}
{\sin(\vartheta+\gamma)-\sigma}\,,
\]
where we used \(\sum_{\yxedge\in c} \xlog_{\yyxx}=0\) and \(n_c^{-1}\sum_{\yxedge\in c}\varphi_{\yyxx}=\vartheta+\gamma\). 
Finally, rearranging gives the bound~\eqref{eq:trg0} via
\begin{align*}
\tanh\frac{\Aff_c}{2n_c}
&\ge
\tanh\left[
\frac12\ln\frac{\sin(\vartheta-\gamma)+\sigma}
{\sin(\vartheta+\gamma)-\sigma}
\right]\\
&=
\frac{2\sigma+\sin(\vartheta-\gamma)-\sin(\vartheta+\gamma)}
{\sin(\vartheta-\gamma)+\sin(\vartheta+\gamma)} \\
&=
\frac{\sigma-\sin\gamma\cos\vartheta}
{\cos\gamma\sin\vartheta} =X \,.
\end{align*}
In the last line, we used~\eqref{eq:vzxds321} and then~\eqref{eq:bzxvsdf2}.

\newcommand{\SCC}{\mathcal{S}}
\newcommand{\Wscc}{\W^{\SCC}}
\newcommand{\perrev}{\lambda_1^{\SCC}}
\newcommand{\perxx}{a}
\newcommand{\perron}{\bm{\perxx}}
\newcommand{\Dperron}{D} 
\newcommand{\rvecS}{\bm u^\SCC}
\newcommand{\rvecSxx}{u^\SCC}
\newcommand{\Wadj}{ {\widetilde{\W}^{\SCC}} }
\newcommand{\evecAdj}{\widetilde{\bm u}}
\newcommand{\evecAdjxx}{\widetilde{u}}
\newcommand{\adjAff}{\widetilde{\Aff}}
\newcommand{\ccc}{{\widetilde c}}
\newcommand{\mccc}{\widetilde \winding_\ccc}

\subsection{Right eigenvector}
\label{app:righteig}
We now consider the case where $\bm u$ is a right eigenvector, so $\W \bm u =\lambda \bm u$. We prove the result by reducing to the case of a left eigenvector. We proceed via three steps.

\vspace{10pt}
\noindent\emph{Step 1: Restriction to an SCC.}
Let $\SCC_1,\dots,\SCC_K$ denote the SCCs of the directed graph of $\W$. 
Index the states so that $\W$ is block-lower-triangular with diagonal blocks $\{\W|_{\SCC_k}\}$ in topological order: if $\W$ has a transition from $\SCC_j$ to $\SCC_i$ with $i\ne j$, then $j<i$.

Let the notation $\bm u \vert_{\SCC}$ indicate restriction of vector $\bm u$ to the subset of states $\SCC$, and the notation $\W\vert_{\SCC,\SCC^\prime}$ indicate the restriction of matrix $\W$ to block $\SCC,\SCC^\prime$. 
Let $k$ be the smallest index such that $\bm u|_{\SCC_k}\ne\zz$ (such a $k$ exists because $\bm u\ne\zz$). Writing $\W\bm u=\lambda\bm u$ as a block-by-block identity, the $k$-th block-row reads
\begin{align}
 \lambda\,\bm u|_{\SCC_k}=\sum_{j\le k}\W\vert_{\SCC_k, \SCC_j}\,\bm u|_{\SCC_j}= \W\vert_{\SCC_k, \SCC_k}\,\bm u|_{\SCC_k}.
 \label{eq:SCC-block-row}
\end{align}
First, we use that only terms with $j\le k$ appear because $\W$ is block-lower-triangular, then we use that $\bm u|_{\SCC_j}=0$ for $j<k$.
In this way, we have shown that $\rvecS := \bm u|_{\SCC_k}$ is a right eigenvector of $\Wscc:=\W\vert_{\SCC_k,\SCC_k}$ with eigenvalue $\lambda$,
\begin{align}
\Wscc\,\rvecS = \lambda\,\rvecS, \qquad \rvecS \ne \zz\,.
\label{eq:WSu}
\end{align}

\vspace{10pt}
\noindent\emph{Step 2: SCC adjoint rate matrix.}
The SCC block $\Wscc$ is irreducible and has nonnegative off-diagonal entries. By the Perron--Frobenius theorem, $\Wscc$ has a simple eigenvalue $\perrev\in\RR$ with largest real part and an associated right eigenvector $\perron\in\RR_{>0}^{\vert \SCC\vert}$, such that $\Wscc\perron = \perrev\perron$. 
A straightforward application of Gershgorin's Circle Theorem implies $\perrev\le 0$.

Using the diagonal matrix $D=\operatorname{diag}(\perron)$, we define an adjoint matrix,
\begin{align}
\Wadj \;:=\; \Dperron{\Wscc}^{\top} \Dperron^{-1} - \perrev I\,.
\label{eq:Wstar}
\end{align}
Observe that $\Wadj$ is a Markovian generator on $\SCC$: it has nonnegative off-diagonal entries and all column sums vanish:
\[
\sum_{\yy\in\SCC}\Wadj_{\yyxx}
=\Big( \sum_{\yy\in\SCC}\frac{\perxx_\yy}{\perxx_\xx}\Wscc_{\xxyy}\Big)-\perrev=0\,.
\]
In addition, the vector $\evecAdj:=\Dperron^{-1}\rvecS$ satisfies
\[
({\Wadj})^{\top} 
 \evecAdj= {\Dperron}^{-1}{\Wscc} \rvecS -\perrev \evecAdj= (\lambda-\perrev)\evecAdj \,,
\]
thus it is a {left} eigenvector of $\Wadj$ with eigenvalue $\lambda-\perrev$.

\vspace{10pt}
\noindent\emph{Step 3: Reduction to left-eigenvector case.}
We then apply the left-eigenvector case (established above) to $\Wadj$ with nonreal eigenvalue $\adjLambda:=\lambda-\perrev$ and left eigenvector $\evecAdj=\Dperron^{-1}\rvecS$. We write the maximum escape rate in $\Wadj$ as
\begin{align}
\label{eq:adjmaxratedef}\adjMaxrate:=\max_{\xx\in \SCC}(-\Wadj_{\xx\xx})=\perrev+\max_{\xx\in \SCC}(-\W_{\xx\xx})\le \perrev+\maxrate\,.
\end{align}
These results imply that $\imLadj=\im \lambda$ and
\begin{align}
\begin{gathered}
0<\minusReLadj = \minusReL + \perrev \le \minusReL\\
0<\dualReLadj \le 2\maxrate+\perrev+\reL\le \dualReL
\end{gathered}
\label{eq:vdsf3}
\end{align} 
where we used $-2\adjMaxrate < \reLadj< 0$ (from Gershgorin's Circle Theorem) and $\perrev\le 0$.

We indicate a cycle in $\Wadj$ as $\ccc$, and its affinity in $\Wadj$ as $\adjAff_\ccc$. The left-eigenvector case of 
Theorem~\ref{thm:multicyclic-main} implies that for each $\mixc\in[0,1]$, there is a cycle $\ccc$ with winding number $\mccc\equiv\windFunc_\ccc(\evecAdj)\in\{1,\dots,\lceil n_\ccc/2\rceil-1\}$ such that
\begin{align*}
\mixc\frac{|\imLadj|}{\minusReLadj}
\tan\!\frac{\pi \mccc}{n_\ccc}
+
(1-\mixc)\frac{|\imLadj|}{\dualReLadj}
\cot\!\frac{\pi \mccc}{n_\ccc}
\le \tanh\!\frac{\adjAff_\ccc}{2n_\ccc}
\end{align*}
Using $\imLadj=\im \lambda$ and~\eqref{eq:vdsf3}, we can weaken this bound as 
\begin{align*}
\mixc\frac{|\imL|}{\minusReL}
\tan\!\frac{\pi \mccc}{n_\ccc}
+
(1-\mixc)\frac{|\imL|}{\dualReL}
\cot\!\frac{\pi \mccc}{n_\ccc}
\le \tanh\!\frac{\adjAff_\ccc}{2n_\ccc}
\end{align*}

After appropriate reindexing of states in $\W$, each cycle $\ccc = (\xx_1 , \xx_2, \dots, \xx_{n_\ccc} ,\xx_1)$ in $\Wadj$ can be mapped to a unique cycle $c=(\xx_1, \xx_{n_\ccc}, \dots, \xx_2,\xx_1)$ in $\W$ that has reverse order. The two cycles have the same size, $n_\ccc=n_c$. We next show that they have the same affinity. If $\ccc$ contains an irreversible edge, then so does $c$ and so $\adjAff_\ccc=\Aff_c=\infty$. Otherwise, both cycles are reversible, and we have 
\begin{align*}
 \adjAff_\ccc &:= \sum_{{\yxedge\in \ccc}}\ln\frac{\Wadj_{\yyxx}}{\Wadj_{\xxyy}}=\sum_{\revedge\in c}\ln\frac{\Wadj_{\yyxx}}{\Wadj_{\xxyy}} =\sum_{\revedge\in c}\ln\frac{\W_{\xxyy}\frac{a_\yy}{a_\xx}}{\W_{\yyxx} \frac{a_\xx}{a_\yy}} 
\end{align*}
Next, we cancel the terms $\pm\ln (a_\yy/a_\xx)$ from the cyclic sum and rearrange, resulting in 
\begin{align*}
 \adjAff_\ccc &=\sum_{\revedge\in c}\ln\frac{\W_{\xxyy}}{\W_{\yyxx}} =\sum_{\yxedge\in c}\ln\frac{\W_{\yyxx}}{\W_{\xxyy}} =\Aff_c\,.
\end{align*}

Finally, we show that the winding numbers
\(\windFunc_\ccc(\evecAdj)\) and \(\windFunc_c(\bm u)\) agree. 
We will use that $\evecAdjxx_\yy/\evecAdjxx_\xx=(a_\xx/a_\yy)(\rvecSxx_\yy/\rvecSxx_\xx)$ and that the
real-valued factor $a_\xx/a_\yy>0$ does not change the phase. Let
\(\tilde\branchParam\) and \(\branchParam^\SCC\) be the branch
parameters from Eq.~\eqref{eq:branchParamapp} for
\((\evecAdj,\ccc)\) and \((\rvecS,c)\), respectively. In the definition of \(\tilde\branchParam\), we reindex the
cycle \(\ccc\) by the reversed cycle \(c\), giving 
\[
e^{\ii \Arg(\evecAdjxx_\yy/\evecAdjxx_\xx)} = e^{-\ii \Arg(\rvecSxx_\xx/\rvecSxx_\yy)} \,.
\]
Since the average phase vectors are complex conjugates, we have 
\(\tilde\branchParam=-\branchParam^\SCC\) $(\bmod\, 2\pi)$. 
The sum in the winding-number definition~\eqref{eq:winding-multicycle} becomes
\begin{multline*}
\sum_{\yxedge\in \ccc} \arg_{\tilde \branchParam}\frac{\widetilde u_\yy}{\widetilde u_\xx}= 
\sum_{\revedge\in c}\arg_{\tilde \branchParam}\frac{\widetilde u_\yy}{\widetilde u_\xx}\\
= -\sum_{\revedge\in c}\arg_{\branchParam^\SCC}\frac{\rvecSxx_\xx}{\rvecSxx_\yy}= -\sum_{\yxedge\in c}\arg_{\branchParam^\SCC}\frac{\rvecSxx_\yy}{ \rvecSxx_\xx}\,.
\end{multline*}
Therefore the phase sums differ by a minus sign. 
Since the wrapping $\vert k\vert_n :=\min\{ k \bmod n, (n-k)\bmod n \}$ is invariant under sign change, $\windFunc_\ccc(\evecAdj)=\windFunc_c(\rvecS)$. 
Since $c$ lies entirely inside $\SCC$, $\windFunc_c(\rvecS)=\windFunc_c(\bm u)$. Combining gives
\begin{align}
\windFunc_\ccc(\evecAdj)=\windFunc_c(\bm u) \,.
\label{eq:windAgree}
\end{align}

The left-eigenvector case of Theorem~\ref{thm:multicyclic-main} applied to $\Wadj$ also implies that there is a cycle $c$ of $\W$ that has winding number $\winding_c :=\windFunc_c(\bm u)\in\{1,\dots,\lceil n_c/2\rceil-1\}$ such that
\begin{align*}
\mixc\frac{|\imL|}{\minusReL}
\tan\!\frac{\pi \winding_c}{n_c}
+
(1-\mixc)\frac{|\imL|}{\dualReL}
\cot\!\frac{\pi \winding_c}{n_c}
\le \tanh\!\frac{\Aff_c}{2n_c}
\end{align*}
This proves the theorem when $\bm u$ is a right eigenvector.

\section{CYCLE EXTRACTION}
\label{app:cycle-extraction}

We introduce a few concepts from graph theory. A \emph{strongly connected component} (SCC) of a graph is a maximal subset of
vertices such that every vertex is reachable from every other vertex by a
directed path. A \emph{sink SCC} is an SCC with no edges that leave the component. Every nonempty finite directed graph has at least one nonempty sink SCC: if one
contracts each SCC to a single vertex, the resulting graph is finite and
acyclic, so it has a vertex with no outgoing edges.

\begin{theorem}
\label{thm:witness-cycle-m4}
Let \(G=(V,E,\bm f)\) be a finite directed graph with signed edge weights
\(\bm f\in\RR^{|E|}\) and no self-loops. 
Assume that:
\begin{enumerate}[itemsep=0pt]
\item The vertex set is non-empty: \(V\ne\emptyset\).
\item Every vertex $\xx\in V$ has nonnegative total outgoing weight:
$\sum_{\yy\in V:\yxedge\in E} \fyxedge \ge 0$. 

\item Every vertex $\xx\in V$ has an outgoing edge $\yxedge\in E$ to some $\yy \in V$. 
\item Every nonnegative reversible edge has negative reverse weight:
if \(\yxedge\in E\), \(\fyxedge\ge0\), and \(\revedge\in E\), then
\(\fxyedge<0\).
\end{enumerate}
Then, \(G\) contains a directed simple cycle \(c\) that satisfies one of the following:
\begin{enumerate}[itemsep=0pt]
\item[A.] All edges $\yxedge \in c$ are nonnegative (\(\fyxedge \ge 0\)) and at least one is irreversible, \(\revedge\notin E\);
\item[B.] All edges $\yxedge \in c$ are positive and reversible (\(\fyxedge>0,\revedge\in E\)) and satisfy the inequality
\[
\sum_{\yxedge\in c}\ln\frac{\fyxedge}{-\fxyedge} \ge 0\,.
\]
\label{it:witness-defect-m4}
\end{enumerate}
\end{theorem}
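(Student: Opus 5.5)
The plan is to reduce everything to the subgraph of nonnegative edges, pass to a sink strongly connected component of that subgraph, and then split according to whether that component contains an irreversible edge. Case~A will be immediate. Case~B will come from a potential (Bellman--Ford / LP duality) argument that contradicts assumption~2.

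\emph{Step 1 (nonnegative subgraph).} Let \(H\subseteq E\) be the set of nonnegative edges (\(\fyxedge\ge 0\)). Every \(\xx\in V\) has an outgoing edge in \(H\): by assumption~3 it has at least one outgoing edge. If all its outgoing edges were negative, its total outgoing weight would be \(<0\), contradicting assumption~2. Hence the finite graph \((V,H)\) has minimum out-degree \(\ge1\). Take a sink SCC \(\sinkSCC\) of \((V,H)\). Since every vertex of \(\sinkSCC\) has an out-edge in \(H\), and such edges cannot leave \(\sinkSCC\), the component contains at least one edge and is strongly connected.

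\emph{Step 2 (Case A).} Suppose some edge \(\yxedge\in H\) inside \(\sinkSCC\) is irreversible, i.e.\ \(\revedge\notin E\). Strong connectivity gives a simple \(H\)-path from \(\yy\) back to \(\xx\) inside \(\sinkSCC\). Concatenating it with \(\yxedge\) yields a simple cycle of nonnegative edges containing an irreversible edge, which is conclusion~A.

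\emph{Step 3 (Case B: every \(H\)-edge in \(\sinkSCC\) is reversible).} By assumption~4, each such edge has a reverse with \(\fxyedge<0\), so the reverse is not in \(H\). On \(H\)-edges of \(\sinkSCC\) put weights \(w_{\yy\leftarrow\xx}:=\ln(\fyxedge/(-\fxyedge))\), with \(w=-\infty\) when \(\fyxedge=0\). Suppose for contradiction that every simple cycle in \(\sinkSCC\) has total weight \(<0\); cycles containing a zero edge have weight \(-\infty\) and are therefore also \(<0\). Replace each \(-\infty\) by a sufficiently negative finite number. Since there are then no nonnegative cycles, the standard potential lemma (feasibility of difference constraints) gives potentials \(\phi\) on \(\sinkSCC\) with \(w_{\yy\leftarrow\xx}+\phi_\xx-\phi_\yy<0\) on every edge. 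Setting \(p_\xx:=e^{\phi_\xx}>0\), this reads
\[
p_\xx\fyxedge+p_\yy\fxyedge<0
\]
for every \(H\)-edge of \(\sinkSCC\). This inequality holds trivially when \(\fyxedge=0\).

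\emph{Step 4 (contradiction with assumption~2).} Consider \(\Sigma:=\sum_{\xx\in\sinkSCC}p_\xx\sum_{\yy:\yxedge\in E}\fyxedge\ge0\). Every nonnegative out-edge of \(\xx\in\sinkSCC\) lies in \(\sinkSCC\). Its reverse is a negative out-edge of a vertex of \(\sinkSCC\), and distinct \(H\)-edges have distinct reverses. So \(\Sigma\) splits into paired terms \(p_\xx\fyxedge+p_\yy\fxyedge<0\) plus leftover negative edges, each with a nonpositive contribution. Since \(\sinkSCC\) contains at least one \(H\)-edge, \(\Sigma<0\), a contradiction.

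Hence some simple cycle \(c\) in \(\sinkSCC\) has \(\sum_c w\ge0\). Its weight is finite, so all its edges have \(\fyxedge>0\); all are reversible by the Case~B hypothesis. This is conclusion~B.

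The main obstacle is Step~3: the potential lemma with \(-\infty\) weights, and making sure that it only needs simple cycles. I would handle this by choosing the replacement for \(-\infty\) smaller than minus the total of all finite positive weights. Then every cycle through a replaced edge stays negative, and Bellman--Ford feasibility applies directly.
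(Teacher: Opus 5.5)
Your proof is correct and follows essentially the same route as the paper. Both arguments take a sink SCC of the nonnegative-edge subgraph and read off Case~A from an irreversible edge inside it. Otherwise both reach Case~B by contradiction: they build longest-walk potentials from the weights \(\ln\bigl(f_{j\leftarrow i}/(-f_{i\leftarrow j})\bigr)\), weight Assumption~2 by \(e^{\phi}\), and pair each nonnegative edge with its negative reverse to obtain a contradiction. The only difference is minor. You get strict potential inequalities on every edge and absorb zero-weight edges via a large negative weight. The paper instead uses non-strict potentials on the strictly positive edges and argues separately that at least one inequality is strict, which first requires showing that those edges contain a cycle.
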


\begin{proof}

We call an edge \(\yxedge\in E\) \emph{nonnegative} if \(\fyxedge\ge0\). 

\newcommand{\Gpos}{G^+}
\newcommand{\GposSCC}{\Gpos_\sinkSCC}
\newcommand{\GSCC}{G_\sinkSCC}

Let \(\Gpos\) denote the directed subgraph of \(G\) supported on
nonnegative edges. 
Assumptions~2 and~3 imply that every vertex has a nonnegative outgoing
edge. Indeed, if all outgoing edges from some vertex had strictly negative
weight, their total outgoing weight would be strictly negative.
Thus, every vertex has an outgoing edge in
\(\Gpos\). Thus, $\Gpos$ is nonempty. 

Choose a sink SCC \(\sinkSCC\) of
\(\Gpos\), and let \(\GposSCC\) indicate the subgraph of \(\Gpos\) induced by \(\sinkSCC\). 
Observe that every vertex in \(\sinkSCC\) has an outgoing edge in
\(\Gpos\), and no such edge can leave \(\sinkSCC\). Therefore, 
\(\GposSCC\) is nonempty and strongly connected, 
thus it contains at least one cycle. We emphasize that, due to
Assumption~4 regarding the signed weights, cycles in \(\GposSCC\) cannot be trivial ``roundtrips'' (\(\xx \to \yy \to \xx\)). 

If $\GposSCC$ contains a cycle with an irreversible edge
$\yxedge$, then Claim~A holds immediately. Thus, it remains to consider the case where no cycle in $\GposSCC$
has an irreversible edge.

Let \(\GSCC\) be the subgraph of \(G\) induced by \(\sinkSCC\). For every
\(\xx\in\sinkSCC\), all edges in \(E\) that leave \(\sinkSCC\) have negative
weight; otherwise, they would be nonnegative edges leaving the sink SCC of $\Gpos$, a contradiction. 
Therefore,
\[
\sum_{\substack{\yy\in\sinkSCC:\\ \yxedge\in E}} \fyxedge
\ge
\sum_{\substack{\yy\in V:\\ \yxedge\in E}} \fyxedge
\ge0 \,.
\]

Define the subset of strictly positive edges:
\[
E_{>0}:=\{\yxedge\in E:\, \fyxedge>0\}.
\]
This set is nonempty and contains a cycle. To see why, choose any edge
$\yxedge$ of $\GposSCC$. Since $\GposSCC$ is strongly connected, every
edge of $\GposSCC$ lies on a cycle. Thus, the reverse edge
$\revedge$ exists by the assumption that all cycles are reversible, and
Assumption~4 then implies \(\fxyedge<0\). Then, Assumption~2 at the
vertex $\yy$ forces some edge $(k \leftarrow \yy)\in E_{>0}$. Iterating the same argument, finiteness eventually produces a cycle in $E_{>0}$.

Next, assume for contradiction that all cycles in $E_{>0}$ have strictly
negative total weight under the edge weight function
\begin{align}
\oyxedge := \ln\frac{\fyxedge}{-\fxyedge}.
\label{eq:omegadef-m4}
\end{align}
We say that a \emph{directed walk} in $E_{>0}$ is a sequence of vertices $(v_0, v_1,\dots, v_k)$, such that each $(v_{\ell+1}\leftarrow v_{\ell}) \in E_{>0}$. The $\edgeweight$-weight of a directed walk is the sum of the $\edgeweight$-weights of each edge, $\sum_{\ell=1}^k \edgeweight_{v_{\ell}\leftarrow v_{\ell-1}}$. The \emph{trivial} walk contains only a single vertex $(v_0)$; we set its $\edgeweight$-weight to 0. 

For each vertex $\xx$, let $\eta(\xx)$ be the maximum $\edgeweight$-weight of any
directed walk in $E_{>0}$ ending at $\xx$ (including the trivial walk). 
Because all cycles have negative weight, any walk containing a repeated vertex
can be improved by removing the intervening cycle. Thus, the maximum is finite
and is achieved by a simple path without cycles.

For any edge $\yxedge \in E_{>0}$, appending it to the optimal path to
$\xx$ yields a valid walk to $\yy$, implying
\begin{align}
\eta(\xx) + \oyxedge\le \eta(\yy)\,.
\label{eq:walkarg}
\end{align}
If all these inequalities were tight, we would have
$\oyxedge=\eta(\yy)-\eta(\xx)$ for every edge
$\yxedge\in E_{>0}$; summing around any cycle would make the
right-hand side telescope, implying that every cycle has total
$\edgeweight$-weight zero. This contradicts the assumption that every cycle
in $E_{>0}$ has strictly negative $\edgeweight$-weight. So inequality~\eqref{eq:walkarg}
must be strict for at least one $\yxedge \in E_{>0}$. 

Next, observe that for every $\yxedge \in E_{>0}$, 
\begin{align}
\fyxedge e^{\eta(\xx)}
&=-\fxyedge e^{\eta(\xx)+\oyxedge} \le -\fxyedge e^{\eta(\yy)}\,,
\label{eq:local-inequality-m4}
\end{align}
where we used Eqs.~\eqref{eq:omegadef-m4}-\eqref{eq:walkarg} and \(\fxyedge<0\). Just like the inequality~\eqref{eq:walkarg}, the inequality~\eqref{eq:local-inequality-m4} must be strict for at least one $\yxedge \in E_{>0}$. 
Summing over all edges in $E_{>0}$ and rearranging
gives
\begin{align}
\begin{aligned}
0&> \sum_{\yxedge \in E_{>0}} \fyxedge e^{\eta(\xx)}+ 
\sum_{\yxedge \in E_{>0}} \fxyedge e^{\eta(\yy)}\\
&\ge \sum_{\yxedge \in E} \fyxedge e^{\eta(\xx)}
\end{aligned}
\label{eq:appstrict0}
\end{align}
The second inequality follows since we enlarged the summation to include edges that are neither in $E_{> 0}$ nor the reverse of an edge in $E_{> 0}$. Since these edges belong to the complement of $E_{> 0}$, they have $\fyxedge \le 0$.

On the other hand, Assumption~2 gives
\[
0\le \sum_{\yy:\yxedge\in E} \fyxedge
\]
for every vertex $\xx$. Multiplying by $e^{\eta(\xx)}$ and summing over $\xx$ gives
a contradiction to Eq.~\eqref{eq:appstrict0}. This proves Claim~B,
that there is at least one cycle with nonnegative $\edgeweight$-weight.
\end{proof}

\section{CONVEXITY PROOF}
\label{app:convexity}

Here, we prove the convexity of the function $J_{\sigma,\gamma}$, which is used in the proof of Theorem~\ref{thm:multicyclic-main}.

\begin{theorem}
\label{thm:local-convexity}
Let \(\gamma\in(-\pi/2,\pi/2),\sigma >0\) with 
\(\sigma\ge |\sin\gamma|\). Define
\begin{align*}
J_{\sigma,\gamma}(\varphi,\xlog)
&:=
\ln\frac{e^\xlog\sin(\varphi-2\gamma)+\sigma}
{e^{-\xlog}\sin\varphi-\sigma}
\end{align*}
on the domain
\begin{align*}
\jDomSG
:=
\left\{(\varphi,\xlog):
0<\varphi<\pi,\quad
\xlog<\ln\frac{\sin\varphi}{\sigma}
\right\}.
\end{align*}
Then, the domain \(\jDomSG\) is convex, and
the function \(J_{\sigma,\gamma}\) is well-defined and jointly convex on
\(\jDomSG\).
\end{theorem}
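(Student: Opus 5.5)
We prove Theorem~\ref{thm:local-convexity} by splitting $J_{\sigma,\gamma}=A-B$, where
\[
A(\varphi,\xlog):=\ln\bigl(e^\xlog\sin(\varphi-2\gamma)+\sigma\bigr),\qquad
B(\varphi,\xlog):=\ln\bigl(e^{-\xlog}\sin\varphi-\sigma\bigr).
\]
We then show that $-B$ is convex and that $A$ is controlled by it. The argument runs in three steps.

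\emph{Convexity of the domain.} The condition $\xlog<\ln(\sin\varphi/\sigma)$ describes the strict subgraph of the function $\varphi\mapsto\ln\sin\varphi-\ln\sigma$. This function is concave on $(0,\pi)$, since $(\ln\sin)''=-\csc^2<0$. The strict subgraph of a concave function over an interval is convex, so $\jDomSG$ is convex.

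\emph{Well-definedness.} On $\jDomSG$ the denominator $e^{-\xlog}\sin\varphi-\sigma$ is positive by definition. For the numerator we need $e^\xlog\sin(\varphi-2\gamma)+\sigma>0$. If $\sin(\varphi-2\gamma)\ge0$ this is immediate. Otherwise we use $e^\xlog<\sin\varphi/\sigma$, which reduces the claim to
\[
\sigma^2+\sin\varphi\,\sin(\varphi-2\gamma)>0 .
\]
By the product-to-sum formula, $\sin\varphi\sin(\varphi-2\gamma)=\sin^2(\varphi-\gamma)-\sin^2\gamma$. Combined with $\sigma^2\ge\sin^2\gamma$, the expression is $\ge\sin^2(\varphi-\gamma)$. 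Equality would require $\sin(\varphi-\gamma)=0$, i.e.\ $\varphi=\gamma$; this is handled either by strictness of $e^\xlog<\sin\varphi/\sigma$ or by checking it directly.

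\emph{Joint convexity.} Write $s=e^{-\xlog}\sin\varphi$, so that $B=\ln(s-\sigma)$. First compute the Hessian of $g:=\ln s=-\xlog+\ln\sin\varphi$; it is $\mathrm{diag}(-\csc^2\varphi,0)$, which is negative semidefinite. Next, $B=h(g)$ with $h(t)=\ln(e^t-\sigma)$, which gives
\[
\nabla^2B=h'(g)\,\nabla^2 g+h''(g)\,\nabla g\nabla g^{\top},
\]
where $h'=e^t/(e^t-\sigma)>0$ and $h''=-\sigma e^t/(e^t-\sigma)^2<0$. Hence $-B$ is convex with an explicit positive-definite Hessian.

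Term $A$ is not concave in general, so we must show $\nabla^2 A-\nabla^2 B\succeq0$. It is cleanest to pass to the coordinates $z=e^{-\xlog+\ii\varphi}$, which is the edge ratio up to the rotation by $\gamma$. In these coordinates $e^{-\xlog}\sin\varphi=\im z$ and $e^\xlog\sin(\varphi-2\gamma)=\im(e^{-2\ii\gamma}/z^*)\cdot$(sign conventions). This suggests viewing $J$ as a log-ratio of two affine functions of $z$ and $1/z^*$. I would first try to compute the $2\times2$ Hessian of $J$ directly in $(\varphi,\xlog)$, using $s$ and $t:=e^\xlog\sin(\varphi-2\gamma)$. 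The goal is to show that the trace is positive and the determinant is nonnegative, reducing both to the inequality $\sigma^2+\sin\varphi\sin(\varphi-2\gamma)\ge0$ established above; this is the key role of the hypothesis $\sigma\ge|\sin\gamma|$.

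The determinant computation is the main obstacle. As a fallback, I would use the one-dimensional restriction test: convexity along every line $(\varphi_0+at,\xlog_0+bt)$. The second derivative along such a line can be organized as a quadratic form in $(a,b)$ whose coefficients are rational in $s$, $t$, $\sin$, and $\cos$. Checking its positivity can then use the identity $\sin\varphi\cos(\varphi-2\gamma)-\cos\varphi\sin(\varphi-2\gamma)=\sin2\gamma$ to collapse the cross terms. A useful sanity check is the special case $\gamma=0$, where $J$ reduces (up to a shift) to the unicyclic $\PhaseFunc_\psi$ and to $H_\psi$ in the continuum limit, whose convexity was verified explicitly.
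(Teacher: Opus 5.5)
Your domain-convexity and well-definedness steps coincide with the paper's. For $P:=\sin(\varphi-2\gamma)<0$, multiplying the strict domain inequality by $P$ gives $N>(\sigma^2+PQ)/\sigma\ge0$ with $PQ=\sin^2(\varphi-\gamma)-\sin^2\gamma$. So only $\sigma^2+PQ\ge0$ is needed, and no equality case has to be checked separately.

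\textbf{The gap: joint convexity is outlined, not proved.} This is the real content of the theorem. Showing $-B$ convex does not advance it. The relevant failure is that $A$ is not \emph{convex} (your ``not concave'' is beside the point): e.g.\ $\partial_\varphi^2A=-(e^{2\xi}+\sigma Pe^{\xi})/N^2<0$ when $P>0$. Hence the requirement $\nabla^2A\succeq\nabla^2B$ is just the full Hessian claim restated. Your fallback, the second derivative along lines (a quadratic form in $(a,b)$), is that same Hessian again.

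What is missing is the computation itself. Set $N=Pe^{\xi}+\sigma$, $M=Qe^{-\xi}-\sigma$, $Q=\sin\varphi$, and use $P''=-P$, $Q''=-Q$, $P^2+P'^2=Q^2+Q'^2=1$. The paper first finds that the $\sigma$-terms cancel in the trace:
\[
\operatorname{tr}H=\frac{e^{-2\xi}}{M^2}-\frac{e^{2\xi}}{N^2}.
\]
Its sign is that of $Ne^{-\xi}-Me^{\xi}=2\sigma\cosh\xi-2\sin\gamma\cos(\varphi-\gamma)\ge2(\sigma-|\sin\gamma|)\ge0$. It then establishes the sum-of-squares identity
\[
\begin{aligned}
N^3M^3\det H={}&(\sigma^2+\cos^2\gamma)\sin^2(\varphi-\gamma)\bigl[2\sigma\cosh\xi-2\sin\gamma\cos(\varphi-\gamma)\bigr]^2\\
&+(\sigma^2-\sin^2\gamma)\cos^2(\varphi-\gamma)\bigl[2\sigma\sinh\xi-2\cos\gamma\sin(\varphi-\gamma)\bigr]^2 ,
\end{aligned}
\]
which gives $\det H\ge0$. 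Your proposal never produces this identity or any substitute, so positive semidefiniteness of the Hessian remains unproved.

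\textbf{How the hypothesis actually enters.} You expect the trace and determinant both to ``reduce to'' the pointwise inequality $\sigma^2+PQ\ge0$, but that is not how $\sigma\ge|\sin\gamma|$ is used. The two coefficients above do sum to exactly $\sigma^2+PQ$. However, a nonnegative sum $c_1+c_2$ does not make $c_1X^2+c_2Y^2\ge0$. The argument needs the second coefficient, $\sigma^2-\sin^2\gamma$, to be nonnegative on its own. In the trace, the hypothesis is used instead as $2\sigma\cosh\xi\ge2\sigma\ge2|\sin\gamma|$.

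\textbf{The sanity check does not work.} First, it is misidentified: $K_\psi$ is $J_{\sin\psi,\psi}$ composed with the shift $\varphi=\theta+\psi$. That is the boundary case $\sigma=\sin\gamma$ with $\gamma=\psi$, not the case $\gamma=0$. Second, it would be circular, because the paper obtains the convexity of $K_\psi$ as a corollary of this very theorem. Only the one-variable continuum function $H_\psi$ is checked directly, and its convexity says nothing about joint convexity of the two-variable discrete $J$.
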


\begin{proof}
The boundary function
\(\varphi\mapsto \ln(\sin\varphi/\sigma)\) differs by a constant from
\(\ln\sin\varphi\) on \((0,\pi)\). Therefore it is strictly concave there, and
\(\jDomSG\) is the strict subgraph of a concave function, hence
convex.

To show that \(J_{\sigma,\gamma}\) is well-defined on
\(\jDomSG\), we introduce
\begin{align*}
P&:=\sin(\varphi-2\gamma),&
Q&:=\sin\varphi,\\
N&:=Pe^\xlog+\sigma,&
M&:=Qe^{-\xlog}-\sigma.
\end{align*}
Then \(J_{\sigma,\gamma}(\varphi,\xlog)=\ln N-\ln M\). Since
\(0<\varphi<\pi\), we have \(Q>0\). The domain condition \(e^\xlog<Q/\sigma\)
gives \(M=Qe^{-\xlog}-\sigma>0\). For \(N\), if \(P\ge0\) then
\(N\ge\sigma>0\). If \(P<0\), then multiplying the domain condition by \(P\)
reverses the inequality and gives
\begin{align*}
N&=Pe^\xlog+\sigma>\frac{PQ}{\sigma}+\sigma\\
&=\frac{\sigma^2-\sin^2\gamma+\sin^2(\varphi-\gamma)}{\sigma}\ge0\,.
\end{align*}
Thus \(N>0\) as well.

For the derivative calculation below, write \(J=J_{\sigma,\gamma}\) and let
primes denote \(\varphi\)-derivatives. Direct differentiation gives
\begin{align*}
\partial_{\xlog}^2 J
&=
\frac{\sigma Pe^\xlog}{N^2}
+
\frac{\sigma Qe^{-\xlog}}{M^2},\\
\partial_{\varphi}\partial_{\xlog}J
&=
\frac{\sigma P'e^\xlog}{N^2}
-
\frac{\sigma Q'e^{-\xlog}}{M^2},\\
\partial_{\varphi}^2 J
&=
\frac{-\sigma Pe^\xlog-e^{2\xlog}}{N^2}
+
\frac{-\sigma Qe^{-\xlog}+e^{-2\xlog}}{M^2}.
\end{align*}
Here we used
\begin{align*}
P''&=-P,&
Q''&=-Q,&
P^2+(P')^2&=Q^2+(Q')^2=1.
\end{align*}

We prove convexity by showing that the Hessian matrix
\[
H=
\begin{pmatrix}
\partial_{\varphi}^2J &
\partial_{\varphi}\partial_{\xlog}J\\
\partial_{\varphi}\partial_{\xlog}J &
\partial_{\xlog}^2J
\end{pmatrix}
\]
is positive semidefinite. Since \(H\) is a symmetric \(2\times2\) matrix, it is
enough to show that \(\operatorname{tr}H\ge0\) and \(\det H\ge0\). First,
\begin{align*}
\operatorname{tr} H
=
\frac{e^{-2\xlog}}{M^2}
-
\frac{e^{2\xlog}}{N^2}
=
\frac{1}{MN}
\left(Ne^{-\xlog}-Me^\xlog\right)
\left(
\frac{e^{-\xlog}}{M}+\frac{e^\xlog}{N}
\right),
\end{align*}
where the mixed \({\sigma Pe^\xlog}/{N^2}\) and
\({\sigma Qe^{-\xlog}}/{M^2}\) terms cancel. The last factor is strictly
positive and \(N,M>0\). Also,
\begin{align*}
Ne^{-\xlog}-Me^\xlog
&=
P-Q+\sigma(e^\xlog+e^{-\xlog})\\
&=
-2\sin\gamma\cos(\varphi-\gamma)+\sigma(e^\xlog+e^{-\xlog})\\
&\ge -2|\sin\gamma|+2\sigma\ge0,
\end{align*}
which proves \(\operatorname{tr}H\ge0\).

The determinant admits the sum-of-squares decomposition, 
\begin{align*}
\det H&=\frac{1}{N^3M^3}\Bigg[(\sigma^2+\cos^2\gamma)\sin^2(\varphi-\gamma)
\nonumber\\
&\quad\quad{}\times
\Bigl[\sigma(e^\xlog+e^{-\xlog})
-2\sin\gamma\cos(\varphi-\gamma)\Bigr]^2
\nonumber\\
&\quad
+(\sigma^2-\sin^2\gamma)\cos^2(\varphi-\gamma)
\nonumber\\
&\quad\quad{}\times
\Bigl[\sigma(e^\xlog-e^{-\xlog})
-2\cos\gamma\sin(\varphi-\gamma)\Bigr]^2\Bigg].
\end{align*}
Since \(\sigma^2-\sin^2\gamma\ge0\), both prefactors are
nonnegative, hence \(\det H\ge0\). Thus
\(H\) is positive semidefinite on \(\jDomSG\), so
\(J_{\sigma,\gamma}\) is jointly convex on \(\jDomSG\).
\end{proof}

As a corollary of the above result, we show the convexity of the function $\PhaseFunc_\psi$, which appears as Eq.~\eqref{eq:Gdef} in the unicyclic proof found in the \emph{End Matter}.
\begin{corollary}
\label{cor:G-convexity}
Given \(\psi\in(0,\pi/2)\), define the function
\begin{align*}
\PhaseFunc_\psi(\phase,\xlog)
:=
\ln\frac{e^{\xlog}\sin(\phase-\psi)+\sin\psi}
{e^{-\xlog}\sin(\phase+\psi)-\sin\psi} 
\end{align*}
on the domain
\begin{align*}
\Lambda_\psi
:=
\left\{(\phase,\xlog):
-\psi<\phase<\pi-\psi,\quad
\xlog<\ln\frac{\sin(\phase+\psi)}{\sin\psi}
\right\}.
\end{align*}
Then, \(\Lambda_\psi\) is convex and \(\PhaseFunc_\psi\) is well-defined and jointly
convex on \(\Lambda_\psi\).
\end{corollary}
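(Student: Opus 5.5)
The plan is to obtain Corollary~\ref{cor:G-convexity} directly from Theorem~\ref{thm:local-convexity} by an affine change of variables, without redoing any Hessian computation. Set $\gamma:=-\psi$ and $\sigma:=\sin\psi$. Since $\psi\in(0,\pi/2)$, we have $\gamma\in(-\pi/2,0)\subset(-\pi/2,\pi/2)$ and $\sigma>0$. Moreover $\sigma=\sin\psi=|\sin\gamma|$, so the hypothesis $\sigma\ge|\sin\gamma|$ of Theorem~\ref{thm:local-convexity} holds with equality. The theorem then says that $J_{\sin\psi,-\psi}$ is well-defined and jointly convex on the convex domain $\jDom_{\sin\psi,-\psi}$.

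Next I match the two functions via the shift $\varphi=\phase+\psi$, i.e.\ $\phase=\varphi-\psi$. With $\gamma=-\psi$,
\[
\sin(\varphi-2\gamma)=\sin(\phase+\psi+2\psi)=\sin(\phase+3\psi),
\]
which does not equal $\sin(\phase-\psi)$. So this pairing is wrong, and I should try $\gamma=+\psi$ with the same shift instead. Then $\sin(\varphi-2\gamma)=\sin(\phase+\psi-2\psi)=\sin(\phase-\psi)$ and $\sin\varphi=\sin(\phase+\psi)$. Keeping $\sigma=\sin\psi=|\sin\gamma|$, this gives exactly
\[
J_{\sin\psi,\psi}(\phase+\psi,\xlog)=\PhaseFunc_\psi(\phase,\xlog).
\]
The domains also correspond. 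The condition $0<\varphi<\pi$ becomes $-\psi<\phase<\pi-\psi$, and the condition $\xlog<\ln(\sin\varphi/\sigma)$ becomes $\xlog<\ln(\sin(\phase+\psi)/\sin\psi)$. Hence $\Lambda_\psi$ is the image of $\jDom_{\sin\psi,\psi}$ under the translation $(\varphi,\xlog)\mapsto(\varphi-\psi,\xlog)$.

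To finish, I use two standard facts: translations preserve convexity of sets, and precomposing a jointly convex function with an affine map keeps it jointly convex. These give convexity of $\Lambda_\psi$, well-definedness of $\PhaseFunc_\psi$ (its numerator and denominator stay positive), and joint convexity of $\PhaseFunc_\psi$. The only delicate step is picking the sign of $\gamma$ so that $\sin(\varphi-2\gamma)$ reproduces $\sin(\phase-\psi)$ and verifying the hypothesis $\sigma\ge|\sin\gamma|$. Both checks go through with $\gamma=\psi$, since $\psi\in(0,\pi/2)$ lies in the allowed range for $\gamma$.
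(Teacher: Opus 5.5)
Your final argument is correct and is essentially the paper's proof. The paper sets $\sigma=\sin\psi$, $\gamma=\psi$ and uses the translation $(\phase,\xlog)\mapsto(\phase+\psi,\xlog)$, which makes $\Lambda_\psi$ the preimage of $\jDom_{\sin\psi,\psi}$ and $\PhaseFunc_\psi=J_{\sin\psi,\psi}$ composed with this affine map, so Theorem~\ref{thm:local-convexity} gives all three claims (you should just delete the abandoned $\gamma=-\psi$ attempt from your first paragraph).
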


\begin{proof}
Define the affine function \(T_\psi(\phase,\xlog):=(\phase+\psi,\xlog)\). Using the definitions of
\(J_{\sigma,\gamma}\) and \(\jDomSG\) in
Theorem~\ref{thm:local-convexity}, we have
\[
\Lambda_\psi=T_\psi^{-1}(\jDom_{\sin\psi,\psi}),
\qquad
\PhaseFunc_\psi=J_{\sin\psi,\psi}\circ T_\psi .
\]
Since \(\sin\psi=|\sin\psi|\), Theorem~\ref{thm:local-convexity} applies with
\(\sigma=\sin\psi\) and \(\gamma=\psi\). Thus \(\jDom_{\sin\psi,\psi}\) is
convex, and \(J_{\sin\psi,\psi}\) is well-defined and jointly convex on this
domain. Since \(T_\psi\) is affine, \(\Lambda_\psi\) is convex, and
\(\PhaseFunc_\psi\) is well-defined and jointly convex on \(\Lambda_\psi\).
\end{proof}

\section{LOCALIZATION REGIONS}
\label{app:convex-hull}

Here, we derive the eigenvalue localization result, Theorem~\ref{thm:multicyclic-hull}. As in the main text, we use $\mathcal W := \prod_{c\in\CCCpos} \big\{1,\dots,\lceil n_c/2\rceil -1\big\}$ to denote the set of winding configurations. 
For each winding configuration
\(\bm\windA=(\windA_1,\dots,\windA_{\vert \CCCpos\vert})\in\mathcal W
\), we define 
\begin{align*}
\Omega(\bm\windA):=\Conv\Big(\{-2\maxrate,0\}\cup\bigcup_{c\in\CCCpos}\{\lUnif_{\Aff_c,n_c,\windA_c},\lUnifConj_{\Aff_c,n_c,\windA_c}\}\Big)\,,
\end{align*}
where $\lUnif_{\Aff,n,\windA}$ is the uniform $n$-cycle eigenvalue with winding number $\windA$, escape rate $\maxrate$, and affinity $\Aff$,
\begin{align}
\lUnif_{\Aff,n,\windA} := \maxrate\bigg(\cos\frac{2\pi \windA}{n}-1+\ii\tanh\frac{\Aff}{2n}\sin\frac{2\pi \windA}{n}\bigg)\,,
\label{eq:appunifeig}
\end{align}
and $\lUnifConj_{\Aff,n,\windA}$  is its complex conjugate.

We first prove the following lemma, which characterizes the upper envelope of a fixed configuration-resolved convex hull.

\newcommand\hullFunc{\psi}
\begin{lemma}
\label{lem:hull-envelope-main}
Assume \(\CCCpos\ne\varnothing\). Then, for any \(\bm\windA\in\mathcal W\),
\[
\Omega(\bm\windA)=\bigl\{x+\ii y\in\mathbb{C}:-2\maxrate\le x\le 0,\ |y|\le\hullFunc_{\bm \windA}(x)\bigr\},
\]
where \(\hullFunc_{\bm \windA}(-2\maxrate)=\hullFunc_{\bm \windA}(0)=0\) and %
\begin{align}
\hullFunc_{\bm \windA}(x):=
\min_{\mixc\in[0,1]}\,\max_{c\in\CCCpos}\,
\frac{\tanh(\Aff_c/2n_c)}
{\frac{\mixc}{-x}\tan\!\frac{\pi \windA_c}{n_c}
+\frac{1-\mixc}{2\maxrate+x}\cot\!\frac{\pi \windA_c}{n_c}}.
\label{eq:phi-multicyclic}
\end{align}
\end{lemma}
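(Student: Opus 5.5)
The plan is to describe $\Omega(\bm\windA)$ by its supporting half-planes and to show that $\hullFunc_{\bm\windA}$ is exactly its upper support function. Write $P_c:=\lUnif_{\Aff_c,n_c,\windA_c}=x_c+\ii y_c$, with $\theta_c:=2\pi\windA_c/n_c\in(0,\pi)$. Then $x_c=\maxrate(\cos\theta_c-1)\in(-2\maxrate,0)$ and $y_c=\maxrate\tanh(\Aff_c/2n_c)\sin\theta_c>0$, since $\Aff_c>0$ for $c\in\CCCpos$.

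The key identities are
\[
\frac{-x_c}{y_c}=\frac{\tan(\theta_c/2)}{\tanh(\Aff_c/2n_c)},\qquad
\frac{2\maxrate+x_c}{y_c}=\frac{\cot(\theta_c/2)}{\tanh(\Aff_c/2n_c)},
\]
which follow from $1-\cos\theta=\sin\theta\tan(\theta/2)$ and $1+\cos\theta=\sin\theta\cot(\theta/2)$. Hence for $x\in(-2\maxrate,0)$ and $\mixc\in[0,1]$, the line
\[
L_\mixc(x):\quad y\Big(\frac{\mixc}{-x}\tan\tfrac{\theta_c}{2}+\frac{1-\mixc}{2\maxrate+x}\cot\tfrac{\theta_c}{2}\Big)
\]
is governed by the family of affine functionals $\ell_{a,b}(x+\ii y)=a\,x\ldots$. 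It is cleaner to parametrize supporting lines through the two endpoints.

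The first step is to use the structure of the hull. Since all generating points lie in the strip $-2\maxrate\le x\le0$, are symmetric under conjugation, and include $-2\maxrate$ and $0$ on the real axis, $\Omega(\bm\windA)$ equals $\{x+\ii y:\ -2\maxrate\le x\le0,\ |y|\le h(x)\}$. Here $h$ is the upper concave envelope of the points $(-2\maxrate,0)$, $(0,0)$ and $(x_c,y_c)$, so that $h(-2\maxrate)=h(0)=0$.

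It then remains to show $h=\hullFunc_{\bm\windA}$ on the open interval. I will use the dual description of a concave envelope: $h(x)=\min$ over affine $\ell\ge$ all generating points of $\ell(x)$. Any such affine majorant with $\ell(-2\maxrate)\ge0$ and $\ell(0)\ge0$ can be reduced to one that is nonnegative at the endpoints. Given an evaluation point $x$, parametrize the candidate value $\ell(x)=Y$ together with the slope. Interpolation through $x$ shows that $\ell(x_c)\ge y_c$ for all $c$ is equivalent to the following condition. For $x_c>x$, the line from $(x,Y)$ must stay above $y_c$ while descending to at least $0$ at $0$. For $x_c<x$, the analogous condition holds toward $-2\maxrate$.

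Writing the majorant as the combination of the two one-sided conditions, I expect the normalization to produce exactly the weights $\mixc/(-x)$ and $(1-\mixc)/(2\maxrate+x)$. Concretely, an affine $\ell$ with $\ell(x)=Y$ can be written as
\[
\ell(z)=Y\big[1-\mixc\tfrac{z-x}{-x}-(1-\mixc)\tfrac{x-z}{2\maxrate+x}\big]+(\text{nonneg.\ end terms}),
\]
where $\mixc\in[0,1]$ parametrizes the slope among lines vanishing at the endpoints. Evaluating at $z=x_c$ and using the key identities turns $\ell(x_c)\ge y_c$ into
\[
Y\ge \frac{\tanh(\Aff_c/2n_c)}{\frac{\mixc}{-x}\tan\frac{\theta_c}{2}+\frac{1-\mixc}{2\maxrate+x}\cot\frac{\theta_c}{2}}\times(\text{factor }1).
\]
I still need to verify this reduction, which uses that the bracket equals $\mixc\,\frac{-x_c}{-x}+(1-\mixc)\frac{2\maxrate+x_c}{2\maxrate+x}$ at $z=x_c$ under a suitable reparametrization of $\mixc$. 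Taking the max over $c$ and the min over $\mixc$ then gives $\hullFunc_{\bm\windA}(x)=h(x)$.

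The main obstacle is getting this parametrization exactly right. The difficulty is showing that restricting to majorants vanishing at the two endpoints loses nothing, which holds because lowering the endpoint values only helps, and that the slope parameter maps bijectively onto $\mixc\in[0,1]$ with precisely these weights. If the direct parametrization becomes messy, the fallback is a minimax route. Write $h(x)$ as a max over convex combinations of points whose $x$-average is $x$, then apply LP duality to a linear program in the mixing weights. The dual variable associated with the two endpoints becomes $\mixc$, and strong duality yields the min–max form.
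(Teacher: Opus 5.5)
Your route is the paper's: reduce to the upper concave envelope of the generating points, write that envelope as the minimum of $\ell(x)$ over affine majorants $\ell$, and parametrize the majorants by $Y=\ell(x)$ and a slope parameter $\mixc\in[0,1]$. The bracket you display, $1-\mixc\frac{z-x}{-x}-(1-\mixc)\frac{x-z}{2\maxrate+x}$, simplifies identically to $\mixc\frac{z}{x}+(1-\mixc)\frac{2\maxrate+z}{2\maxrate+x}$. This is exactly the paper's parametrization, which the paper obtains by writing $h$ through its endpoint values and setting $\mixc=-\frac{x}{2\maxrate}\frac{h(-2\maxrate)}{h(x)}$.

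The problem is that you leave this central step unverified, and the rationale you give for it is wrong. The bracket does not describe ``lines vanishing at the endpoints.'' The only affine function vanishing at both $-2\maxrate$ and $0$ is identically zero, which is not a majorant. Likewise, ``lowering the endpoint values only helps'' is false, because $\ell(x_c)$ is a positive combination of the endpoint values and drops with them.

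What actually holds is the following. The bracket equals $2\maxrate\mixc/(-x)$ at $z=-2\maxrate$ and $2\maxrate(1-\mixc)/(2\maxrate+x)$ at $z=0$. Its slope in $z$ is affine in $\mixc$ with nonzero coefficient $2\maxrate/[x(2\maxrate+x)]$. Hence, for $Y>0$, $Y$ times the bracket runs bijectively (as $\mixc\in\RR$) over all affine $\ell$ with $\ell(x)=Y$. The constraints $\ell(-2\maxrate)\ge0$ and $\ell(0)\ge0$ are automatic for majorants, since $-2\maxrate$ and $0$ are generating points, and they are equivalent to $\mixc\in[0,1]$.

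Consequently there are no extra ``nonnegative end terms'' (they would change $\ell(x)$), and no reparametrization is needed. At $z=x_c$ the bracket is exactly $\mixc\frac{-x_c}{-x}+(1-\mixc)\frac{2\maxrate+x_c}{2\maxrate+x}>0$, and your half-angle identities turn $\ell(x_c)\ge y_c$ into the stated lower bound on $Y$.

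You also need $Y>0$. A majorant that is nonnegative at both endpoints and vanishes at an interior $x$ is identically zero, which contradicts $y_c>0$ for $c\in\CCCpos$. With these few lines added, the argument is complete and coincides with the paper's, and the LP-duality fallback is unnecessary.
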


\begin{proof}
Since the convex hull $\Omega(\bm\windA)$ is symmetric about the real axis, we only need to
compute its upper envelope and then reflect it. Moreover, the upper envelope of a convex hull is the pointwise minimum over all affine functions that bound the generating points of the hull: $\lUnif_{\Aff_c,n_c,\windA_c}$ and the real-valued endpoints $-2\maxrate,0$. (An affine function $h:\RR \to \RR$ is any function that can be expressed in the form $h(x)=c_1 x + c_2$.)

For each cycle \(c\in\CCCpos\), we may express the real and imaginary part of the generating point as 
\begin{align*}
x_c &:=\re \lUnif_{\Aff_c,n_c,\windA_c}
=\maxrate\cos\frac{2\pi \windA_c}{n_c}-\maxrate\\
y_c&:=\im \lUnif_{\Aff_c,n_c,\windA_c}= \maxrate \tanh\frac{\Aff_c}{2n_c}\sin\frac{2\pi \windA_c}{n_c}
\end{align*}
Then, we formally write the upper envelope \(\hullFunc_{\bm \windA}(x)\) as
\begin{align}
\hullFunc_{\bm \windA}(x)
\label{eq:pointwise-envelope}
&=
\min_{\text{affine\,}h:\RR\to\RR} h(x)\\
&\text{where}\,\, h(-2\maxrate)\ge0,\ h(0)\ge0, h(x_c)\ge y_c \ \forall c\in\CCCpos\nonumber
\end{align}
We solve this minimization at each fixed $x\in(-2\maxrate,0)$; the endpoints $x=-2\maxrate$ and $x=0$ follow by continuity. 
First, without loss of generality, we parameterize each affine $h$ as 
\begin{align}\label{eq:aff0}
h(x)=
-\frac{x}{2\maxrate}h(-2\maxrate)
+\frac{2\maxrate+x}{2\maxrate}h(0)\,.
\end{align}
The endpoint constraints and $\CCCpos \ne \varnothing$ imply that $h(x) > 0$, so the constraint $h(x_c)\ge y_c$ for all $c\in\CCCpos$ can be written as 
\begin{align}
1&\ge \max_{c \in \CCCpos}\frac{y_c}{-\frac{x_c}{2\maxrate}h(-2\maxrate)
+\frac{2\maxrate+x_c}{2\maxrate}h(0)}\,.
\label{eq:consapp0}
\end{align}
Next, for this fixed $x$, we index the possible affine functions $h$ in terms of the parameter
\begin{align}
\label{eq:zvx2}
\mixc:=-\frac{x}{2\maxrate}\frac{ h(-2\maxrate)}{h(x)}.
\end{align}
We may solve for $h(-2\maxrate)$ and $h(0)$, also using Eq.~\eqref{eq:aff0}, as
\begin{align}
\label{eq:zvx3}
h(-2\maxrate)=-\frac{2\maxrate \mixc h(x)}{ x}\,,\qquad h(0)=\frac{2\maxrate (1-\mixc) h(x)}{2\maxrate + x}.
\end{align} 
Therefore, we may rewrite Eq.~\eqref{eq:consapp0} as
\begin{align*}
h(x)&\ge \max_{c \in \CCCpos}\frac{y_c}{\frac{\mixc}{x}x_c+
\frac{1-\mixc}{2\maxrate+x}(2\maxrate+x_c)}\,.
\end{align*}

Next, we express the upper envelope by finding the pointwise minimum $h$, i.e., by minimizing over $\tau$ at each $x$. However, given $x\in(-2\maxrate,0)$ and Eq.~\eqref{eq:zvx3}, the constraint $h(-2\maxrate)\ge 0$ requires that $\mixc\ge 0$. Similarly, given Eq.~\eqref{eq:zvx3}, the constraint $ h(0)\ge 0$ requires that $\mixc\le 1$. Using these bounds, we have
\begin{align}
\hullFunc_{\bm \windA}(x)=
\min_{\mixc\in[0,1]}\,
\max_{c\in\CCCpos}
\frac{y_c}
{\frac{\mixc}{x}x_c+
\frac{1-\mixc}{2\maxrate+x}(2\maxrate+x_c)}\,.\label{eq:zccvxs3}
\end{align}

Combining and using half-angle
identities, we have
\[
\frac{-x_c}{y_c}=\frac{\tan({\pi \windA_c}/{n_c})}{\tanh(\Aff_c/2n_c)}\,,\qquad
\frac{2\maxrate+x_c}{y_c}=
\frac{\cot({\pi \windA_c}/{n_c})}{\tanh(\Aff_c/2n_c)}\,.
\]
Substituting into Eq.~\eqref{eq:zccvxs3} gives
Eq.~\eqref{eq:phi-multicyclic}.
\end{proof}

Next, we derive Theorem~\ref{thm:multicyclic-hull}. 

\begin{reptheoremnum}{thm:multicyclic-hull}
Every eigenvalue \(\lambda\)
belongs to $\bigcup_{\bm \windA\in\mathcal W}\Omega(\bm\windA)$. 
\end{reptheoremnum}
\begin{proof}
By Gershgorin's circle theorem, real eigenvalues lie in
\([-2\maxrate,0]\subset\Omega(\bm\windA)\) for every
\(\bm\windA\in\mathcal W\). 
Also, if \(\CCCpos=\varnothing\), then Theorem~\ref{thm:multicyclic-main} rules out
nonreal eigenvalues. 
Therefore, we assume \(\CCCpos\ne\varnothing\) and focus our attention on nonreal eigenvalues.

Consider a nonreal eigenvalue \(\lambda\) and associated eigenvector
\(\bm u\).
For every \(\mixc\in[0,1]\), Theorem~\ref{thm:multicyclic-main}
provides a certifying admissible cycle \(c\in\CCCpos\) such that 
$\windFunc_c(\bm u)\in \{1,\dots,\lceil n_c /2\rceil - 1 \}$ and
\begin{align}
\absImL\le
\frac{\tanh(\Aff_c/2n_c)}
{\frac{\mixc}{\minusReL}\tan\frac{\pi \windFunc_c(\bm u)}{n_c}+
\frac{1-\mixc}{\dualReL}\cot\frac{\pi \windFunc_c(\bm u)}{n_c}}\,.
\label{eq:hull-thm-bound0}
\end{align}
For cycles that certify Theorem~\ref{thm:multicyclic-main} for at least one
\(\mixc\), define \(\windA_c:=\windFunc_c(\bm u)\); for all other cycles, set
\(\windA_c=1\). 
This arbitrary assignment is harmless because, for each
\(\mixc\), the maximization below includes at least one certifying cycle. 

Each $\windA_c\in \{1,\dots,\lceil n_c /2\rceil - 1 \}$, therefore \(\bm\windA\in\mathcal W\).
Then we weaken the bound~\eqref{eq:hull-thm-bound0} by taking the maximum over all positive-affinity cycles:
\begin{align}
\absImL\le\max_{c\in\CCCpos}\,
\frac{\tanh(\Aff_c/2n_c)}
{\frac{\mixc}{\minusReL}\tan\!\frac{\pi \windA_c }{n_c}+
\frac{1-\mixc}{\dualReL}\cot\!\frac{\pi \windA_c}{n_c}}\,.
\label{eq:hull-thm-bound}
\end{align}
Since this inequality holds for every \(\mixc\in[0,1]\), we may minimize the right-hand
side over \(\mixc\) to obtain \(\absImL\le\hullFunc_{\bm \windA}(\reL)\). By
Lemma~\ref{lem:hull-envelope-main}, \(\lambda\in\Omega(\bm\windA)\).
Since \(\bm\windA\in\mathcal W\), this also implies that $\lambda\in\bigcup_{\bm\windA\in\mathcal W}\Omega(\bm\windA)$.
\end{proof}

We now prove Theorem~\ref{thm:topology-free-region}, our topology-free localization result. Here, we use \(F_n\) to denote the Farey sequence of order \(n\) and define $\mathfrak F^*:=\max_{c\in\CCC}\tanh({\Aff_c}/{2n_c})$ as the maximum normalized affinity ($\mathfrak F^*=0$ if no cycles exist).

\begin{reptheoremnum}{thm:topology-free-region}
Every eigenvalue \(\lambda\) belongs to
\begin{align*}
\Conv\Big\{
\maxrate\big[
\cos(2\pi q)-1+
\ii\,\mathfrak F^*\sin(2\pi q)
\big]
: q\in F_n
\Big\}.
\end{align*}
\end{reptheoremnum}

\begin{proof}
Consider the convex hull
\[
\Omega^\prime (\bm\windA):=
\Conv\Big(
 \{-2\maxrate,0\}\cup \bigcup_{c\in\CCCpos}\{z_{n_c,\windA_c},z_{n_c,\windA_c}^*\}
\Big)
\]
defined in terms of the vertices
\[
z_{m,\windA}:=
\maxrate\Big(\cos\frac{2\pi\windA}{m}-1
+\ii\,\mathfrak F^*\sin\frac{2\pi\windA}{m}\Big)\,.
\]

First we show that $\Omega^\prime (\bm\windA)\supseteq \Omega(\bm\windA)$. Since $\tanh(\Aff_c/2n_c)\le \mathfrak F^*$ by definition, the point $\lUnif_{\Aff_c,n_c,\windA_c}$ can be expressed as a convex mixture of $z_{n_c,\windA_c}$ and $\maxrate [\cos({2\pi\windA_c}/n_c)-1]\in \Omega^\prime(\bm \windA)$. Therefore, all generating points of $\Omega(\bm\windA)$ belong to $\Omega^\prime (\bm\windA)$, thus $\Omega^\prime (\bm\windA)\supseteq \Omega(\bm\windA)$. 

Theorem~\ref{thm:multicyclic-hull} then implies that
every eigenvalue \(\lambda\) belongs to $\bigcup_{\bm \windA\in\mathcal W}\Omega^\prime(\bm\windA)$. We weaken this bound in two ways. First, we take the union over 
every possible cycle size \(m\in\{3,\dots,n\}\). Second, we interchange the order of union and convex hull. This implies that all eigenvalues must belong to the region
\begin{align*}
\Conv\bigg(\{-2\maxrate,0\}\cup
\bigcup_{m\in \{3.. n\}}\;\bigcup_{\windA\in\{1..\lceil m/2\rceil- 1\}} \{z_{m,\windA},z_{m,\windA}^*\}\bigg) .
\end{align*}
We now show that this is equivalent to Eq.~\eqref{eq:weaker} expressed in terms of the Farey sequence $F_n$. The point $z_{m,\windA}$ depends only on the fraction \(q=\windA/m\); reducing \(q\) to lowest terms
gives a fraction in \(F_n\cap(0,1/2)\). Conversely, every reduced
\(q=a/b\in F_n\cap(0,1/2)\) is generated by some choice of cycle size and winding number, 
\((m,\windA)=(b,a)\). The full Farey sequence also includes the fractions
\(1-q\), which correspond to the conjugate points $z_{m,\windA}^*$, the endpoints \(q\in\{0,1\}\), which give \(0\), and
\(q=1/2\), which gives \(-2\maxrate\). 
\end{proof}

\clearpage 

\vfill
\end{document}